\newif\iffinal
\def\@doi#1{\href{https://doi.org/#1}
      {\ttfamily https://doi.org/#1}\egroup}}
\def\@doi#1{\ttfamily https://doi.org/#1\egroup}}
  \def\doi{\bgroup\catcode`\_=12\relax\@doi}}
\newcommand{\LongVersion}[1]{}
\definecolor{darkblue}{rgb}{0, 0, 0.7}
\definecolor{lightblue}{rgb}{.8,.95,1}
\newcommand{\assign}{\leftarrow}
\newtheorem{assumption}{Assumption}
\crefname{algocfline}{\text{line}}{\text{lines}} %
\crefname{algocf}{Algorithm}{Algorithms}
\Crefname{algocf}{Algorithm}{Algorithms}
\newcommand{\eg}{e.g.,\xspace}
\newcommand{\ie}{i.e.,\xspace}
\newcommand{\wrt}{w.r.t.\xspace}
\newenvironment{ienumerate}
	{\ifdefined\VersionLong\begin{enumerate}\else\begin{inparaenum}[\itshape i\upshape)]\fi}
	{\ifdefined\VersionLong\end{enumerate}\else\end{inparaenum}\fi}
\newenvironment{oneenumerate}
	{\ifdefined\VersionLong\begin{enumerate}\else\begin{inparaenum}[1)]\fi}
	{\ifdefined\VersionLong\end{enumerate}\else\end{inparaenum}\fi}
  \newcommand{\checkmacros}[1]{#1}
  \newcommand{\checkmacros}[1]{{\color{orange}#1}}
\newcommand{\card}[1]{\ensuremath{\left|#1\right|}}
\def\BState{\State\hskip-\ALG@thistlm}
\tikzset{
     redblock/.style={rectangle, fill=red!40, text width=2em,
                   text centered, rounded corners, minimum height=1em},
     state/.style={circle, text width=2em,
                   text centered, minimum height=2em},
     basic/.style  = {draw, text width=2cm,  font=\sffamily, rectangle},
     arrow/.style={-{Stealth[]}}
     }
\newcommand{\add}[1]{#1}
\newcommand{\textreplace}[2]{#2}
\newcommand{\remove}[1]{}
\newcommand{\textreplace}[2]{{\color{blue}\st{\mbox{#1}} #2}}
\newcommand{\remove}[1]{{\color{red}\st{\mbox{#1}}}}
\newcommand{\add}[1]{{\color{blue}#1}}
\newcommand{\STA}{\checkmacros{\ensuremath{\mathcal{S}}}}
\newcommand{\STArule}{\checkmacros{\ensuremath{r}}}
\newcommand{\STArulelength}{\checkmacros{\ensuremath{m}}}
\newcommand{\STAConfig}{\checkmacros{\ensuremath{\nConfig_{\STA}}}}
\newcommand{\STAConfigdefnpre}{\ensuremath{\big((\loc_{1},\clockval_{1}), \ldots, (\loc_{\networksize},\clockval_{\networksize})\big)}}
\newcommand{\STAresetclocks}[2]{\checkmacros{\ensuremath{\clocks_{#1,#2}}}}
\newcommand{\STAruledefn}{\checkmacros{\ensuremath{ \big \langle \loc_{\STArule,1}  \xrightarrow[]{\guard_{\STArule,1},\STAresetclocks{\STArule}{1},\tatranlabel_{\STArule,1}}  \loc'_{\STArule,1} , \cdots, \loc_{\STArule,\STArulelength}  \xrightarrow[]{\guard_{\STArule,\STArulelength},\STAresetclocks{\STArule}{\STArulelength},\tatranlabel_{\STArule,\STArulelength}}  \loc'_{\STArule,\STArulelength} \big \rangle }}}
\newcommand{\stablecomputationbasecasedefn}{\nConfig_{\corrTA,0}  \rightarrow^* \nConfig_{\corrTA}}
\newcommand{\intercomputationdefn}{\nConfig_{\corrTA,pre} \rightarrow^* \nConfig_{\corrTA,mid} \rightarrow^* \nConfig_{\corrTA}}
\newcommand{\STAcomputation}{\checkmacros{\ensuremath{\pi_{\STA}}}}
\newcommand{\stablecomputation}{\checkmacros{\ensuremath{\pi}}}
\newcommand{\stablecomputationbasecase}{\checkmacros{\ensuremath{\pi}}}
\newcommand{\numOfOcc}[2]{\checkmacros{\ensuremath{\#(#1,#2)}}}
\newcommand{\locmid}{\checkmacros{\ensuremath{p}}}
\newcommand{\Locmid}{\checkmacros{\ensuremath{P}}}
\newcommand{\locsink}{\checkmacros{\ensuremath{\loc_\bot}}}
\newcommand{\cupdot}{\mathbin{\dot{\cup}}}
\newcommand{\tranleftgadget}{\checkmacros{\ensuremath{(\loc_{\STArule,i}, \guard_{\STArule,i}, \STAresetclocks{\STArule}{i} \cup \{\clocksync\}, \dummytranlabel_{r,i}, \locmid_{\STArule, i-1}, \locmid_{\STArule, i})}}}
\newcommand{\tranrightgadget}{\checkmacros{\ensuremath{(\locmid_{\STArule, i}, \top, \emptyset, \tatranlabel_{\STArule,i}, \locmid_{\STArule,\STArulelength}, \loc'_{\STArule,i})}}}
\newcommand{\Tranleftgadget}[1]{\checkmacros{\ensuremath{T_{#1,1}}}}
\newcommand{\Tranrightgadget}[1]{\checkmacros{\ensuremath{T_{#1,2}}}}
\newcommand{\corrTA}{\checkmacros{\ensuremath{\TA_{\STA}}}}
\newcommand{\corrnetwork}[1]{\checkmacros{\ensuremath{\corrTA^{#1}}}}
\newcommand{\corrTAdefn}{\checkmacros{\ensuremath{(\Loc_{\corrTA}, \locinit, \clocks \cup \{\clocksync\}, \Sigma, \Transitions, \invariant_{\corrTA})}}}
\newcommand{\dummytranlabel}{\checkmacros{\ensuremath{\iota}}}
\newcommand{\transequence}{\checkmacros{\ensuremath{TS}}}
\newcommand{\transequenceset}{\checkmacros{\ensuremath{\textit{TS-set}}}}
\newcommand{\Trangadget}[1]{\checkmacros{\ensuremath{TG_{#1}}}}
\newcommand{\STARules}{\checkmacros{\ensuremath{\mathcal{R}}}}
\newcommand{\STAdefn}{\checkmacros{\ensuremath{(\Loc,\locinit,\clocks,\invariant,\STARules)}}}
\newcommand{\nConfigdefn}{\checkmacros{\ensuremath{\big((\loc_1,\clockval_1), \ldots (\loc_m,\clockval_m)\big)}}}
\newcommand{\corrConfigdefnpre}{\checkmacros{\ensuremath{\big((\loc_{1},\altclockval_{1}), \ldots (\loc_{\networksize},\altclockval_{\networksize})\big)}}}
\newcommand{\corrConfig}{\checkmacros{\ensuremath{\mathfrak{c}_{\corrTA}}}}
\newcommand{\sync}{\mathsf{sync}}
\newcommand{\STAnetwork}[1]{\checkmacros{\ensuremath{\mathcal{S}^{#1}}}}
\newcommand{\nConfig}{\checkmacros{\ensuremath{\mathfrak{c}}}} %
\newcommand{\nConfigInit}{\checkmacros{\ensuremath{\hat{\nConfig}}}} %
\newcommand{\initialvaln}{\ensuremath{\mathbf{0}}}
\newcommand{\tatransition} [3]{{#1}\xrightarrow{#2}{#3}}
\newcommand{\resetclocks}{\checkmacros{\ensuremath{\clocks_r}}}
\newcommand{\network}[1]{\ensuremath{\TA^{#1}}}
\newcommand{\networksize}{\checkmacros{\ensuremath{n}}}
\newcommand{\dtntransition}[3]{#1 \xRightarrow{#2}#3}
\newcommand{\dtnconfig}{\ensuremath{\big((\loc_1,\clockval_1), \ldots, (\loc_{\networksize},\clockval_{\networksize})\big)}}
\newcommand{\dtnconfigsucc}{\ensuremath{\big((\loc_1',\clockval_1'),\ldots,(\loc_{\networksize}',\clockval_{\networksize}')\big)}}
\newcommand{\dtnconfigdelay}{\ensuremath{\big((\loc_1,\clockval_1+\delta),\ldots,(\loc_n,\clockval_n+\delta)\big)}}
\newcommand{\loc}{\checkmacros{\ensuremath{q}}}
\newcommand{\locinit}{\ensuremath{\hat{\loc}}}
\newcommand{\Loc}{\checkmacros{\ensuremath{Q}}}
\newcommand{\initloc}{\ensuremath{\hat{\loc}}}
\newcommand{\post}[1]{\checkmacros{\mathit{post}}(#1)}
\newcommand{\smartpar}[1]{\smallskip \noindent \textbf{#1.}}
\newcommand{\gclock}{\checkmacros{\ensuremath{t}}\xspace}
\newcommand{\init}{\ensuremath{\mathsf{init}}\xspace}
\newcommand{\loclisten}{\ensuremath{\mathsf{listen}}\xspace}
\newcommand{\locpost}{\ensuremath{\mathsf{post}}\xspace}
\newcommand{\locreading}{\ensuremath{\mathsf{reading}}\xspace}
\newcommand{\locdone}{\ensuremath{\mathsf{done}}\xspace}
\newcommand{\locerror}{\ensuremath{\mathsf{error}}\xspace}
\newcommand{\guard}{\checkmacros{\ensuremath{g}}}
\newcommand{\locguard}{\checkmacros{\ensuremath{\gamma}}}
\newcommand{\UG}[1]{\ensuremath{\checkmacros{\mathsf{UG}}(#1)}} %
\newcommand{\ugconfig}[1]{\ensuremath{\checkmacros{\mathfrak{c}}}}
\newcommand{\TA}{\checkmacros{\ensuremath{\mathit{A}}}}
\newcommand{\TAdefn}{\checkmacros{\ensuremath{\mathit{(\Loc, \hat{\loc}, \clocks,\Sigma, \Transitions, \invariant)}}}}
\newcommand{\lcomputation}{\checkmacros{\ensuremath{\rho}}}
\newcommand{\gcomputation}{\checkmacros{\ensuremath{\pi}}}
\newcommand{\totaltime}{\checkmacros{\ensuremath{\delta}}}
\newcommand{\Lg}{\checkmacros{\mathcal{L}}}
\newcommand{\mQz}{\checkmacros{\ensuremath{\mathbb{R}_{\geq 0}}}}
\newcommand{\Nats}{\checkmacros{\ensuremath{\mathbb{N}}}}
\newcommand{\Ints}{\checkmacros{\ensuremath{\mathbb{Z}}}}
\newcommand{\Mbound}{\checkmacros{\ensuremath{M}}}
\newcommand{\MFbound}{\checkmacros{\ensuremath{{M^{\text{\scalebox{0.7}{$\nearrow$}}}}}}}
\newcommand{\MFboundp}{\checkmacros{\ensuremath{{{M'}^{\text{\scalebox{0.7}{$\nearrow$}}}}}}}
\newcommand{\invariant}{\checkmacros{\ensuremath{\mathit{Inv}}}}
\newcommand{\intervalPr}{\checkmacros{\ensuremath{\mathcal{I}}}}
\newcommand{\intervalPrdefn}{\checkmacros{\ensuremath{\{i_1, \ldots, i_k\}}}}
\newcommand{\gta}{\checkmacros{GTA}\xspace}
\newcommand{\ngta}{\checkmacros{NGTA}\xspace}
\newcommand{\ngtas}{\checkmacros{NGTAs}\xspace}
\newcommand{\intfloor}[1]{\lfloor#1\rfloor}
\newcommand{\DTN}{DTN\xspace}
\newcommand{\disjunctivetimednetwork}{Disjunctive Timed Network\xspace}
\newcommand{\lcomputationdefn}{(\loc_0,\clockval_0) \xrightarrow{\delta_0,\sigma_0} \ldots \xrightarrow{\delta_{l-1},\sigma_{l-1}} (\loc_l,\clockval_l)}
\newcommand{\gcomputationdefn}{\nConfig_0 \xrightarrow{\delta_0, (i_0, \sigma_0)} \cdots \xrightarrow{\delta_{l-1}, (i_{l-1}, \sigma_{l-1})} \nConfig_l}
\newcommand{\transition} [3]{{#1}\xrightarrow{#2}{#3}}
\def\calL{\mathcal{L}}
\newcommand{\traceof}[1]{\mathsf{trace}(#1)}
\def\trace{tt}
\def\reduce{\mathsf{reduce}}
\newcommand{\clock}{\checkmacros{\ensuremath{c}}}
\newcommand{\clockval}{\checkmacros{\ensuremath{v}}}
\newcommand{\altclockval}{\checkmacros{\ensuremath{u}}}
\newcommand{\slot}{\checkmacros{\ensuremath{s}}}
\newcommand{\fractional}[1]{\ensuremath{\checkmacros{\mathsf{frac}}(#1)}}
\newcommand{\integral}[1]{\checkmacros{\ensuremath{\lfloor #1 \rfloor}}}
\newcommand{\clocks}{\checkmacros{\ensuremath{\mathcal{C}}}}
\newcommand{\clocksync}{\checkmacros{\ensuremath{\clock_\sync}}}
\newcommand{\clockcons}{\ensuremath{\checkmacros{\Psi}(\clocks)}}
\newcommand{\STAclockcons}{\ensuremath{\checkmacros{\Psi}(\clocks \cup \{ \clocksync\})}}
\newcommand{\ccons}{\ensuremath{\checkmacros{\psi}}}
\newcommand{\intconstant}{\checkmacros{\ensuremath{d}}}
\newcommand{\tatranlabel}{\checkmacros{\ensuremath{\sigma}}}
\newcommand{\tatran}{\checkmacros{\ensuremath{\tau}}}
\newcommand{\tatrandefn}{\ensuremath{(\loc, \guard, \resetclocks,\tatranlabel,\loc')}}
\newcommand{\tadelay}{\ensuremath{\delta}}
\newcommand{\tapath}{\rho}
\newcommand{\tapathdefn}{{(\loc_0,\clockval_0)} \rightarrow \ldots  \rightarrow{(\loc_l,\clockval_l)}}
\def\sigmaerr{\ensuremath{\sigma_{\textsf{err}}}}
\newcommand{\dtnfix}[1]{\ensuremath{\checkmacros{\mathsf{fix}}(#1)}}
\newcommand{\regioninit}{\checkmacros{\ensuremath{\hat{\region}}}}
\newcommand{\regionstateinit}{\checkmacros{\ensuremath{(\locinit,\regioninit)}}}
\newcommand{\regionstatewithindex}[1]{\checkmacros{\ensuremath{(\loc_{#1},\region_{#1})}}}
\newcommand{\gtatrandefn}{\ensuremath{(\loc,\guard,\resetclocks,\tatranlabel,\locguard,\loc')}}
\newcommand{\gtatran}{\checkmacros{\ensuremath{\tau}}}
\newcommand{\Transitions}{\ensuremath{\checkmacros{\mathcal{T}}}}
\newcommand{\gtasetoftransitions}{\Transitions}
\newcommand{\send}{\mathsf{snd}}
\newcommand{\rcv}{\mathsf{rcv}}
\newcommand{\Synclabels}{\Lambda}
\newcommand{\synclabel}{\lambda}
\newcommand{\regionequalto}[1]{\simeq_{#1}}
\def\nextslot{\checkmacros{\ensuremath{\mathsf{next}}}}
\newcommand{\slotof}{\checkmacros{\ensuremath{\mathsf{slot}}}}
\def\post{\ensuremath{\checkmacros{\mathsf{Post}}}}
\def\posttime{\ensuremath{\post_{\geq 0}}}
\newcommand{\newslot}[2]{#1_{\slotof\mathrel{+}{#2}}}
\newcommand{\regionof}[2]{[#1]_{#2}}
\newcommand{\regionautomaton}{region automaton}
\newcommand{\RegionAutomaton}{Region Automaton}
\newcommand{\radelaylabel}{\checkmacros{\ensuremath{\mathsf{delay}}}}
\newcommand{\ra}[2]{\ensuremath{\checkmacros{\mathcal{R}}_{#1}(#2)}}
\newcommand{\rgtransition}[3]{\ensuremath{{#1}\xrightarrow{#2}{#3}}}
\newcommand{\genrgtransition}[2]{\ensuremath{{#1}\xrightarrow{}{#2}}}
\newcommand{\setofallzones}{\checkmacros{\ensuremath{\mathcal{Z}}}}
\newcommand{\zone}{\checkmacros{\ensuremath{z}}}
\newcommand{\DRA}[1]{\ensuremath{\mathcal{D}(#1)}}
\newcommand{\DRAdefn}{\checkmacros{\ensuremath{(\algtotalnodes,\regionstateinit, \Sigma, \algtotaledges)}}}
\newcommand{\dtnperiod}[1]{\ensuremath{\checkmacros{\mathsf{period}}_{#1}}}
\newcommand{\dtnregionautomaton}{DTN region automaton}
\newcommand{\proj}[2]{#1{\downarrow}_{#2}}
\newcommand{\eliminate}[2]{#1{\downarrow}_{-{#2}}}
\newcommand{\bound}[1]{{\Mbound(#1)}}
\newcommand{\boundmax}{{\Mbound_{\max}}}
\newcommand{\regionpath}{\rho_{\region}}
\newcommand{\regionpathdefn}{(\loc_0,\region_0) \xrightarrow{\sigma_0} \ldots \xrightarrow{\sigma_{l-1}} (\loc_l,\region_l)}
\newcommand{\dtnregionpath}{\rho_D}
\newcommand{\dtnregionpathdefn}{(\loc_0,\region_0) \xRightarrow{\sigma_0} \ldots \xRightarrow{\sigma_{l-1}}(\loc_l,\region_l)}
\newcommand{\lastzone}{z_l}
\newcommand{\lastregion}{\region_l}
\newcommand{\valuationsofdbm}[1]{\checkmacros{\ensuremath{[#1]}}}
\newcommand{\zonedbmdefn}{\checkmacros{\ensuremath{(Z_{\clock \clock'})_{\clock,\clock' \in \clocks_0}}}}
\newcommand{\zonedbmdefnprime}{\checkmacros{\ensuremath{(Z_{\clock \clock'}')_{\clock,\clock' \in \clocks_0}}}}
\newcommand{\zonedbmentry}{\checkmacros{\ensuremath{Z_{\clock \clock'}}}}
\newcommand{\zonedbmentryprime}{\checkmacros{\ensuremath{Z_{\clock \clock'}'}}}
\newcommand{\zonedbmentrydefn}{\checkmacros{\ensuremath{(\lessdot_{\clock \clock'},\intconstant_{\clock \clock'})}}}
\newcommand{\regiondbmforproof}{\checkmacros{\ensuremath{R}}}
\newcommand{\zonedbmforproof}{\checkmacros{\ensuremath{Z}}}
\newcommand{\zonedbmentryforproof}[2]{\checkmacros{\ensuremath{Z_{#1,#2}}}}
\newcommand{\regiondbmentryforproof}[2]{\checkmacros{\ensuremath{R_{#1,#2}}}}
\newcommand{\setRS}{\checkmacros{\ensuremath{W}}}
\newcommand{\setEdges}{\checkmacros{\ensuremath{E}}}
\newcommand{\algtotalnodes}{\checkmacros{\ensuremath{\setRS}}}
\newcommand{\algtotaledges}{\checkmacros{\ensuremath{\setEdges}}}
\newcommand{\regionstates}{region states}
\newcommand{\summaryautomaton}[1]{summary timed automaton}
\newcommand{\alglayernodes}[1]{\setRS_{#1}}
\newcommand{\alglayeredges}[1]{\setEdges_{#1}}
\newcommand{\indexofsecondrepeatinglayer}{\checkmacros{\ensuremath{l_0}}}
\newcommand{\indexoffirstrepeatinglayer}{\checkmacros{\ensuremath{i_0}}}
\def\indexl{l}
\newcommand{\algpathindex}[1]{\ensuremath{\checkmacros{\textsf{ind}}(#1)}}
\newcommand{\algtranindex}[1]{\ensuremath{\checkmacros{\textsf{ind}}(#1)}}
\newcommand{\dtndelay}{\checkmacros{\ensuremath{\checkmacros{\epsilon}}}}
\newcommand{\da}{\checkmacros{\ensuremath{\mathcal{S}(\TA)}}}
\def\raUGA{\ra{\MFbound}{\UG{\TA}}}
\newcommand{\styleComplexity}[1]{\textsf{#1}}
\newcommand{\EXPSPACE}{\styleComplexity{EXPSPACE}}
\NewDocumentCommand{\mynote}{+O{}+m}{%
  \begingroup
  \tcbset{%
    noteshift/.store in=\mynote@shift,
    noteshift=1.5cm
  }
   \begin{tcolorbox}[nobeforeafter,
    enhanced,
    sharp corners,
    toprule=1pt,
    bottomrule=1pt,
    leftrule=0pt,
    rightrule=0pt,
    colback=yellow!20,
    #1,
    left skip=\mynote@shift,
    right skip=\mynote@shift,
    overlay={\node[right] (mynotenode) at ([xshift=-\mynote@shift]frame.west) {\textbf{Note:}} ;},
    ]
    #2
  \end{tcolorbox}
  \endgroup
  }
\tikzstyle{pta}=[auto, ->, >=stealth']
\tikzstyle{every node}=[initial text=]
\tikzstyle{location}=[rectangle, rounded corners, minimum size=12pt, draw=black, fill=blue!10, inner sep=3.5pt]
\tikzstyle{methodBox}=[rectangle, minimum size=12pt, draw=black, fill=green!10, inner sep=3.5pt]
\tikzstyle{location-lossy}=[rectangle, minimum size=12pt, draw=black, fill=brown!10, inner sep=3.5pt]
\tikzstyle{invariant}=[draw=black, dotted, fill=yellow, inner sep=1pt, node distance=0] %
\tikzstyle{locguard}=[fill=cyan!30, inner sep=1pt, node distance=0] %
\tikzstyle{final}=[double, fill=blue!50]
\tikzstyle{urgent}=[fill=yellow, thick, dotted] %
\tikzstyle{private}=[fill=red,thick]
\newcounter{sarrow}
\pgfplotsset{compat=1.7}
  \newcommand{\todoinline}[2]{}
  \newcommand{\todoinline}[2]{{\color{#1}{\textbf{#2}}}}
\newcommand{\sj}[1]{\todoinline{cyan!50!black}{SJ: #1}}
\newcommand{\region}{\checkmacros{\ensuremath{r}}}
\newcommand{\defProblem}[3]
{%
\noindent%
  \fcolorbox{black}%
    {white}%
    {
	\begin{minipage}{.95\columnwidth}
		\textbf{#1:}\\
		\textsc{Input}: #2\\
		\textsc{Problem}: #3
	\end{minipage}
}

	\smallskip

}
\title{Parameterized Verification of Timed Networks\\ with Clock Invariants}
\titlerunning{Parameterized Verification of Timed Networks with Clock Invariants}
 \author{\'Etienne Andr\'e}{Universit\'e Sorbonne Paris Nord, LIPN, CNRS UMR 7030, F-93430 Villetaneuse, France \and Institut universitaire de France (IUF) \and \url{https://lipn.univ-paris13.fr/~andre/}}{}{https://orcid.org/0000-0001-8473-9555}{Partially supported by ANR BisoUS (ANR-22-CE48-0012)}
 \author{Swen Jacobs}{CISPA Helmholtz Center for Information Security, Germany \and \url{https://swenjacobs.github.io/}}{jacobs@cispa.de}{https://orcid.org/0000-0002-9051-4050}{}
 \author{Shyam Lal Karra}{CISPA Helmholtz Center for Information Security, Germany}{shyam.karra@cispa.de}{https://orcid.org/0009-0000-6859-4106}{}
 \author{Ocan Sankur}{Universit\'e de Rennes, CNRS, Inria, Rennes, France \and \url{https://people.irisa.fr/Ocan.Sankur/}}{ocan.sankur@irisa.fr}{https://orcid.org/0000-0001-8146-4429}{}
 \authorrunning{\'E. Andr\'e, S. Jacobs, S. Karra and O. Sankur} %
\keywords{Networks of Timed Automata, Parameterized Verification, Timed Petri Nets} %
\begin{document}

\maketitle              

\begin{abstract}	
	We consider parameterized verification problems for networks of timed automata (TAs) based on different communication primitives.
To this end, we first consider disjunctive timed networks (DTNs), i.e., networks of TAs that communicate via location guards that enable a transition only if there is another process in a certain location.
 We solve for the first time the case with unrestricted clock invariants, and establish that the parameterized model checking problem (PMCP) over finite local traces can be reduced to the corresponding model checking problem on a single~TA.
	Moreover, we prove that the PMCP for networks that communicate via lossy broadcast can be reduced to the PMCP for DTNs. 
	Finally, we show that for networks with $k$-wise synchronization, and therefore also for timed Petri nets, location reachability can be reduced to location reachability in DTNs. As a consequence we can answer positively the open problem from Abdulla et al.\ (2018) whether the universal safety problem for timed Petri nets with multiple clocks is decidable.

\end{abstract}

\section{Introduction}
Formally reasoning about concurrent systems is difficult, in particular if correctness guarantees should hold regardless of the number of interacting processes---a problem also known as \emph{parameterized verification}~\cite{AD16,DBLP:reference/mc/AbdullaST18}, since the number of processes is considered a parameter of the system.
Parameterized verification is undecidable in general~\cite{DBLP:journals/ipl/AptK86} and even in very restricted settings, \eg{} for safety properties of finite-state processes with rather weak communication primitives, such as token-passing or transition guards~\cite{DBLP:journals/ipl/Suzuki88,DBLP:conf/cade/EmersonK00}.
A long line of research has identified classes of systems and properties for which parameterized verification is decidable~\cite{DBLP:conf/cade/EmersonK00,DBLP:journals/jacm/GermanS92,EN03,DBLP:conf/lics/EsparzaFM99,DelzannoSZ10,DBLP:series/synthesis/2015Bloem,DBLP:journals/dc/AminofKRSV18,DBLP:journals/dc/EsparzaJRW21}, usually with finite-state processes.

Timed automata (TAs)~\cite{AD94} provide a computational model that combines real-time constraints with concurrency, and are therefore an expressive and widely used formalism to model real-time systems.
However, TAs are usually used to model a constant and \emph{fixed} number of system components.
When the number $n$ of components %
is very large or unknown,
considering their static combination becomes highly impractical, or even impossible if~$n$ is unbounded.
However, there are several lines of research studying networks with a parametric number of timed components (see \eg{} \cite{AJ03,BF13,ADRST16,ADFL19,abdulla_et_al:LIPIcs:2012:3874,AminofRZS15}).

One of these lines considers different variants of timed Petri nets (here, we consider the version defined in~\cite{AbdullaACMT18}), and networks of timed automata with $k$-wise synchronization~\cite{AJ03,DBLP:conf/lics/AbdullaDM04}, a closely related model.
Due to the expressiveness of the synchronization primitive, results for these models are often negative or limited to severely restricted cases. 
For example, in networks of timed automata with a controller process and multiple clocks per process, location reachability is undecidable (even in the absence of clock invariants that could force a process to leave a location)~\cite{DBLP:conf/lics/AbdullaDM04}.
The problem is decidable with a single clock per process and without clock invariants~\cite{AJ03}.
Decidability remains open for location reachability in networks \emph{without} a controller process and with multiple clocks (with or without clock invariants), which is equivalent to the universal safety problem for timed Petri nets that is mentioned as an open problem in~\cite{AbdullaACMT18}.

Another model that has received attention recently and is very important for the work we present is that of \emph{Disjunctive Timed Networks} (\DTN{}s)~\cite{SpalazziS20,AEJK24}.
It combines the expressive formalism of TAs with the relatively weak communication primitive of disjunctive guards~\cite{DBLP:conf/cade/EmersonK00}: transitions can be guarded with a location (called ``guard location''), and such a transition can only be taken by a TA in the network if another process is in that location upon firing.
\textreplace{For example, in the \DTN{} in \cref{fig:async-read-example}, the}{Consider the example in \cref{fig:async-read-example} which illustrates a process’s behavior within an asynchronous communication system, where tasks can be dynamically posted and data is read through shared input channels. The} transition from $\init$ to~$\locreading$ is guarded by
location~$\locpost$: 
for a process to take this transition, at least one other process must be in~$\locpost$.
\begin{wrapfigure}[13]{r}{0.48\textwidth}
\vspace{-.8em}
	\centering
	\scalebox{0.9}{
		\begin{tikzpicture}[font=\footnotesize, xscale=.7]
			\node [] (dummy) at (0,1){};
			\node (qinit) [location, initial text=] at (-1,1) {\init};
			\node (qlisten) [location] at (-1,-1) {$\loclisten$};
			\node (qpost) [location] at (-5.5,-1) {$\locpost$};
			\node (qerror) [location] at (1,-3) {$\locerror$};
			\node (qreading) [location] at (-3,-3) {$\locreading$};
			\node (qdone) [location] at (-7,-3) {$\locdone$};
			\node[invariant, below of=qdone, node distance=0.5cm] (doneinvariant) {$\clock \leq 1$};
			\node[invariant, below of=qreading, node distance=0.5cm] (readinginvariant) {$\clock \leq 2$};
			\path[- stealth]
				(qlisten) edge [] node [above] {$\sigma_1$} node [below]{$\clock = 4, \clock \assign 0$} (qpost);
			\path[-]
			(qdone) edge[] node [above] {} (-7,1);
			\path[->]
			(dummy) edge[]  (qinit);
			\path[- stealth]
			(-7,1) edge[] node [above] {$\sigma_6$} (qinit);
			\path[- stealth]
			(qpost) edge[] node [above,sloped] {$\sigma_2$} (qinit);
			\path[- stealth]
			(qreading) edge[] node [above,sloped] {$\sigma_4$} node [below,pos=0.55,rotate=-50] {$\clock \geq 2, \clock \assign 0$} (qpost);
			\path[- stealth]
			(qreading) edge[] node [above,sloped] {$\sigma_5$} node [below,sloped] {$\clock \geq 1$} (qdone);
			\path[- stealth]
			(qreading) edge[] node [below,sloped] {$\clock < 1$} node[above,pos=0.25]{$\sigmaerr$} 
				node[locguard, above, pos=0.7]{\locdone}
				(qerror);
			
			\path[-]
			(qinit) edge[] node [above] {} (1,-1);
			\path[- stealth]
			(1,-1) edge[] node [above,sloped, pos=0.65] {$\sigma_3$ } node [above,rotate=35,locguard, pos=0.35] {\locpost} node[below,rotate=35]{$\clock \assign 0$} (qreading);
			\path[- stealth]
			(qinit) edge[] node [above,sloped] {$\sigma_0$} node [below,sloped] {$\clock \geq 1$} (qlisten);
			
		\end{tikzpicture}

	}
	\caption{Asynchronous data read example}
	\label{fig:async-read-example}
\end{wrapfigure}%
Parameterized model checking of \DTN{}s was first studied in~\cite{SpalazziS20}, who considered local trace properties in the temporal logic MITL, and showed that the problem can be solved with a cutoff, \ie{} a number of processes that is sufficient to determine satisfaction in networks of any size.
However, their result is restricted to the case when guard locations do not have clock invariants.
This restriction is crucial to their proof, and they furthermore showed that statically computable cutoffs do not exist for the case when TAs can have clock invariants on all locations.

However, the non-existence of cutoffs does not imply that the problem is undecidable.
In~\cite{AEJK24}, \textreplace{we}{the authors} improved the aforementioned results by avoiding the construction of a cutoff system and instead using a modified zone graph algorithm.
Moreover, \textreplace{we}{they} gave sufficient conditions on the TAs to make the problem decidable even in the presence of clock invariants on guard locations.
However, these conditions are semantic, and it is not obvious how to build models that satisfy them;
for instance, our motivating example in \cref{fig:async-read-example} does not satisfy them. 
The decidability of the case without restrictions on clock invariants thus remained open.

In this paper, we show that \textreplace{location reachability and other parameterized verification problems}{properties of finite local traces (and therefore also location reachability)} are decidable for DTNs without restrictions on clock invariants.
Moreover, we show that \textreplace{location reachability in}{checking local trace properties of} systems with lossy broadcast communication~\cite{DelzannoSZ10,ADFL19} or with $k$-wise synchronization can be reduced to \textreplace{location reachability in}{checking local trace properties of} DTNs.
Note that our simulation of these systems by DTNs crucially relies on the power of clock invariants, and would not be possible in the previous restricted variants of DTNs.

To see why checking local trace properties of DTNs with invariants is technically difficult, consider first the easy case from~\cite{SpalazziS20}, where guard locations cannot have invariants.
In this case, it is enough to determine for every guard location $\loc$ the minimal time $\delta$ at which it can be reached: since a process cannot be forced to leave, $\loc$ can be occupied at any time in $[\delta,\infty)$, and transitions guarded by $\loc$ can be assumed to be enabled at any time later than $\delta$.
This is already the underlying insight of~\cite{SpalazziS20}, and in~\cite{AEJK24} it is embedded into a technique that replaces location guards with clock guards $\gclock \geq \delta$, where $\gclock$ is a clock denoting the time elapsed since the beginning.
In contrast, if guard locations can have invariants, a process in $\loc$ can be forced to leave after some time.
Therefore, the set of global times where $\loc$ can be occupied is an arbitrary set of timepoints, and it is not obvious how it can be finitely represented.

\smartpar{Detailed Example}
\add{We introduce an example that motivates the importance of clock invariants in modeling concurrent timed systems, and will be used as a running example.}
\textreplace{We consider an example}{It is} inspired by the verification of asynchronous programs~\cite{Ganty_2012}. 
In this setting, processes can be ``posted'' at runtime to solve a task, and will terminate upon completing the task.
Our example in \cref{fig:async-read-example} features one clock~$\clock$ per process; symbols~$\sigma_i$ and~$\sigmaerr$ are transition labels.
An unbounded number of processes start in the initial location~$\init$.
In the inner loop, a process can move to location \loclisten in order to see whether an input channel carries data.
Once it determines that this is the case, it moves to \locpost, thereby giving the command to post a process that reads the data, and then can return to~$\init$.
In the outer loop, if another process gives the command to read data, \ie{} is in \locpost, then another process can accept that command and move to \locreading.
After some time, the process will either determine that all the data has been read and move to \locdone, or it will timeout and move to \locpost to ask another process to carry on reading.
However, this scheme may run into an error if there are processes in \locdone and \locreading at the same time, modeled by a transition from \locreading to \locerror that can only be taken if \locdone is occupied.

While this example is relatively simple, checking reachability of location \locerror (in a network with arbitrarily many processes) is not supported by any existing technique. 
This is because clock invariants on guard locations are not supported at all by~\cite{SpalazziS20}, and are supported only in special cases (that do not include this example) by~\cite{AEJK24}.
Also other results that could simulate \DTN{}s do not support clock invariants at all \cite{AJ03,ADRST16}.

Moreover, note that clock invariants may be essential for correctness of such systems:
in a system $A^3$, consisting of three copies of the automaton in \cref{fig:async-read-example}, location \locerror is reachable; a computation that reaches \locerror is given in \cref{fig:example-run-gta-client-server}.
However, if we add a clock invariant $\clock \leq 0$ to location \locpost (forcing processes that enter \locpost to immediately leave it again), it becomes unreachable\footnote{To see this, consider the intervals of global time in which the different locations can be occupied: first observe that \locpost in the inner loop can now only be occupied in intervals $[4k, 4k]$ (for $k \in \Nats$), and therefore processes can only move into \locreading at these times. From there, they might move into \locpost after two time units, so overall \locpost can be occupied in intervals $[2k,2k]$ for $k \geq 2$, and \locreading in any interval $[2k,2(k+1)]$. Since clock $\clock$ is always reset upon entering \locreading, \locdone can only be occupied in intervals $[2k+1,2k+1]$ for $k \geq 2$, whereas for a process in \locreading the clock constraint on the transition to \locerror can only be satisfied in intervals $[2k,2k+1)$. Therefore, \locerror is not reachable with the additional clock invariant on \locpost.}.

\begin{figure*}%
    \centering
    \scalebox{0.9}{
\newcommand{\ft}{\scriptstyle}

\begin{tikzpicture}[font=\footnotesize]

    \node (p0-0) [location] at (0,0) {\init};

   

    \node (p0-1) [location] at (2,0) {\loclisten};
    \node (p0-2) [location] at (4,0) {\locpost};
    \node (p0-3) [location] at (12,0) {\locpost};
    \node (p0-0-c) [] at (0, -0.35) {$\ft \clock = 0$};
    \node (p0-1-c) [] at (2, -0.35) {$\ft \clock = 1$};
    \node (p0-2-c) [] at (4, -0.35) {$\ft \clock = 0$};
    \node (p0-3-c) [] at (12, -0.35) {$\ft \clock = 1$};

    \node (p1-0-c) [] at (0, -1.35) {$\ft \clock = 0$};
    \node (p1-1-c) [] at (4, -1.35) {$\ft \clock = 4$};
    \node (p1-2-c) [] at (6, -1.35) {$\ft \clock = 0$};
    \node (p1-3-c) [] at (8, -1.35) {$\ft \clock = 1$};
    \node (p1-4-c) [] at (12, -1.35) {$\ft \clock = 1$};

     \node (p1-0) [location] at (0,-1) {\init};
     \node (p1-1) [location] at (4,-1) {\init};
     \node (p1-2) [location] at (6,-1) {\locreading};
      \node (p1-4) [location] at (8,-1) {\locdone};
      \node (p1-5) [location] at (12,-1) {\locdone};
      
      \node (p2-0-c) [] at (0, -2.35) {$\ft \clock = 0$};
      \node (p2-1-c) [] at (8, -2.35) {$\ft \clock = 5$};
      \node (p2-2-c) [] at (10, -2.35) {$\ft \clock = 0$};
      \node (p2-3-c) [] at (12, -2.35) {$\ft \clock = 0$};
     \node (p2-0) [location] at (0,-2) {\init};
     \node (p2-1) [location] at (8,-2) {\init};
     \node (p2-2) [location] at (10,-2) {\locreading};
      \node (p2-4) [location] at (12,-2) {\locerror};
    \path[->]
        (p0-0) edge []
            node [above] {$\ft 1,\sigma_0$} (p0-1)
    ;
     \path[->]
        (p0-1) edge [] node [above] {$\ft 3, \sigma_1$} (p0-2)
    ;

    \path[->,dotted]
        (p0-2) edge [] node [above] {$\ft 1$} (p0-3)
    ;
    \path[->,dotted]
        (p1-0) edge [] node [above] {$\ft 4$} (p1-1)
    ;
    
    \path[->]
        (p1-1) edge [] node [above] {$\ft \sigma_3$} node [below,locguard] {\locpost} (p1-2)
    ;
    
    \path[->]
       (p1-2) edge []
           node [below] {}  
           node [above] {$\ft 1, \sigma_5$} (p1-4)
    ;
    
    \path[->,dotted]
       (p1-4) edge []
           node [below] {}  
           node [above] {$\ft 0$} (p1-5)
   ;
\path[->,dotted]
        (p2-0) edge [] node [above] {$\ft 5$} (p2-1)
    ;



    \path[->]
        (p2-1) edge [] node [above] {$\ft \sigma_3$} node [below,locguard]{\locpost} (p2-2)
    ;


   

    




    \path[->]
        (p2-2) edge [] node [above] {$\ft \sigma_{\mathsf{err}}$}  node [below,locguard] {\locdone}(p2-4)
    ;

















\end{tikzpicture}}
    \caption{Example of a computation in $\network{3}$ for $\TA$ as depicted by \cref{fig:async-read-example}.}
    \label[figure]{fig:example-run-gta-client-server}
\end{figure*}
\smartpar{Contributions}
We present new decidability results for parameterized verification problems with respect to three different system models as outlined below.
\begin{itemize}
    \item \smartpar{DTNs (\cref{sec:DTNRA})}
    For \DTN{}s, we show that, surprisingly, and despite the absence of cutoffs~\cite{SpalazziS20}, the parameterized model checking problem for \textreplace{local safety properties}{finite local traces} is decidable in the general case, without any restriction on clock invariants.
    Our technique circumvents the non-existence of cutoffs by constructing a modified region automaton, a well-known data structure in timed automata literature, such that communication via disjunctive guards is directly taken into account. In particular, we focus on analyzing the traces of a single (or a finite number of) process(es) in a network of arbitrary size. 
		
    While our algorithm uses some techniques from~\cite{AEJK24}, there are fundamental differences:
    in particular, we introduce a novel abstraction of global time into a finite number of ``slots'', which are elementary intervals with integer bounds, designed to capture the information necessary for disjunctive guard communication. 
    When a transition with a location guard is to be taken at a given slot, we check whether the given guard location appears in the same slot.
    It turns out that such an abstract treatment of the global time is sound: we prove that in this case, one can find a computation that enables the location guard at \emph{any} real time instant inside the given slot. Thus, the infinite set of points at which a location guard is enabled is a computable union of intervals; and we rely on this property to build a finite-state abstraction to solve our problem.
    
    \item  \smartpar{Lossy Broadcast Timed Networks (\cref{section:ltba})}
    We investigate the relation between communication with disjunctive guards and with \emph{lossy broadcast}~\cite{DelzannoSZ10,ADFL19}.
    For finite-state processes, it is known that lossy broadcast can simulate disjunctive guards wrt.\ reachability~\cite{DBLP:journals/corr/abs-2109-08315}, but the other direction is unknown.\footnote{\cite{DBLP:journals/corr/abs-2109-08315} considers IO nets which are equivalent to systems with disjunctive guards. It gives a negative result for a specific simulation relation, but does not prove that simulation is impossible in general.}
    As our second contribution, we establish the decidability of the parameterized model checking problem for 
    local trace properties in timed lossy broadcast networks.
    This result is obtained by proving that communication by lossy broadcast is equivalent to communication by disjunctive guards for networks of timed automata \emph{with} clock invariants for local trace properties. 

    \item \smartpar{Synchronizing Timed Networks and Timed Petri Nets (\cref{sec:stn})}
    Finally, we show that the location reachability problem for controllerless multi-clock timed networks with $k$-wise synchronization can be reduced to the location reachability problem for \DTN{}s with clock invariants.
		
		As a consequence, it follows that the universal safety problem for timed Petri nets with multiple clocks, stated as an open problem in~\cite{AbdullaACMT18}, is also decidable.
\end{itemize}
The proofs of the last two points above involve constructions that require clock invariants on guard locations.
This is why clock invariants are crucial in our formalism, which is a nontrivial extension of\remove{our previous work}~\cite{AEJK24}.
Note that in both cases we get decidability even for variants of the respective system models with clock invariants, which was not considered in~\cite{ADFL19} or~\cite{AbdullaACMT18}. 

For all of the above systems, location reachability can be decided in \EXPSPACE{}.

Due to space constraints, formal proofs of some of our results are deferred to the appendix.

\section{Preliminaries}\label{section:preliminaries}

\LongVersion{\subsection{System Model}}\label{sec:model}
Let $\clocks$ be a set of clock variables, also called \emph{clocks}. A \emph{clock valuation} is a mapping $\clockval : \clocks \rightarrow \mQz$. 
For a valuation~$\clockval$ and a clock~$\clock$, we denote the fractional and integral parts of $\clockval(\clock)$ by $\fractional{\clockval(\clock)}$ and $\integral{\clockval(\clock)}$ respectively.
We denote by~$\mathbf{0}$ the clock valuation that assigns $0$ to every clock, and by $\clockval+\delta$ for $\delta \in \mQz$ the valuation s.t.\ $(\clockval+\delta)(\clock)=\clockval(\clock)+\delta$ for all $\clock \in \clocks$.
Given a subset $\resetclocks \subseteq \clocks$ of clocks and a valuation~$\clockval$, $\clockval[\resetclocks \leftarrow 0]$ denotes the valuation~$\clockval'$ such that $\clockval'(\clock) = 0$ if $\clock \in \resetclocks$ and $\clockval'(\clock) = \clockval(\clock)$ otherwise.
We call \emph{clock constraints} $\clockcons$ the terms of the following grammar:
$\ccons \Coloneqq \top 
\mid \ccons \wedge \ccons 
\mid \clock \sim \intconstant 
\mid \clock \sim \clock' + \intconstant 
$
with $\intconstant \in \Nats$, $\clock,\clock' \in \clocks$ and ${\sim} \in \{<, \leq, =, \geq, > \}.$

A clock valuation $\clockval$ \emph{satisfies} a clock constraint~$\ccons$,
denoted by $\clockval \models \ccons$, if $\ccons$ evaluates to~$\top$ after replacing every $\clock \in \clocks$ with its value~$\clockval(\clock)$.

\begin{definition}
    A \emph{timed automaton (TA)} $\TA$ is a tuple $\TAdefn$
    where $\Loc$ is a finite set of locations with \emph{initial location} $\locinit$, $\clocks$ is a finite set of clocks,
    $\Sigma$ is a finite alphabet that contains a subset $\Sigma^-$ of special symbols, including a distinguished symbol $\epsilon \in \Sigma^-$,
    $\Transitions \subseteq \Loc \times \clockcons \times 2^\clocks \times  \Sigma \times \Loc$ is a transition relation, and
    $\invariant: \Loc \rightarrow \clockcons$ assigns to every location $\loc$ a \emph{clock invariant} $\invariant(\loc)$.
\end{definition}

TAs were introduced in~\cite{AD94} and clock invariants, also simply called invariants, in~\cite{HNSY-1994}.
We assume w.l.o.g.\ that invariants only contain upper bounds on clocks (as lower bounds can be moved into the guards of incoming transitions).
$\Sigma^-$ will be used to label silent transitions and unless explicitly specified otherwise (in \cref{section:ltba,sec:stn}), we assume that $\Sigma^-=\{\epsilon\}$.

\begin{example}
	If we ignore the location guards
        $\locpost$ (from~$\init$ to~$\locreading$)
        and $\locdone$ (from~$\locreading$ to~$\locerror$),
	then the automaton in \cref{fig:async-read-example} is a classical~TA with one clock~$\clock$.
	For example, the invariant of $\locdone$ is $\clock \leq 1$ and the transition from $\init$ to~$\locreading$ resets clock $\clock$.
\end{example}

A \emph{configuration} of a TA $\TA$ is a pair $(\loc,\clockval)$, where $\loc \in \Loc$ and $\clockval: \clocks \rightarrow \mQz$ is a clock valuation.
A \emph{delay transition} is of the form $\tatransition{(\loc,\clockval) }{\delta}{(\loc,\clockval+\delta)}$  for some delay {$\delta \in \mQz $}
such that $\clockval+\delta \models \invariant(\loc)$.
A \emph{discrete transition} is of the form  $\tatransition{(\loc,\clockval) }{\tatranlabel}{(\loc',\clockval')}$, where $\tatran=\tatrandefn \in \Transitions$, $\clockval \models \guard$,
$\clockval' = \clockval[\resetclocks \leftarrow 0]$
and $\clockval' \models \invariant(\loc')$.
A transition $\tatransition{(\loc,\clockval) }{\epsilon}{(\loc',\clockval')}$ is called an \emph{$\epsilon$-transition}.
We write $(\loc,\clockval)\xrightarrow{\delta,\tatranlabel}(\loc',\clockval')$ if there is a delay transition $\tatransition{(\loc,\clockval)}{\delta}{(\loc,\clockval+\delta)}$ followed by a discrete transition $\tatransition{(\loc,\clockval+\delta)}{\tatranlabel}{(\loc',\clockval')}$.

A \emph{timed path} of~$\TA$ is a finite sequence of transitions
 $\lcomputation = \lcomputationdefn$.
For a  timed path $\lcomputation = \lcomputationdefn$, 
let $\totaltime(\lcomputation) = \sum_{0 \leq i < l} \delta_i$ be the \emph{total time delay} of~$\lcomputation$. 
The \emph{length} of $\lcomputation$ is $2l$.
A configuration $(\loc,\clockval)$ has a \emph{timelock} if there is $b \in \mQz$ s.t.\ $\delta(\lcomputation) \leq b$ for every timed path $\lcomputation$
starting in~$(\loc,\clockval)$.
We write $(\loc_0,\clockval_0) \rightarrow^* (\loc_l,\clockval_l)$ if there is a  timed path $\lcomputation = \lcomputationdefn$;
$\lcomputation$ is a \emph{computation} if $\loc_0=\locinit$ and $\clockval_0=\mathbf{0}$.

The \emph{trace} of the timed path $\lcomputation$ is the sequence of pairs of delays and labels obtained by removing transitions with a label from $\Sigma^-$ and
adding the delays of these to the following transition (see \cref{appendix:trace}).
The \emph{language} of~$\TA$, denoted $\Lg(\TA)$, is the set of  traces of all of its computations.

We now recall guarded timed automata as an extension of timed automata with location guards, that will allow, in a network, to test whether some other process is in a given location in order to pass the guard.

\begin{definition}[Guarded Timed Automaton (\gta)]\label{definition:gTA}
A \emph{\gta} $\TA$ is a tuple $(\Loc,\locinit,\clocks,\Sigma, \Transitions,\invariant)$,
where $\Loc$ is a finite set of locations with \emph{initial location} $\locinit$,
    $\clocks$ is a finite set of clocks,
    $\Sigma$ is a finite alphabet that contains a subset $\Sigma^-$ of special symbols, including a distinguished symbol $\epsilon \in \Sigma^-$,
    $\Transitions \subseteq \Loc \times \clockcons \times 2^\clocks \times \Sigma \times  \left( \Loc \cup \{\top\} \right) \times \Loc$ is a transition relation, and
    $\invariant: \Loc \rightarrow \clockcons$ assigns to every location $\loc$ an \emph{invariant} $\invariant(\loc)$.
\end{definition}

Intuitively, a transition $\gtatran = \gtatrandefn \in \gtasetoftransitions$ takes the automaton from location $\loc$ to~$\loc'$;
$\gtatran$~can only be taken if \emph{clock guard}~$\guard$ and \emph{location guard}~$\locguard$ are both satisfied, and it resets all clocks in~$\resetclocks$.
Note that satisfaction of location guards is only meaningful in a \emph{network} of~TAs (defined below).
Intuitively, a location guard~$\locguard$ is satisfied if it is~$\top$ or if another automaton in the network currently occupies location~$\locguard$.
We say that $\locguard$ is \emph{trivial} if $\locguard = \top$.
We say location~$\loc$ \emph{has no invariant} if $\invariant(\loc)=\top$.

\begin{example}
In the \gta{} in \cref{fig:async-read-example}, the transition from~$\init$ to~$\locreading$ is guarded by location guard~$\locpost$.
The transition from~$\init$ to~$\loclisten$ has a trivial location guard (trivial location guards are not depicted in our figures).
Location~$\init$ has no invariant.
\end{example}

\begin{definition}[Unguarded Timed Automaton]
    Given a \gta{}~$\TA$, we denote by $\UG{\TA}$ the \emph{unguarded} version of~$\TA$, which is the TA obtained from~$\TA$ by removing location guards, and adding
    a fresh clock $\gclock$, called the global clock, that does not appear in the guards or resets.
    Formally, $\UG{\TA} = (\Loc,\locinit,\clocks{\cup}\{\gclock\},\Transitions',\invariant)$ with
    $\Transitions' = \big\{(\loc, \guard, \resetclocks, \tatranlabel, \loc') \mid \gtatrandefn \in \Transitions \big\}$.
\end{definition}

 For a \gta{} $\TA$,
 let $\TA^n$ denote a \emph{network of guarded timed automata}~(\ngta), consisting of $n$~copies of~$\TA$.
 Each copy of~$\TA$ in\LongVersion{ the \ngta}~$\TA^n$ is called a \emph{process}.

A \emph{configuration} $\nConfig$ of an \ngta
$\network{n}$ is a tuple
$\dtnconfig$, where every $(\loc_i,\clockval_i)$ is a configuration of~$\TA$.
The semantics of $\network{n}$ can be defined as a \emph{timed transition system} $(\mathfrak{C}, \hat{\nConfig},T)$, where $\mathfrak{C}$ 
denotes the set of all  configurations of $\network{n}$, $\hat{\nConfig}$ is the unique initial configuration $(\locinit,\mathbf{0})^n$, and the transition relation $T$ is the union of the following delay and discrete transitions:

\begin{description}[noitemsep,topsep=0pt]
    \item[delay transition] $ \transition {\dtnconfig} {\delta} {\dtnconfigdelay}$
    if $\forall i \in \{1, \dots, n\} :  \clockval_i+\delta \models \invariant(\loc_i)$, \ie{} we can delay $\delta \in \mQz $ units of time if all clock invariants are satisfied at the end of the delay.

        \item[discrete transition] $\transition{\dtnconfig } {(i, \tatranlabel)} {\dtnconfigsucc }$
        for some $i \in \{1, \dots, n\}$
    if
        \begin{oneenumerate}%
            \item $\transition{(\loc_i,\clockval_i)} {\tatranlabel} {(\loc_i',\clockval_i')}$ is a discrete transition of $\TA$ with             
            $\tatran = (\loc_i,\guard,\resetclocks,\tatranlabel,\locguard,\loc_i')$,
            \item $\locguard=\top$ or $\loc_j=\locguard$ for some $j \in \{1, \dots, n\}\setminus \{i\}$, and
            \item $\loc_j' = \loc_j$ and $\clockval_j'=\clockval_j$ for all $j \in \{1, \dots, n\}\setminus \{i\}$.
        \end{oneenumerate}%

That is, location guards $\locguard$ are interpreted as disjunctive guards: unless $\locguard=\top$, at least one other process needs to occupy location~$\locguard$ in order for process~$i$ to pass this guard.
\end{description}
 
 {We write $\nConfig \xrightarrow{\delta, (i,\tatranlabel)} \nConfig''$ for a delay transition $\nConfig \xrightarrow{\delta} \nConfig'$ followed by a discrete transition $\nConfig' \xrightarrow{(i,\tatranlabel)} \nConfig''$.
Then, a \emph{timed path} of~$\TA^n$ is a finite sequence $\gcomputation = \gcomputationdefn$.}

For a timed path $\gcomputation=\gcomputationdefn$, 
let $\totaltime(\gcomputation) = \sum_{0 \leq i < l} \delta_i$ be the total time delay of~$\gcomputation$.
The definition of timelocks extends naturally to configurations of~\ngtas.
A timed path $\gcomputation$ of~$\TA^n$ is a \emph{computation} if $\nConfig_0 = \nConfigInit$.
Its \emph{length} is equal to $2l$.%

We write $\loc \in \nConfig$ if $\nConfig=\big((\loc_1,\clockval_1),\ldots,(\loc_n,\clockval_n)\big)$ and $\loc = \loc_i$ for some $i \in \{1, \dots, n\}$, and similarly $(\loc,\clockval) \in \nConfig$.
We say that a location $\loc$ is \emph{reachable} in~$\TA^n$ if there exists a reachable configuration $\nConfig$ s.t.\ $\loc \in \nConfig$.

\begin{example}\label{example:computation}
    Consider the \ngta $\network{3}$ where $\TA$ is the \gta shown in \cref{fig:async-read-example}. 
    A computation $\gcomputation$ of this network is depicted in \cref{fig:example-run-gta-client-server}, in which a process reaches $\mathsf{error}$ with $\totaltime(\gcomputation)=5$. 
		The computation is 
            $\big((\mathsf{init},\clock=0),(\mathsf{init},\clock=0),(\mathsf{init},\clock=0)\big)$
            $\xrightarrow {1,(1,\sigma_0)}  \big((\mathsf{listen},\clock=1),(\mathsf{init},\clock=1),(\mathsf{init},\clock=1)\big)$
            $\xrightarrow {3,(1,\sigma_1)}  \big((\mathsf{post},\clock=0),(\mathsf{init},\clock=4),(\mathsf{init},\clock=4)\big)$
            $\xrightarrow {0,(2,\sigma_3)}  \big((\mathsf{post},\clock=0),(\mathsf{reading},\clock=0),(\mathsf{init},\clock=4)\big)$ 
            $\xrightarrow {1,(2,\sigma_5)}  \big((\mathsf{post},\clock=1),(\mathsf{done},\clock=1),(\mathsf{init},\clock=5)\big)$
            $\xrightarrow {0,(3,\sigma_3)}  \big((\mathsf{post},\clock=1),(\mathsf{done},\clock=1),(\mathsf{reading},\clock=0)\big)$
            $\xrightarrow {0,(3,\sigma_{\mathsf{err}})}  \big((\mathsf{post},\clock=1),(\mathsf{done},\clock=1),(\mathsf{error},\clock=0)\big)$.
            Therefore, $\mathsf{error}$ is reachable in~$\network{3}$.
\end{example}

The \emph{trace} of the timed path $\gcomputation$ is a sequence 
$\traceof{\gcomputation} = (\delta_0', (i_0', \sigma_0')) \ldots (\delta_{l-1}', (i_{l-1}', \sigma_{l-1}'))$
obtained by removing all discrete transitions $(j,\sigma_j)$ of $\gcomputation$ with $\sigma_j \in \Sigma^-$, and adding all delays of these transitions to the following discrete transition, yielding the~$\delta_j'$.
The \emph{language} of~$\TA^n$, denoted $\Lg(\TA^n)$, is the set of traces of all of its computations.

\begin{example}\label{lbl:example-trace}
For the computation $\gcomputation$ in \cref{example:computation},  $\traceof{\gcomputation}=\big(1,(1,\sigma_0)\big), \big(3,(1,\sigma_1)\big), \big(0,(2,\sigma_3)\big),
\big(1,(2,\sigma_5)\big),\big(0,(3,\sigma_{3})\big), \big(0,(3,\sigma_{\mathsf{err}})\big)$.

\end{example}
We will also use \emph{projections} of these global objects onto subsets of the processes.
That is, if $\nConfig=\big((\loc_1,\clockval_1),\ldots,(\loc_n,\clockval_n)\big)$ and $\intervalPr = \intervalPrdefn \subseteq \{1, \dots, n\}$, then $\proj{\nConfig}{\intervalPr}$ is the tuple $\big((\loc_{i_1},\clockval_{i_1}),\ldots,(\loc_{i_k},\clockval_{i_k})\big)$, and we extend this notation to computations 
$\proj{\gcomputation}{\intervalPr}$ by keeping only the discrete transitions of $ \intervalPr$ and by adding the delays of the removed discrete transitions to the delay of the following discrete transition of $\intervalPr$ (see \cref{appendix:NTAs} for a full definition).

We introduce a special notation for projecting to a single process and define, for any natural number $1\leq a \leq n$, $\proj{\gcomputation}{a}$ a computation of $\UG{A}$,
obtained from~$\proj{\pi}{\{a\}}$ by discarding the index $a$ from all transitions; 
that is, $\proj{\gcomputation}{a}$ has the form
$(\loc_0, \clockval_0) \xrightarrow{(\delta_{k_0}',\sigma_{k_0})}
\ldots
\xrightarrow{(\delta_{k_m}', \sigma_{k_m})} (\loc_{m+1}, \clockval_{m+1})$.
We also extend this to traces; that is, $\proj{\traceof{\gcomputation}}{a} = (\delta_{k_0}',\sigma_{k_0})\ldots (\delta_{k_m}', \sigma_{k_m})$, which is a trace of $\UG{A}$.
For a set of traces $L$, and set $\intervalPr$ of processes, we write 
$\proj{L}{\intervalPr} = \{ \proj{\trace}{\intervalPr} \mid \trace \in L\}$.

\add{Note that the projection of a computation is not necessarily a computation itself, since location guards may not be satisfied.} 

\begin{example}\label{lbl:example-trace-projection}
For the computation $\gcomputation$ in \cref{example:computation},   $\proj{\gcomputation}{3}=
		(\mathsf{init},\clock=0)
		\xrightarrow {(5,\sigma_3)} (\mathsf{reading},\clock=0)
		\xrightarrow {(0,\sigma_{\mathsf{err}})}  (\mathsf{error},\clock=0)$ and $\proj{\traceof{\gcomputation}}{3}= (5,\sigma_3), (0,\sigma_{\mathsf{err}})$.
\end{example}

A \emph{prefix} of a computation
$\gcomputation=\gcomputationdefn$, is a sequence
$\nConfig_0 \xrightarrow{\delta_0, (i_0, \sigma_0)} \cdots \xrightarrow{\delta_{l'}, (i_{l'}, \sigma_{l'})} \nConfig_{l'}$
with $l' \leq l-1$.
If $\gcomputation$ is a timed path and $d \in \mQz$, then $\gcomputation^{\leq d}$ 
denotes the maximal prefix of~$\gcomputation$ with $\totaltime(\gcomputation^{\leq d})\leq d$
, and similarly for timed paths $\lcomputation^{\leq d}$ of a single \gta{}.
For timed paths $\gcomputation_1$ of $\network{n_1}$ and $\gcomputation_2$ of $\network{n_2}$ with $\totaltime(\gcomputation_1) = \totaltime(\gcomputation_2)$,
we denote by $\gcomputation_1 \parallel \gcomputation_2$ their \emph{composition} into a timed path of $\TA^{n_1+n_2}$ 
whose projection to the first $n_1$ processes is~$\gcomputation_1$, and whose projection to the last~$n_2$ processes is~$\gcomputation_2$ (see \cref{appendix:NTAs}).

\begin{definition}[\disjunctivetimednetwork{}]
A given \gta{}~$\TA$ induces a \emph{disjunctive timed network (\DTN{})} 
$\TA^\infty$, defined as the following family of~\ngtas:
$\TA^\infty = \{\network{n} \mid n \in \Nats_{>0}\}$
(we follow the terminology and use abbreviations of~\cite{SpalazziS20}).
We define $\calL(\TA^\infty) = \bigcup_{n \in \Nats_{>0}} \calL(\network{n})$ and consider 
$\proj{\calL(\TA^\infty)}{I} = 
\bigcup_{n \in \Nats_{>0}} \proj{\calL(\network{n})}{I}$.
\end{definition}

\subsection{The Parameterized Model Checking Problem}
\label{sec:paramver}

We formalize properties of \DTN{}s as sets of traces that describe the intended behavior of a fixed number of processes running in a system with arbitrarily many processes.
That is, a \emph{local property} $\varphi$ 
of $k$ processes, also called a \emph{$k$-indexed property}, is a subset of $(\mQz \times ([1,k] \times \Sigma))^*$.
For $k=1$, for simplicity, we consider it as a subset of $(\mQz \times \Sigma)^*$.
We say that $\network{n}$ \emph{satisfies} a $k$-indexed local property $\varphi$, denoted $\network{n} \models \varphi$, if $\proj{\calL(\network{n})}{[1,k]} \subseteq \varphi$.
Note that, due to the symmetry of the system, it does not matter \emph{which} $k$ processes we project $\calL_{}(\network{n})$ onto, so we always project onto the first~$k$.

\defProblem
	{Parameterized model checking problem (PMCP)}
	{a \gta $\TA$ and a $k$-indexed local property $\varphi$}
	{Decide whether $\network{n} \models \varphi$ holds $\forall n\geq k$.}

Local trace properties allow to specify for instance any local safety property of a single process (with $I=[1,1]$), as well as mutual exclusion properties (with $I=[1,2]$) and variants of such properties for larger~$I$.

PMCP can be solved by checking whether $\proj{\calL(\network{\infty})}{[1,k]} \subseteq \varphi$.
Our solution consists in building a TA that recognizes
$\proj{\calL(\TA^\infty)}{[1,k]}$.
Note that language inclusion is undecidable on TAs~\cite{AD94}\add{, but many interesting problems are decidable.
These include MITL model checking~\cite{DBLP:reference/mc/BouyerFLMO018} and simpler problems such as} reachability:
given symbol $\sigma_0 \in \Sigma$,
the \emph{reachability PMCP} is the PMCP where
$\varphi$ is the set of traces that contain an occurrence of~$\sigma_0$.
Reachability of a \emph{location} of~$\TA$ can be solved by PMCP by choosing appropriate transition labels.

\vspace{-0.1cm}
\begin{example}
    In the example of \cref{fig:async-read-example},
    a natural local property we are interested in is the reachability of the label $\sigmaerr$.
    Formally, the local property for process 1 can be written as a 1-indexed property:
        $(\mQz {\times} ([1,1] {\times} \Sigma))^* \cdot \{(d, (1, \sigmaerr)) \mid d \in \mQz\} \cdot (\mQz {\times} ([1,1] {\times} \Sigma))^*$.

\end{example}
\vspace{-0.1cm}

\section{Model Checking \DTN{}s}\label{sec:DTNRA}

\subsection{Definitions}\label{ss:slots}
We recall here the standard notions of regions and region automata,
and introduce the \emph{slots} of regions which refer to the intervals of possible valuations of a global clock.

\smartpar{Regions}
Given~$\TA$, %
for all $\clock \in \clocks$,
let $\bound{\clock}$ denote
the maximal bound that $\clock$ is compared to:
$\bound{\clock}= \max \{ \intconstant \in \mathbb{Z} \mid $ 
``$\clock \sim \intconstant$'', ``$\clock - \clock' \sim \intconstant$'',
``$\clock' - \clock \sim \intconstant$'' appears in a guard or invariant of $\TA \}$(we set $\bound{\clock} = 0$ if this set is empty).
$\Mbound$ is called the \emph{maximal bound function for $\TA$}.
Define $\boundmax= \max \{\bound{\clock} \mid \clock \in \clocks\}$.
We say that two valuations $\clockval$ 
and $\clockval'$ are equivalent \wrt{} $\Mbound$, denoted by $\clockval \regionequalto{\Mbound} \clockval'$, if the following conditions hold for any clocks $\clock,\clock'$~\cite{PB-CD-EF-AP-cav2000,bouyer2004updatable}:

\begin{enumerate}[noitemsep,topsep=0pt]
\item either 
 $\integral{\clockval(\clock)} = \integral{\clockval'(\clock)}$ or
 $\clockval(\clock) > \bound{\clock}$ and $\clockval'(\clock) > \bound{\clock}$;
 \item if $\clockval(\clock),\clockval'(\clock) \leq \bound{\clock}$ then $\fractional{\clockval(\clock)} =0 \Longleftrightarrow \fractional{\clockval'(\clock)}=0$;
 \item if $\clockval(\clock)\leq \bound{\clock},\clockval(\clock') \leq \bound{\clock'}$ {then}
 $\fractional{\clockval(\clock)} \leq \fractional{\clockval(\clock')}$
 {$\Longleftrightarrow $} $\fractional{\clockval'(\clock)} \leq \fractional{\clockval'(\clock')}$;
 \item for any interval $I$ among $(-\infty,-\bound{\clock'}), [-\bound{\clock'},-\bound{\clock'}],
    (-\bound{\clock'},-\bound{\clock'}+1),
 \ldots, [\bound{\clock},\bound{\clock}], (\bound{\clock},\infty)$, we have
 $\clockval(\clock)-\clockval(\clock') \in I \Longleftrightarrow $
 $\clockval'(\clock)-\clockval'(\clock') \in I$.
\end{enumerate}

An \emph{$\Mbound$-region} is an equivalence class of valuations induced by $\regionequalto{\Mbound}$. We denote by $\regionof{\clockval}{\Mbound}$ the region to which $\clockval$ belongs.
We omit~$\Mbound$ when it is clear from  context.

For an $\Mbound$-region $\region$, if a valuation $\clockval \in \region$ satisfies a clock guard $\guard$, then every valuation in $\region$ satisfies~$\guard$.
We write $\region \models \guard$ to mean that every valuation in $\region$ satisfies~$\guard$.

Given an $\Mbound$-region~$\region$ and a clock~$\clock$, let $\proj{\region}{\clock}$ denote
the projection of the valuations of~$\region$ to~$\clock$,
\ie{} $\proj{\region}{\clock} = \{ \clockval(\clock) \mid \clockval \in \region\}$.
Given a valuation~$\clockval$ and a clock~$\clock \in \clocks$, let $\eliminate{\clockval}{\clock}$ denote
\emph{the projection of $\clockval$ to the clocks other than~$\clock$},
\ie{} $\eliminate{\clockval}{\clock}\colon \clocks \setminus \{\clock\} \rightarrow \mQz$
is defined by $\eliminate{\clockval}{\clock}(\clock')=\clockval(\clock')$ for all $\clock' \in \clocks \setminus \{c\}$.
By extension, given a region~$\region$ and a clock~$\clock$, let 
$\eliminate{\region}{\clock}= \{ \eliminate{\clockval}{\clock} \mid \clockval \in \region \}$.

\smartpar{\RegionAutomaton{}}
The \emph{\regionautomaton{}} of a TA~$\TA$ is a finite automaton with alphabet $\Sigma \cup \{\radelaylabel \}$,
denoted by $\ra{\Mbound}{\TA}$, defined as follows.

The \emph{region states} are pairs $(\loc,\region)$ where $\loc \in \Loc$ and $\region$ is an $\Mbound$-region.
The \emph{initial region state} is $\regionstateinit$ %
where $\locinit$ is the initial location of~$\TA$ and $\regioninit$ is the singleton
region containing $\initialvaln$.

There is a transition $\rgtransition {(\loc,\region)}{\radelaylabel}{(\loc,\region')}$ in $\ra{\Mbound}{\TA}$ iff
there is a transition $\tatransition{(\loc,\clockval)}{\tadelay}{(\loc,\clockval’)}$ in~$\TA$ for some
$\tadelay \in \mQz, \clockval \in \region$ and $\clockval’ \in \region’$. 
We say that~$\region'$ is a \emph{time successor} of~$\region$.
Note that we can have $\region'=\region$.
Furthermore, $(\loc,\region')$ is the \emph{immediate time successor} of $(\loc,\region)$ if
$\rgtransition {(\loc,\region)}{\radelaylabel}{(\loc,\region')}$,
$\region'\neq \region$,
and whenever $\rgtransition {(\loc,\region)}{\radelaylabel}{(\loc,\region'')}$,
we have 
$\rgtransition {(\loc,\region')}{\radelaylabel}{(\loc,\region'')}$.

There is a transition $\rgtransition{(\loc,\region)}{\tatranlabel}{(\loc',\region')}$ in $\ra{\Mbound}{\TA}$ iff
there is a transition $\tatransition{(\loc,\clockval)}{\tatranlabel}{(\loc’,\clockval’)}$ with label~$\tatranlabel$ in~$\TA$
for some
$\clockval \in \region$ and $\clockval’ \in \region’$.
We write $\genrgtransition{(\loc,\region)}{(\loc',\region')}$ if either 
$\rgtransition {(\loc,\region)}{\radelaylabel}{(\loc',\region')}$ or $\rgtransition{(\loc,\region)}{\tatranlabel}{(\loc',\region')}$ for some~$\tatranlabel \in \Sigma$.

A \emph{path} in $\ra{\Mbound}{\TA}$ is a finite sequence of  transitions 
$\rho_r = (\loc_0,\region_0)\xrightarrow{\sigma_0} \ldots \xrightarrow{\sigma_{n-1}} (\loc_n,\region_n) $ for some $n \geq 0$
where $\sigma_i \in \Sigma \cup\{\radelaylabel\}$.
A path of $\ra{\Mbound}{\TA}$ is a \emph{computation} if it starts from the initial region state.

It is known that $\ra{\Mbound}{\TA}$ captures the \emph{untimed} traces of~$\TA$,
\ie{} the projection of the traces of~$\TA$ to $\Sigma$~\cite{AD94}.

\smartpar{Slots}
Now, we can introduce slots.
We will show later that slots are a sufficiently precise abstraction of time for \DTN{}s.
In this paragraph, we assume that TAs have a distinguished \emph{global clock} $\gclock$ which is never reset and does not appear in clock guards.
We will thus consider a clock set $\clocks \cup \{\gclock\}$
(making $\gclock$ appear explicitly for clarity).

Let $N_\TA$ denote the number of pairs $(\loc, \region)$
where $\loc \in \Loc$ and $\region$ is an $\Mbound$-region (thus a region on the clock set $\clocks$ without~$\gclock$).
Recall that $N_\TA$ is exponential in $\lvert \clocks \rvert$~\cite{AD94,bouyer2004updatable}.
Let us consider a bound function~$\MFbound :\clocks \cup \{\gclock\} \rightarrow \mathbb{N}$ for~$\TA$
such that for $\clock \in \clocks \setminus \{\gclock\}$,
$\MFbound(\clock)=\bound{\clock}$,
and~$\MFbound(\gclock)=2^{N_\TA+1}$.
Throughout the paper, the bound functions will be denoted by $\MFbound(\cdot)$
whenever the clock set contains the distinguished global clock $\gclock$, and $\bound{\cdot}$ otherwise.
The former will be referred to as $\MFbound$-regions, and the latter as $\Mbound{}$-regions.

We define the \emph{slot of an $\MFbound$-region $\region$} as $\slotof(\region)=\proj{\region}{\gclock}$.
It is known that for any region~$\region$ (with any bound function)
and clock~$\clock$,
$\proj{\region}{\clock}$ is an interval~\cite{HerbreteauKSW11}.
Moreover, if $\clockval(\clock)$ for every $\clockval \in \region$ is below the maximal constant $\MFbound(\clock)$, then $\proj{\region}{\clock}$ is either a singleton interval of the form $[k,k]$, or an open interval of the form $(k,k+1)$ for some $k \in \Nats$.

For a slot~$\slot$, let us define $\nextslot(\slot)$ as follows.
\begin{inparaenum}%
    \item if $\slot=(k,k+1)$ for some $k \in \Nats$, then $\nextslot(\slot)=[k+1,k+1]$;
    \item if $\slot=[k,k]$ and $k < \MFbound(\gclock)$, then
    $\nextslot(\slot)=(k,k+1)$;
    \item if $\slot=[\MFbound(\gclock), \MFbound(\gclock)]$, then
    $\nextslot(\slot)=(\MFbound(\gclock),\infty)$.
    \item if $\slot=(\MFbound(\gclock),\infty)$ then $\nextslot(\slot)=\slot$.
\end{inparaenum}

We define the \emph{slot of a valuation}~$\clockval$ on $\clocks \cup\{\gclock\}$ as $\slotof(\region)$ where $\region$ is the (unique) $\MFbound$-region s.t.~$\clockval \in \region$.
Slots, seen as intervals, can be bounded or unbounded.

\begin{example}\label{example:slot-next}
    Consider the clock set $\{x,y,\gclock\}$ and the region $\region$ defined by
    $\intfloor{x} = \intfloor{y} = 1$, $\intfloor{\gclock} = 2$,
    $0<\fractional{x} < \fractional{y} < \fractional{\gclock}<1$ (with $M^\nearrow(\cdot)=4$ for all clocks).
    Then, $\slotof(\region) = (2,3)$.

    As a second example, assume $\MFbound(x) = 2$, $\MFbound(y) = 3$ and $\MFbound(\gclock)$ is, say, $2048$.
    Consider the region $\region'$ defined by
    $x > 2 \land \intfloor{y} = 1 \land 0 < \fractional{y} < 1 \land \intfloor{\gclock} = 2048 \land \fractional{\gclock} = 0$.
    Then, $\slotof(\region) = [2048,2048]$.
    In addition, $\nextslot(\slotof(\region)) = (2048, \infty)$.
\end{example}

We now introduce the \emph{shifting} operation which consists of increasing the global clock value,
without changing the values of other clocks.
\begin{lemma}\label{lemma:slots}
    Given any $\MFbound$-region~$\region$ 
    and $k\in\Ints$ such that ${\sup}\big(\slotof(\region)\big)+k \leq \MFbound(\gclock)$,
    and ${\inf}\big(\slotof(\region)\big)+k \geq 0$,
    there exists a $\MFbound$-region
    $\region'$ which satisfies $\slotof(\region') = \slotof(\region)+k$ and $\eliminate{\region'}{\gclock} = \eliminate{\region}{\gclock}$,
    and $\region'$ can be computed in polynomial time in the number of clocks.
\end{lemma}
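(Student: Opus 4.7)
The plan is to work with the canonical DBM representation $\zonedbmdefn$ of $\region$ over the clock set $\clocks \cup \{\gclock\}$ and explicitly construct $\region'$ by shifting only those DBM entries that involve the global clock. Concretely, set $\region'_{\gclock, 0} = \region_{\gclock, 0} + k$ and $\region'_{0, \gclock} = \region_{0, \gclock} - k$; for every $\clock \in \clocks$, set $\region'_{\gclock, \clock} = \region_{\gclock, \clock} + k$ and $\region'_{\clock, \gclock} = \region_{\clock, \gclock} - k$; and leave all entries that do not involve $\gclock$ unchanged. Intuitively, this is the pointwise shift $\{\clockval' \mid \exists \clockval \in \region,\ \clockval'(\gclock) = \clockval(\gclock) + k \text{ and } \clockval'(\clock) = \clockval(\clock) \text{ for } \clock \in \clocks\}$.

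Next, I would verify each conclusion. Properness is immediate from the construction: adding $k$ on both sides of $\region_{\gclock, \clock} = \region_{\gclock, 0} + \region_{0, \clock}$ gives $\region'_{\gclock, \clock} = \region'_{\gclock, 0} + \region_{0, \clock}$, and symmetrically for the other identity. The hypotheses $\inf(\slotof(\region)) + k \geq 0$ and $\sup(\slotof(\region)) + k \leq \MFbound(\gclock)$ ensure that $\region'_{0, \gclock}$ and $\region'_{\gclock, 0}$ still encode a non-empty interval of $\gclock$-values within $[0, \MFbound(\gclock)]$, so $\region'$ is non-empty and $\slotof(\region') = \slotof(\region) + k$ follows by direct computation on the $\gclock$-bounds. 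Furthermore $\eliminate{\region'}{\gclock} = \eliminate{\region}{\gclock}$ because the sub-DBM indexed by $\clocks \cup \{0\}$ is untouched. Canonicity of $\region'$ follows from a short triangle-inequality check that uses canonicity of $\region$ for triangles not involving $\gclock$ and properness for those that do.

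The main obstacle is arguing that $\region'$ is genuinely an equivalence class of $\regionequalto{\MFbound}$ and not merely an arbitrary zone. I would verify this by checking each of the four conditions in the definition of $\regionequalto{\MFbound}$ for the pointwise-shifted valuations: conditions 1--3 restricted to $\clocks$ are inherited verbatim, and for $\gclock$ they are preserved because $k \in \Ints$ shifts only $\intfloor{\clockval(\gclock)}$ and leaves $\fractional{\clockval(\gclock)}$ intact (so zero-ness and the fractional orderings against other clocks are unchanged); condition 4 holds because the difference $\clockval(\gclock) - \clockval(\clock)$ shifts by the fixed integer $k$ uniformly across $\clockval \in \region$, which keeps the differences inside a single interval of the partition induced by $\MFbound(\gclock)$---and the bound hypotheses guarantee we do not leave the bounded range. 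Finally, the construction touches only $O(|\clocks|)$ DBM entries and the canonicity verification takes $O(|\clocks|^2)$ time, yielding the polynomial-time claim.
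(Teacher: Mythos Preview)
Your approach is essentially the same as the paper's: both manipulate the canonical DBM of $\region$ to shift all entries involving $\gclock$ by $\pm k$. The paper does this slightly more indirectly---it first relaxes the diagonal entries $\region_{\gclock,\clock}$ and $\region_{\clock,\gclock}$ to $\infty$ (arguing from properness that this does not change the represented set), then adds $k$ to $\region_{\gclock,0}$ and subtracts $k$ from $\region_{0,\gclock}$, and finally recanonicalizes---whereas you shift all $\gclock$-entries simultaneously and argue that canonicity is preserved; the resulting DBM is identical. Your treatment is in fact more thorough than the paper's on one point: you explicitly check that the shifted set is an $\MFbound$-equivalence class (conditions 1--4), which the paper omits entirely, and you use properness exactly as stated in the hypothesis rather than appealing to reachability in $\UG{\TA}$ as the paper's proof does.
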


The region~$\region'$ in \cref{lemma:slots} will be denoted by $\newslot{\region}{k}$.
We say that it is obtained by \emph{shifting the slot by~$k$ in $\region$}.
We extend this notation to sets of regions and sets of region states,
that is, $\newslot{W}{k} = \{ (\loc,\newslot{\region}{k}) \mid (\loc,\region) \in W\}$
where~$W$ is a set of region states.
For a set of region states $W$, we define $\eliminate{W}{\gclock}=\{(\loc,\eliminate{\region}{\gclock}) \mid (\loc,\region) \in W\}$.

\begin{example}
    Consider the clock set $\{x,y,\gclock\}$ and the region $\region$ defined in \cref{example:slot-next} satisfying
    $\intfloor{x} = \intfloor{y} = 1$, $\intfloor{\gclock} = 2$,
    $0<\fractional{x} < \fractional{y} < \fractional{\gclock}<1$
    (with $M^\nearrow_{max}=4$).
    Then, $\slotof(\region) = (2,3)$,
    and $\newslot{\region}{1}$ is defined by the same constraints as above
    except that $\intfloor{\gclock} = 3$, and $\slotof(\newslot{\region}{1})=(3,4)$.
\end{example}

\begin{remark}
    \label{remark:projected-regions-size}
    Recall that given a bound function, the number of regions is
    $O(|\clocks|! 2^{|\clocks|}\boundmax^{|\clocks|})$ 
    since regions determine an order of the fractional values of clocks, the subsets of clocks that have integer values, 
    and an integral value for each clock~\cite{PB-CD-EF-AP-cav2000}.
    The number of 
    $\MFbound$-regions is $O\left(|\clocks|!2^{|\clocks|} (\MFbound{(\gclock)})^{|\clocks|}\right)$,
    which is doubly exponential in $|\clocks|$ since $\MFbound(\gclock)$ is. %
    
    Crucial to our paper, however, is that the set of \emph{projections} $\eliminate{\region}{\gclock}$
    of the set of $\MFbound$-regions $\region$ has size exponential only. This can be seen as follows:
    our definition of regions from \cite{PB-CD-EF-AP-cav2000} uses a distinct maximum bound function for each clock.
    Thus, when constraints on $\gclock$ are eliminated, there only remain constraints on clocks $\clock \in \clocks\setminus\{\gclock\}$,
    with maximal constants $\Mbound(\clock)$ as in the original \gta $\TA$. We thus fall back to the set of regions of $\TA$
    of size $O(|\clocks|!2^{|\clocks|} \boundmax^{|\clocks|})$.
\end{remark}

\subsection{Layer-based Algorithm for the DTN Region Automaton}
\label{ss:MC-local}

We describe here an algorithm to compute a TA~$\da$ that recognizes the language $\proj{\calL(\TA^\infty)}{1}$.
We explain at the end of the section how to generalize the algorithm to compute $\proj{\calL(\TA^\infty)}{I}$ for an interval $I=[1,a]$ for~$a\geq 1$.

\begin{assumption}
    \label{assumption:nonblocking-2}
    We assume that the given \gta{} $\TA$ is \emph{timelock-free}, regardless of location guards.
		Formally, let $\TA'$ be obtained from~$\TA$ by removing all transitions with non-trivial location guards. 
		We require that no configuration of $\TA'$ has a timelock.
\end{assumption}

\begin{wrapfigure}{r}{0.27\textwidth}
\vspace{-1em}
    \centering
	\scalebox{0.9}{
	\begin{tikzpicture}[font=\footnotesize]
    \node (qinit) [location, initial text=] at (0,0) {$\locinit$};
    \node (q1) [location] at (3,0) {$\loc_1$};

    \node (q1inv) [invariant, below=of q1] {$\clock \leq 1$};

    \path[->]
        ($(qinit)+(0,0.5)$) edge (qinit)

        (qinit) edge  
            node [above] {$\clock \leftarrow 0$} (q1)
            
        (q1) edge [bend right=22]
						node [above, locguard,pos=0.5] {$\locinit$} (qinit)
    ;

\end{tikzpicture}
	}
	\caption{A \gta{} with timelock due to location guards.\vspace{-.5cm}}
	\label{figure:livelock} 	
\end{wrapfigure}

Note that this assumption 
guarantees that $\network{n}$ will be timelock-free for all~$n$.
\remove{Vice versa, if $A'$ is not timelock-free, then there are also timelocks in some $A^n$, specifically for $n=1$.}
Assuming timelock-freeness is not restrictive since a protocol cannot possibly block the physical time: time will elapse regardless of the restrictions of the design.
A network with a timelock is thus a design artifact, and just means the model is incomplete.
An incomplete model can be completed by adding a sink location to which processes that would cause a timelock can move, and regarding reachability the resulting model is equivalent to the original one.

\begin{example}\label{example:figure:livelock}

	The \gta{} in \cref{figure:livelock} does not satisfy \cref{assumption:nonblocking-2}, since $\TA'$ (where we remove transitions with non-trivial location guards) has a timelock at $(\loc_1, \clock =1)$.
\end{example}

The following assumption simplifies the proofs:
\begin{assumption}
    \label{assumption:unique-sigma}
    Each transition of \gta{} $\TA$ is labeled by a unique label different from~$\epsilon$.
\end{assumption}

Consider a \gta{}~$\TA$.
Our algorithm builds a TA capturing the language $\proj{\calL(\TA^\infty)}{1}$. 
The construction is based on $\MFbound$-region states
of $\UG{A}$; however, not all transitions of the region
automaton of $\UG{A}$ are to be added since location guards mean that some transitions are not enabled at a given region.
Unless otherwise stated, by region states we mean $\MFbound$-\regionstates{}. %
The steps of the construction are illustrated in \cref{fig:data-structures}.
From~$\TA$, we first obtain $\UG{\TA}$, and build the region automaton for~$\UG{A}$,
denoted by $\raUGA$. Then \cref{algo:dtnregionautomaton} builds the so-called \emph{\dtnregionautomaton{}} $\DRA{\TA}$ which is a finite automaton.
Finally we construct $\da$ which we refer to as the \emph{\summaryautomaton{}}, a timed automaton derived from $\DRA{\TA}$ by adding clocks and clock guards to $\DRA{\TA}$. 
Our main result is that $\da$  recognizes
the language $\proj{\Lg(A^\infty)}{1}$. %

\begin{figure*}[tb]
	\centering
	\resizebox{0.85\linewidth}{!}{%
			
	\begin{tikzpicture}[pta, bend angle=15, font=\small]

		\node[methodBox] at (0, 0)  (ds-A) {$\TA$};
		\node[methodBox] at (3.75, 0) (ds-UGA) {$\UG{\TA}$};
		\node[methodBox] at (8.5, 0) (ds-RA-UGA) {$\raUGA$};
		\node[methodBox] at (12.5, 0) (ds-DTN-RA) {$\DRA{\TA}$};
        \node[methodBox] at (14.5,0) (ds-AD) { $\da$};

		\path (ds-A) edge[] node[above]{$-$ location guards} node[below]{$+$ global clock}(ds-UGA);
        \path (ds-UGA) edge[] node[above]{Region automaton} (ds-RA-UGA);
        \path (ds-RA-UGA) edge[] node[above,sloped]{\cref{algo:dtnregionautomaton}} (ds-DTN-RA);
        \path (ds-DTN-RA) edge[sloped] node[above]{} (ds-AD);
        \draw[<->,thick, dashed](ds-A)  --(0,0.7)--(14.5, 0.7)-- (ds-AD);
        \node[] at (7.25, 1)  (langequivnode) {$\proj{\Lg(\TA^\infty)}{1}=\Lg(\da)$};


	\end{tikzpicture}
	}
	\caption{An overview of data structures in the paper}
	\label[figure]{fig:data-structures}
\end{figure*}

Intuitively, \cref{algo:dtnregionautomaton} computes region states reachable by a single process within the context of a network of arbitrary size. 
These region states are partitioned according to their slots.
More precisely, \cref{algo:dtnregionautomaton} computes (lines~\ref{algo:dtnregionautomaton:while-begin}-\ref{algo:dtnregionautomaton:while-end}) the sequence $(\alglayernodes{i},\alglayeredges{i})_{i\geq 0}$, where $\alglayernodes{i}$ is a set of $\MFbound$-region states of $\UG{\TA}$ having the same slot (written 
$\slotof(\alglayernodes{i})$), and $\alglayeredges{i}$ is a set of transitions from region states of $\alglayernodes{i}$ 
to either $\alglayernodes{i}$ or $\alglayernodes{i+1}$.
These transitions include $\epsilon$-transitions which correspond to delay transitions: if the slot does not change during the delay transition, then the transition goes to a region-state which is also in $\alglayernodes{i}$;
otherwise, it leaves to the next slot and the successor is in $\alglayernodes{i+1}$. 
During discrete transitions from~$\alglayernodes{i}$, the slot does not change, so the successor region-states are always inside~$\alglayernodes{i}$.
In order to check if a discrete transition with location guard $\locguard$ must be considered, the algorithm checks if some region-state
$(\locguard,\region_\locguard)$ was previously added to the \emph{same} layer $\alglayernodes{i}$. This means that some (other) process can be at $\locguard$ somewhere at a global time that belongs to $\slotof(\alglayernodes{i})$.
This is the nontrivial part of the algorithm: the proof will establish that if a process can be at location~$\locguard$ at \emph{some} time in a given slot $\slot$, then it can also be at $\locguard$ at \emph{any} time within $\slot$.

For two sets $\setRS_i,\setRS_j$ of region states of $\UG{\TA}$, let us define
$\alglayernodes{i} \approx \alglayernodes{j}$ iff 
$\alglayernodes{j}$ can be obtained from {$\alglayernodes{i}$} by shifting the slot, that is,
if there exists $k \in \Ints$ such that $\newslot{(\alglayernodes{i})}{k} = \alglayernodes{j}$. Recall that 
$\newslot{(\alglayernodes{i})}{k} = \alglayernodes{j}$ means that both sets contain the same
regions when projected to the local clocks $\clocks\setminus\{\gclock\}$.
This definition is of course symmetric.

\cref{algo:dtnregionautomaton} stops (line~\ref{algo:dtnregionautomaton:until}) when %
$\alglayernodes{i_0} \approx \alglayernodes{\indexofsecondrepeatinglayer}$ for some $\indexoffirstrepeatinglayer < \indexofsecondrepeatinglayer$
with both layers having singleton slots (this requirement could be relaxed
but this simplifies the proofs and only increases the number of iterations by a factor of~2).

\begin{algorithm}[t]
    \small
	\Input{\gta $\TA=\TAdefn$ and $\raUGA$}
	\Output{The \dtnregionautomaton{} of $\TA$}
	\BlankLine{} 
              
	Initialize $\slot \assign [0,0]$, 
    $\alglayernodes{0} \assign \{(\locinit,\hat{\region})\}$,
    $\alglayeredges{0} \assign \emptyset$, $\indexl \assign -1$

	\Repeat{$\exists \indexoffirstrepeatinglayer < \indexl : \alglayernodes{\indexl} \approx \alglayernodes{i_0}$, and $\slotof(\alglayernodes{\indexoffirstrepeatinglayer})$ 
    is a singleton%
    \label{algo:dtnregionautomaton:until}}{%
    $\indexl \assign \indexl+1$
        \label{algo:dtnregionautomaton:while-begin}\;

    Compute $(\alglayernodes{\indexl},\alglayeredges{\indexl})$ by applying the following rules until a fixed point is reached:
        \begin{itemize}[leftmargin=5.5mm]
            \item  Rule 1: For any $(\loc,\region) \in \alglayernodes{\indexl}$, let $(\loc,\region) \xrightarrow{\radelaylabel}(\loc,\region')$
            s.t.\ $\slotof(\region')=\slot$, do
            \\
            \qquad $\alglayernodes{\indexl} \assign \alglayernodes{\indexl} \cup \big\{(\loc,\region')\big\}$,
            and
            $\setEdges_{\indexl} \assign \alglayeredges{\indexl} \cup \Big\{\big((\loc,\region),\dtndelay,(\loc,\region')\big)\Big\}$.\label{lbl:rule1} %
            \item Rule 2: For any $(\loc,\region) \in \alglayernodes{\indexl}$ and
            $\gtatran=\gtatrandefn$ s.t. $(\loc,\region)\xrightarrow{\sigma}(\loc',\region')$,\newline
            if $\locguard = \top$, or if there exists $(\locguard, \region_\locguard) \in \alglayernodes{\indexl}$,
            \newline then do
            $\alglayernodes{\indexl} \assign \alglayernodes{\indexl} \cup \big\{(\loc',\region')\big\}$,
            and
            $\alglayeredges{\indexl} \assign \alglayeredges{\indexl} \cup \Big\{\big( (\loc,\region),\sigma,(\loc',\region')\big)\Big\}$ \label{lbl:rule2} 
        \end{itemize}
        \vspace{-1em}
        $\slot \assign \nextslot(\slot)$\;        	
            $\alglayernodes{\indexl+1} \assign \big\{(\loc,\region') \mid (\loc,\region) \in \alglayernodes{\indexl}, (\loc,\region) \xrightarrow{\radelaylabel}(\loc,\region')$ and $ \slotof(\region')= \slot \big\}$\;
            $\alglayeredges{\indexl} \assign \alglayeredges{\indexl} \cup \Big\{\big((\loc,\region),\epsilon,(\loc,\region') \big)\mid (\loc,\region) \in \alglayernodes{\indexl}$, $(\loc,\region) \xrightarrow{\radelaylabel}(\loc,\region') \land \slotof(\region')= \slot \Big\}$ \label{lbl:rule3}\;
        \label{algo:dtnregionautomaton:while-end}
	} %
    $\indexofsecondrepeatinglayer \assign \indexl$\;
            $\alglayeredges{\indexofsecondrepeatinglayer-1} \assign \alglayeredges{\indexofsecondrepeatinglayer-1} \cap \big(\alglayernodes{\indexofsecondrepeatinglayer-1} {\times} \Sigma {\times} \alglayernodes{{\indexofsecondrepeatinglayer-1}}\big)  \cup $ 
                \newline \hspace*{3em}$\Big\{
            \big((\loc,\region),\epsilon,(\loc,\region')\big) \big\vert$
                $(\loc,\region') \in \alglayernodes{\indexoffirstrepeatinglayer}, \exists r'', \exists k \in \Nats,\big((\loc,\region),\epsilon,(\loc,\region'')\big) \in$ 
								\newline \hspace*{12em}$\alglayeredges{\indexofsecondrepeatinglayer-1}  \cap \big(\alglayernodes{\indexofsecondrepeatinglayer-1} {\times} \Sigma {\times} \alglayernodes{\indexofsecondrepeatinglayer}\big),\newslot{\region'}{k} = \region''
        \Big\}$
        \label{line:loopback} 
				
	$\algtotalnodes \assign \cup_{0 \leq i \leq \indexofsecondrepeatinglayer-1 } \alglayernodes{i}, \algtotaledges\leftarrow \cup_{0 \leq i \leq \indexofsecondrepeatinglayer-1 } \alglayeredges{i}$

	\Return{$\DRAdefn{}$\label{algo:dtnregionautomaton:return}}

	\caption{Algorithm to compute \dtnregionautomaton{} of \gta{}~\TA.}
	\label{algo:dtnregionautomaton}
\end{algorithm}

The algorithm returns the DTN region automaton~$\DRA{\TA} = \DRAdefn{}$, where $\alglayernodes{}$ is the set of explored region states, 
and $\alglayeredges{}$ is the set of transitions that were added;
except that transitions leaving {$\alglayernodes{\indexofsecondrepeatinglayer-1}$} are redirected back
to $\alglayernodes{\indexoffirstrepeatinglayer}$ (lines~\ref{line:loopback}--\ref{algo:dtnregionautomaton:return}).
Redirecting such transitions means that whenever $\ra{\MFbound}{\UG{\TA}}$
has a delay transition from
$(\loc,\region)$ to $(\loc,\region')$ with $\slotof(\region) = \slotof(\alglayernodes{\indexofsecondrepeatinglayer-1})$
and $\slotof(\region') = \slotof(\alglayernodes{\indexofsecondrepeatinglayer})$, then we actually add
a transition from
$(\loc,\region)$ to $(\loc,\region'')$, where
$\region''$ is obtained from $\region'$ by shifting the slot to that of $\slotof(\alglayernodes{i_0})$; this means that 
$\eliminate{\region'}{\gclock} = \eliminate{\region''}{\gclock}$, so these define the same clock valuations except with a shifted slot.
The property $\alglayernodes{i_0} \approx \alglayernodes{l_0}$ ensures that $(\loc,\region'') \in \alglayernodes{i_0}$.

We write $\dtntransition{(\loc,\region)}{\sigma}{(\loc',\region')}$ iff $\big ((\loc,\region),\sigma,(\loc',\region') \big)\in \setEdges$.
Paths and computations are defined for the \dtnregionautomaton{}  analogously to region automata.

We now show how to construct the \summaryautomaton{} $\da$ (the step from $\DRA{\TA}$ to $\da$ in \cref{fig:data-structures}).
We define $\da$ by extending $\DRA{\TA}$ with the clocks of~$\TA$.
Moreover, 
each transition $((\loc,\region),\epsilon,(\loc,\region'))$ has the guard~$\eliminate{\region'}{\gclock}$ and no reset;
and each transition 
$((\loc,\region),\sigma,(\loc',\region'))$ with $\sigma \neq \epsilon$
has the guard
$\eliminate{\region}{\gclock}$, and resets the clocks that are equal to~0 in~$\region'$.
The intuition is that $\da$ ensures by construction that any valuation that is to take a discrete transition ($\sigma \neq \epsilon$) at location $(\loc,\region)$ belongs to~$\region$.
Notice that we omit invariants here.
Because transitions are derived from those of~$\raUGA$, the only additional behavior
we can have in $\da$ due to the absence of invariants is a computation delaying in a location $(\loc,\region)$ and reaching outside of $\region$ (without taking an $\epsilon$-transition), while no discrete transitions can be taken afterwards.
Because traces end with a discrete transition, this does not add any trace not possible in~$\proj{\Lg}{1}(\TA^\infty)$.

\subsubsection{Properties of \cref{algo:dtnregionautomaton}}
\label{section:correctness}
We explain the overview of the correctness argument for \cref{algo:dtnregionautomaton} and some of its consequences
(See \cref{appendix:algo-proofs}).

Let us first prove the termination of the algorithm, which also yields a bound on the number of iterations of the main loop (thus on $\indexofsecondrepeatinglayer$ and $\indexoffirstrepeatinglayer$).
Recall that for a given $\TA$, $N_\TA$ denotes the number of pairs $(\loc, \region)$
where $\loc \in \Loc$ and $\region$ is an $\Mbound$-region (see \cref{ss:slots}).

\begin{lemma}%
    \label{lemma:bounded-slots}
    Let~$\DRA{\TA}$ be a \dtnregionautomaton{} returned by \cref{algo:dtnregionautomaton}. Then the slots of all region states in~$\DRA{\TA}$ are bounded. Consequently, the number of iterations
    of \cref{algo:dtnregionautomaton} is bounded by $2^{N_\TA+1}$.
\end{lemma}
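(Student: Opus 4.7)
The plan is to prove both claims via a single pigeonhole argument on the sequence of layers $\alglayernodes{l}$ computed by \cref{algo:dtnregionautomaton}. First I would establish an invariant: every region state added to a layer $\alglayernodes{l}$ has slot exactly equal to the current value of $\slot$ at iteration $l$. This follows by inspecting the three modification points in the loop: Rule~1 explicitly restricts its delay successors to those whose slot equals $\slot$; Rule~2 adds only discrete successors, which preserve the clock valuation and hence the slot; and the initialization of $\alglayernodes{l+1}$ again restricts to successors whose slot equals $\nextslot(\slot)$. Since $\slot$ is initialized to $[0,0]$ and advanced by $\nextslot$ at the end of each iteration, the slot at iteration $2k$ is the singleton $[k,k]$, and at iteration $2k+1$ it is the open interval $(k,k+1)$, as long as these remain below $\MFbound(\gclock) = 2^{N_\TA+1}$.

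Next, I would characterize the relation $\approx$ more concretely. By \cref{lemma:slots} and the definition of the shift $\newslot{\cdot}{k}$, two proper regions $\region$ and $\region'$ satisfy $\region' = \newslot{\region}{k}$ for some $k \in \Ints$ iff $\eliminate{\region}{\gclock} = \eliminate{\region'}{\gclock}$. Lifting this to sets, the $\approx$-equivalence class of a layer is uniquely determined by $\eliminate{\alglayernodes{l}}{\gclock}$, which is a subset of the set of pairs $(\loc, r)$ with $\loc \in \Loc$ and $r$ an $\Mbound$-region on $\clocks$ (i.e., on the local clocks, excluding $\gclock$). By definition of $N_\TA$, there are only $N_\TA$ such pairs, hence at most $2^{N_\TA}$ distinct $\approx$-equivalence classes.

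The pigeonhole step then concludes the bound. Suppose towards contradiction that the algorithm performs more than $2^{N_\TA+1}$ iterations. Among the first $2^{N_\TA+1}$ iterations, the even-indexed ones $0, 2, \ldots, 2^{N_\TA+1}-2$ produce $2^{N_\TA}$ layers with singleton slots; together with the singleton layer at the next even iteration, there are at least $2^{N_\TA}+1$ singleton-slot layers. By pigeonhole, two of them---say $\alglayernodes{i_0}$ and $\alglayernodes{l_0}$ with $i_0 < l_0$---belong to the same $\approx$-class. But then the stopping condition at line~\ref{algo:dtnregionautomaton:until} must have been triggered at or before iteration $l_0 \leq 2^{N_\TA+1}$, contradicting the assumption. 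Hence the number of iterations is at most $2^{N_\TA+1}$, and since the largest slot reached is of the form $[k,k]$ or $(k,k+1)$ with $k \leq 2^{N_\TA} < \MFbound(\gclock)$, every slot in $\DRA{\TA}$ is bounded.

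The only mildly delicate point is the bookkeeping of the singleton-slot layers within $2^{N_\TA+1}$ iterations and the careful use of \cref{lemma:slots} to turn the shifting equivalence into equality of projections; everything else is pigeonhole counting together with the invariant that each layer is uniform in slot.
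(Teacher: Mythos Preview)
Your proof is correct and follows essentially the same approach as the paper: both argue by pigeonhole on the projections $\eliminate{\alglayernodes{l}}{\gclock}$ of the even-indexed (singleton-slot) layers, using that there are only $2^{N_\TA}$ possible such projections. Your version is more explicit---you spell out the slot-uniformity invariant of each layer and invoke \cref{lemma:slots} to justify that $\approx$ is determined by the projection---whereas the paper's proof leaves these points implicit, but the core counting argument is identical.
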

 
The region automaton is of exponential size.
Each iteration of \cref{algo:dtnregionautomaton} takes exponential time since one might have to go through all {region states} in the worst case. By \cref{lemma:bounded-slots}, the number of iterations is bounded by $2^{N_\TA}$,
which is doubly exponential in $\lvert \clocks \rvert$.
\cref{thm:reach-pmcp} will show how to decide the reachability PMCP in exponential space.

We now prove the correctness of the algorithm in the following sense.
\begin{theorem}
    \label{thm:dtn}
    Let $\TA$ be a \gta{}, $\DRA{\TA} = \DRAdefn{}$ its \dtnregionautomaton{}, $\da$ be the \summaryautomaton{}.
    Then we have $\proj{\Lg(\TA^\infty)}{1} = \Lg(\da)$.
\end{theorem}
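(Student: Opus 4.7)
\smallskip
\noindent\textbf{Proof plan.}
I would prove the two inclusions $\proj{\Lg(\TA^\infty)}{1} \subseteq \Lg(\da)$ and $\Lg(\da) \subseteq \proj{\Lg(\TA^\infty)}{1}$ separately, and they rely on a common ``slot-uniformity'' lemma that I would state and prove first.

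\smallskip
\noindent\textbf{Slot-uniformity lemma.}
The heart of the argument is the following claim, already advertised in the paper: if some $(\locguard,\region_\locguard)$ is added to layer $\alglayernodes{i}$ of \cref{algo:dtnregionautomaton}, then for \emph{every} time $t \in \slotof(\alglayernodes{i})$ there exists some $n$ and some computation of $\network{n}$ in which a process sits at $\locguard$ at global time~$t$. I would prove this by induction on $i$, using \cref{lemma:slots} to shift valuations within the slot, \cref{assumption:nonblocking-2} to avoid timelocks, and the fact that delay transitions in $\raUGA$ preserve the local region modulo the global clock. The key observation is that within a single slot, the integral parts of local clocks and of $\gclock$ do not change relative to each other, so one can move the ``witness process'' inside the slot without disturbing the rest of its behavior.

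\smallskip
\noindent\textbf{Soundness ($\proj{\Lg(\TA^\infty)}{1} \subseteq \Lg(\da)$).}
Take a computation $\gcomputation$ of $\network{n}$, instrumented with a fresh global clock $\gclock$, and consider $\proj{\gcomputation}{1}$. This projection follows some path in $\raUGA$; I would show by induction on its length that each prefix can be matched by a path of $\DRA{\TA}$ whose last region state lies in the layer corresponding to the current slot of~$\gclock$. The only non-trivial step is a discrete transition guarded by a non-trivial $\locguard$: here some process $j$ is at $\locguard$ at the current global time, so by the induction hypothesis applied to $\proj{\gcomputation}{j}$, the region state $(\locguard,\region_\locguard)$ is in the same layer, and hence Rule~2 adds the matching edge. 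Delay transitions that cross a slot boundary are simulated by the $\epsilon$-edges introduced between layers; when the last explored layer is reached, the loop-back of \cref{line:loopback} combined with \cref{lemma:slots} lets us shift the slot and continue. Translating from $\DRA{\TA}$ to $\da$ is straightforward since $\da$ only adds the clock guards $\eliminate{\region}{\gclock}$, which are satisfied by construction.

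\smallskip
\noindent\textbf{Completeness ($\Lg(\da) \subseteq \proj{\Lg(\TA^\infty)}{1}$).}
Given a run $\tapath$ of $\da$, I would build, by induction on the number of location-guarded transitions in $\tapath$, a network $\network{n}$ (for some $n$) and a computation $\gcomputation$ whose projection on process~$1$ realizes $\tapath$. The base case uses only transitions with $\locguard=\top$, yielding a single-process computation directly from $\raUGA$. For the inductive step, each location-guarded transition forces the presence of $(\locguard,\region_\locguard)$ in the current layer; the slot-uniformity lemma provides, for the exact global time at which process~1 fires the guarded transition, a separate witness network whose process is at $\locguard$ at that instant. Composing all such witnesses in parallel with process~1 using the $\parallel$ operator yields the desired computation in $\network{n}$. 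For portions of $\tapath$ that traverse the loop-back edges, I would iterate the witness construction with the slot shifted by one unit using \cref{lemma:slots}, guaranteeing that the shifted region state remains in some reachable layer.

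\smallskip
\noindent\textbf{Main obstacle.}
The delicate point is clearly the slot-uniformity lemma: all the reasoning above hinges on the nontrivial fact that enabledness of a location guard depends only on the \emph{slot} of the current global time, not on its exact value. Showing this requires carefully building witness computations in networks that may be larger than~$n$, and using the shifting operation of \cref{lemma:slots} without breaking satisfaction of clock invariants along the witness. A secondary source of technical work is the loop-back mechanism of \cref{line:loopback}: one must verify that finite DTN region automaton really captures the ``eventually periodic'' behaviour of $\ra{\MFbound}{\UG{\TA}}$ restricted to location-guard enabledness, which again reduces to a careful application of \cref{lemma:slots} combined with \cref{assumption:nonblocking-2}.
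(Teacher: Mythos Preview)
Your overall architecture---two inclusions plus a central ``slot-uniformity'' property---matches the paper. The gap is in how you propose to establish slot-uniformity.

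First, \cref{lemma:slots} cannot be used to ``shift valuations within the slot'': that lemma shifts a proper region by an \emph{integer}~$k$, i.e.\ it moves between slots, not inside one. The paper's argument for reaching an arbitrary $t'\in\slotof(\region_l)$ (its \cref{lbl:taglobalClockLemma}) is completely different and is the real crux: it computes the strongest-postcondition zone $z_l$ along the region path in $\raUGA$, observes that $z_l\subseteq\region_l$ is nonempty, and then uses that a bounded slot is either a singleton or a unit open interval to conclude $\proj{z_l}{\gclock}=\slotof(\region_l)$; canonicity of the DBM then gives a concrete run hitting any prescribed~$t'$. No induction and no shifting are involved here---it is a one-shot zone argument, and your sketch does not contain this idea.

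Second, your induction on the layer index~$i$ is too coarse. Inside a single layer, Rule~2 may add $(\loc',\region')$ using a guard location $(\locguard,\region_\locguard)$ that was itself added to the \emph{same} layer one step earlier; induction on~$i$ gives you nothing for this intra-layer dependency chain. The paper handles this by ordering the \emph{transitions} by the moment the algorithm inserts them (the ``index'' $\algtranindex{\cdot}$) and doing induction on that total order: when a guarded edge is added, the witness $(\locguard,\region_\locguard)$ is already present with strictly smaller index, so the induction hypothesis applies. Your later induction ``on the number of location-guarded transitions in~$\tapath$'' runs into the same problem---the witness computation for a guard may itself contain as many guarded transitions as~$\tapath$, so the measure need not decrease.

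For the other inclusion ($\proj{\Lg(\TA^\infty)}{1}\subseteq\Lg(\da)$), your plan is essentially the paper's, including the use of the projection to the witness process~$j$; the only thing you are missing is an explicit device (the paper's $\reduce$ function) to map global times beyond $\slotof(\alglayernodes{\indexofsecondrepeatinglayer})$ back into the loop, which is where \cref{lemma:slots} is actually used.
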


To prove this, we need the following lemma that states a nontrivial property on which we rely:
if a process can reach a given location~$\loc$ at global time~$t'$, then it can also reach~$\loc$ at any global time within the slot of~$t'$.
It follows that the set of global times at which a location can be occupied by at least one process is a union of intervals. Intuitively, this is why partitioning the region states by slots is a good enough abstraction in our setting.

\begin{lemma}\label{lbl:taglobalClockLemma}    
    Consider a \gta~$\TA$ with bound function~$\MFbound$.
    Let $\regionpath=\regionpathdefn$ such that $\regionstatewithindex{0} = \regionstateinit$ be a computation in $\ra{\MFbound}{\UG{\TA}}$
     such that $\slotof(\region_l)$ is bounded. 
    For all $t' \in \slotof(\region_l)$,
    there exists a timed computation {$\tapathdefn$}
    in $\UG{\TA}$
    such that $\clockval_i \in \region_i$ for $0\leq i \leq l$, and $\clockval_l(\gclock) =t'$.
\end{lemma}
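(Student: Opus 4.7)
The plan is to proceed by induction on the length~$l$ of the region computation~$\regionpath$, exploiting the fact that in $\raUGA$ the global clock~$\gclock$ appears in no guard, invariant, or reset, so every region is \emph{proper} in the sense of \cref{ss:slots}. The base case $l=0$ is immediate: $\region_0 = \regioninit$ contains only~$\mathbf{0}$, $\slotof(\regioninit)=[0,0]$, and the empty timed computation gives $\clockval_0(\gclock)=0=t'$.

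For the inductive step I would split on the type of the last transition. If it is a \emph{discrete} transition $(\loc_l,\region_l)\xrightarrow{\sigma}(\loc_{l+1},\region_{l+1})$, then $\gclock$ is preserved, so $\slotof(\region_{l+1})=\slotof(\region_l)$; applying the induction hypothesis with target~$t'$ yields a timed computation ending at some $\clockval_l \in \region_l$ with $\clockval_l(\gclock)=t'$, and firing~$\sigma$ from~$\clockval_l$ is then possible by the standard region-automaton fact that all valuations in a region satisfy the same clock guards, landing in~$\region_{l+1}$ with $\gclock$ unchanged. If the last transition is a \emph{delay} transition $(\loc_l,\region_l)\xrightarrow{\radelaylabel}(\loc_l,\region_{l+1})$, I would leverage the properness of $\region_l$ and $\region_{l+1}$ to decouple~$\gclock$ from the other clocks: the set of delays $\delta>0$ carrying some $\clockval_l \in \region_l$ into $\region_{l+1}$ depends only on the local part $\eliminate{\clockval_l}{\gclock}$, hence is independent of $\clockval_l(\gclock)$, which can be chosen freely in $\slotof(\region_l)$. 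I would then pick $t'' \in \slotof(\region_l)$ and $\delta>0$ with $t''+\delta=t'$ such that some $\clockval_l \in \region_l$ with $\clockval_l(\gclock)=t''$ satisfies $\clockval_l+\delta \in \region_{l+1}$, apply the induction hypothesis to the length-$l$ prefix with target~$t''$, and extend by~$\delta$.

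I expect the delay case to be the main obstacle, especially when $\slotof(\region_{l+1})$ is the immediate time successor of $\slotof(\region_l)$: for example, if $\slotof(\region_l)=(k,k{+}1)$ and $\slotof(\region_{l+1})=[k{+}1,k{+}1]$ is a singleton, then $\delta$ must realize $k{+}1-t''$ exactly for a carefully chosen $t''$, and one must argue that valid pairs $(t'',\delta)$ exist for every target $t'\in \slotof(\region_{l+1})$. Properness is the key technical lever: it guarantees that the $\gclock$-constraints on each region are exactly the slot constraints, decoupled from the constraints on the local clocks, so that the local region path along~$\regionpath$ can be realized by a polytope of valid delay tuples whose total-delay image is an interval, which by the induction argument coincides with $\slotof(\region_l)$.
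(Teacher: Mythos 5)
Your base case and discrete case are fine, but the delay case---which is the entire content of the lemma---has a genuine gap, and the lever you propose (properness) does not do the work you ask of it. A region of $\raUGA$ is an equivalence class of $\regionequalto{\MFbound}$ and therefore records the \emph{ordering of fractional parts} between $\gclock$ and every local clock (condition~3 of the region definition); e.g.\ the region with $0<\fractional{x}<\fractional{\gclock}<1$ is distinct from the one with $0<\fractional{\gclock}<\fractional{x}<1$. Consequently, within a fixed region the admissible values of $\gclock$ are \emph{not} independent of the local part: given $\fractional{x}=0.9$ in the first region above, $\fractional{\gclock}$ is forced into $(0.9,1)$, not the whole slot. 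So your claim that ``the set of delays carrying $\clockval_l$ into $\region_{l+1}$ depends only on $\eliminate{\clockval_l}{\gclock}$, hence $\clockval_l(\gclock)$ can be chosen freely in $\slotof(\region_l)$'' is false at the level of individual valuations. Worse, your induction hypothesis only guarantees that \emph{some} valuation of $\region_l$ with $\gclock$-value $t''$ is reachable; it gives you no control over its local part, which is exactly what determines whether the specific delay $\delta=t'-t''$ lands in $\region_{l+1}$. Your closing sentence---that the reachable set is a polytope whose total-delay image is an interval coinciding with $\slotof(\region_l)$---is precisely the statement to be proven, not a consequence of properness.

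The paper's proof avoids the induction entirely and works globally with the strongest postcondition: it defines zones $z_0=\regioninit$ and $z_{i+1}$ as the (time or discrete) successor of $z_i$ intersected with $\region_{i+1}$ (resp.\ $\region_i$), so that $z_l$ is exactly the set of valuations reachable along $\regionpath$; it is nonempty by the classical region lemma. Writing $z_l$ and $\region_l$ as canonical DBMs $Z\leq R$, the $\gclock$-projection of $z_l$ is a nonempty subinterval of $\slotof(\region_l)$ with integer (DBM) endpoints; since a bounded slot is either $[k,k]$ or $(k,k+1)$, such a subinterval must be the whole slot. Canonicity of $Z$ then yields, for every $t'$ in the slot, a valuation $\clockval_l\in z_l$ with $\clockval_l(\gclock)=t'$, and the definition of $z_l$ gives the witnessing timed computation. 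If you want to salvage an inductive argument, you would have to strengthen the hypothesis to carry the full zone $z_i$ (not just its $\gclock$-projection) through the induction, at which point you have essentially reconstructed the paper's argument; the integrality-of-DBM-entries step is the irreplaceable ingredient either way.
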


The following lemma proves one direction of \cref{thm:dtn}.
\begin{lemma}\label{lbl:dtnsoundness}
    Consider a trace $\trace = (\delta_0,\sigma_0)\ldots (\delta_{l-1},\sigma_{l-1})
    \in \Lg(\da)$.
    Let $I$ be the unique interval of the form $[k,k]$ or $(k,k+1)$ with $k \in \Nats$ that contains $\delta_0+\ldots+\delta_{l-1}$.
    \begin{enumerate}[topsep=1pt]
        \item\label{lbl:dtnsoundness:claim1} For all $t' \in I$,
        there exists $n \in \Nats$, and a computation $\gcomputation$         
        of $\network{n}$ 
        such that $\proj{\traceof{\gcomputation}}{1} = (\delta_0',\sigma_0)\ldots (\delta_{l-1}',\sigma_{l-1})$ for some $\delta_i'\geq 0$, and $\totaltime(\proj{\gcomputation}{1})=t'$.        
        \item\label{lbl:dtnsoundness:claim2} $\trace \in \proj{\Lg(A^\infty)}{1}$.
    \end{enumerate}
\end{lemma}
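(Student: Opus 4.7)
The plan is to realize the abstract trace $\trace$ from $\Lg(\da)$ as an actual timed computation of a single process, and then manufacture auxiliary ``witness'' processes that satisfy the location guards used along the way; claim~\ref{lbl:dtnsoundness:claim2} will follow immediately from claim~\ref{lbl:dtnsoundness:claim1}.

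First, I would lift the trace $\trace$ to a path $\regionpath_1 = (\loc_0,\region_0) \xrightarrow{\sigma_0} \cdots \xrightarrow{\sigma_{l-1}} (\loc_l,\region_l)$ in $\DRA{\TA}$ that witnesses $\trace \in \Lg(\da)$. Because of the loopback step (\cref{line:loopback} of \cref{algo:dtnregionautomaton}), this path may secretly ``wrap around'' from layer~$\indexofsecondrepeatinglayer{-}1$ to layer~$\indexoffirstrepeatinglayer$ some number of times; I would first unfold these loopbacks using \cref{lemma:slots} (the $\newslot{\cdot}{k}$ shifting operation) to obtain an equivalent path $\regionpath_1^\star$ in the ordinary region automaton $\ra{\MFbound}{\UG{\TA}}$ whose final slot is exactly $I$. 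Once we have such a path, \cref{lbl:taglobalClockLemma} applied to $\regionpath_1^\star$ and to the chosen $t' \in I$ yields a concrete timed computation $\lcomputation_1$ of $\UG{\TA}$ following $\regionpath_1^\star$ and ending at global time~$t'$. This will be our main process, process~$1$, and by construction its projected trace matches $\trace$ up to the individual delays (the labels $\sigma_i$ and their order are preserved), and its total time is~$t'$.

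Next, for each discrete transition $(\loc_i, \region_i) \xrightarrow{\sigma_i} (\loc_{i+1}, \region_{i+1})$ along $\regionpath_1^\star$ whose underlying \gta{} transition carries a nontrivial location guard $\locguard_i$, I need a witness process currently located at $\locguard_i$ at the global time $\tau_i$ at which process~$1$ fires $\sigma_i$. By construction of the layer containing $(\loc_i,\region_i)$ (Rule~2 of \cref{algo:dtnregionautomaton}), there exists a region state $(\locguard_i,\region_{\locguard_i})$ in the same layer. This state is reachable in $\DRA{\TA}$ from the initial state, so, after the same unfolding of loopbacks, there is a region-automaton computation in $\ra{\MFbound}{\UG{\TA}}$ that ends at $(\locguard_i,\region')$ for some~$\region'$ obtained from $\region_{\locguard_i}$ by shifting its slot to the slot of~$\tau_i$. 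A second invocation of \cref{lbl:taglobalClockLemma} with target time $\tau_i$ then yields a concrete timed computation for this witness. Of course, this witness transition itself may use location guards, which would require sub-witnesses; I would handle this by well-founded induction on the order in which region states are added by \cref{algo:dtnregionautomaton} within a layer (or equivalently on the number of guarded transitions appearing in the candidate path). Each witness appends a bounded number of further witnesses, so the whole recursive construction produces a finite family of processes; let $n$ be their total number plus one.

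Finally, I would compose process~$1$ and all witness processes into a single computation $\gcomputation$ of $\network{n}$. Because each witness was scheduled to reach its required guard location at exactly the global time needed, every discrete transition of process~$1$ is enabled when fired. The projection $\proj{\gcomputation}{1}$ equals $\lcomputation_1$ (viewed as a computation of $\UG{\TA}$), so $\proj{\traceof{\gcomputation}}{1}$ has the form $(\delta_0',\sigma_0)\cdots(\delta_{l-1}',\sigma_{l-1})$ with $\sum \delta_i' = t'$, proving claim~\ref{lbl:dtnsoundness:claim1}; choosing any $t' \in I$ and reading off $\proj{\traceof{\gcomputation}}{1}$ then gives claim~\ref{lbl:dtnsoundness:claim2}.

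The main obstacle will be bookkeeping around the loopback at \cref{line:loopback}: we must consistently translate between ``slots in $\DRA{\TA}$'' (which live within $[0,\MFbound(\gclock)]$) and ``true'' slots (which can be arbitrarily large), and we must show that the shifted witness computations exist and can be realized concurrently without their clock resets and invariants conflicting. The equality $\eliminate{\region'}{\gclock} = \eliminate{\region}{\gclock}$ from \cref{lemma:slots}, together with the fact that $\gclock$ is never reset and never appears in guards or invariants of $\UG{\TA}$, is exactly what makes this translation sound, and \cref{lbl:taglobalClockLemma} is what guarantees that any desired target time in a slot is achievable.
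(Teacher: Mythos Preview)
Your approach is essentially the paper's: unfold the loopback (the paper names this operation $\dtnfix{\cdot}$), apply \cref{lbl:taglobalClockLemma} to realize process~$1$ concretely, produce witnesses for each nontrivial guard by using that Rule~2 only fires after the guard location is already present in the layer, and compose. The well-founded order you identify---the order in which \cref{algo:dtnregionautomaton} adds transitions---is exactly the paper's $\algpathindex{\cdot}$.

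Two points deserve more care. First, your derivation of claim~\ref{lbl:dtnsoundness:claim2} from claim~\ref{lbl:dtnsoundness:claim1} is not quite right: claim~\ref{lbl:dtnsoundness:claim1} only guarantees a trace with the same \emph{labels} $\sigma_i$ and total time~$t'$, not the original delays $\delta_i$, so ``reading off $\proj{\traceof{\gcomputation}}{1}$'' does not recover~$\trace$. The fix (which the paper uses) is to let process~$1$ follow the \emph{given} $\da$-computation verbatim, and invoke claim~\ref{lbl:dtnsoundness:claim1} only on proper prefixes to manufacture witnesses at the exact times $\tau_i$ needed. Second, once a witness reaches $\locguard_i$ at time $\tau_i$, you still need it to survive (without timelocking the network) until process~$1$ finishes at time~$t'$; this is where \cref{assumption:nonblocking-2} is used, and you should invoke it explicitly. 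Finally, your parenthetical alternative order ``number of guarded transitions in the candidate path'' is \emph{not} equivalent to the addition order and need not decrease along the recursion---drop it and stick with the index.
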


The following lemma establishes the inclusion in the other direction. Given a computation $\gcomputation$ in $\network{n}$, we build a timed computation
in $\da$ on the same trace. Because the total time delay of $\gcomputation$ can be larger than the bound $2^{N_\TA}$, we need to carefully calculate the slot in which they will end when projected to $\da$.

\begin{lemma}  
    \label{lemma:completeness}
    For any computation $\gcomputation$ of $\network{n}$ with $n \in \Nats$,
    $\proj{\traceof{\gcomputation}}{1} \in \Lg(\da)$.
\end{lemma}

\sj{do we want to say anything about the proofs of these two lemmas?}

\smartpar{Deciding the Reachability PMCP}  
It follows from \cref{algo:dtnregionautomaton} that 
the reachability case can be decided in exponential space.
This basically consists of running the main loop of \cref{algo:dtnregionautomaton} without storing the whole list of all $\alglayernodes{i}$, but only the last one. 
The loop needs to be repeated up to $2^{N_\TA+1}$ times (or until the target label $\sigma_0$ is found).
\begin{theorem}
    \label{thm:reach-pmcp}
    The reachability PMCP for \DTN{}s is decidable in \EXPSPACE{}.
\end{theorem}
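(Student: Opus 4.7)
By \cref{thm:dtn}, it suffices to decide reachability of the symbol~$\sigma_0$ in the summary automaton $\da$, and since $\da$ is derived from $\DRA{\TA}$ by adding guards and resets but without changing transition labels, this is equivalent to deciding whether some transition labeled $\sigma_0$ is reachable in $\DRA{\TA}$. By construction of \cref{algo:dtnregionautomaton}, such a transition exists iff there is some iteration index $l$ and some region state $(\loc,\region) \in \alglayernodes{l}$ from which Rule~2 can fire a transition with label $\sigma_0$ (that is, with some clock guard satisfied by $\region$ and with location guard either trivial or with $\alglayernodes{l}$ containing a region state at the guard location). Hence the target is a local check on a single layer.

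The plan is therefore to simulate \cref{algo:dtnregionautomaton} \emph{layer by layer}, storing only the current layer $\alglayernodes{l}$ together with an iteration counter, and, at the end of each iteration, testing whether the described $\sigma_0$-transition is enabled from some state of $\alglayernodes{l}$. The algorithm accepts as soon as this test succeeds, and rejects after running a sufficient number of iterations without success.

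For the space bound: each region state is described by a location and an $\MFbound$-region on $\clocks \cup \{\gclock\}$, so it takes polynomial space in $|\TA|$; a layer contains at most $N_\TA$ region states, which is exponential in $|\TA|$, so one layer fits in exponential space. Computing $\alglayernodes{l+1}$ from $\alglayernodes{l}$ by repeatedly applying Rules~1 and~2 to stabilization, together with the one-step delay closure, only needs access to the current layer and to $\TA$ itself; the stabilization terminates after at most $N_\TA$ additions since each application of a rule adds a new region state to $\alglayernodes{l}$. The local $\sigma_0$-check at each iteration likewise inspects only $\alglayernodes{l}$ and~$\TA$. Thus the per-iteration work uses exponential space.

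The remaining question is how many iterations suffice, since we do not store past layers and therefore cannot implement the algorithm's termination test $\alglayernodes{\indexoffirstrepeatinglayer} \approx \alglayernodes{\indexofsecondrepeatinglayer}$ directly. Here we invoke \cref{lemma:bounded-slots}, which guarantees that \cref{algo:dtnregionautomaton} performs at most $2^{N_\TA+1}$ iterations before the repetition condition must be met (or before all bounded slots are exhausted). Hence we can safely use $2^{N_\TA+1}$ as a hard bound, stored in $N_\TA+1$ bits, which again fits in exponential space. If a reachable $\sigma_0$-transition exists, \cref{thm:dtn} and the correctness of \cref{algo:dtnregionautomaton} guarantee that it will already appear in one of these iterations; otherwise we correctly reject. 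The main subtlety is precisely this replacement of the algorithm's dynamic termination test by the static bound from \cref{lemma:bounded-slots}, so that we never need to store more than one layer at a time; everything else is a straightforward budget count.
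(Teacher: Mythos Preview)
Your proposal is correct and follows essentially the same approach as the paper: simulate \cref{algo:dtnregionautomaton} one layer at a time, storing only the current $\alglayernodes{l}$ and a binary counter bounded by $2^{N_\TA+1}$ (via \cref{lemma:bounded-slots}) in place of the dynamic termination test, and check for a $\sigma_0$-labeled Rule~2 transition at each iteration. One small imprecision: a single layer may contain more than $N_\TA$ region states, because even with the slot of $\gclock$ fixed, the fractional ordering of $\gclock$ against the other clocks further refines the $\Mbound$-regions on~$\clocks$; the paper handles this by observing that within a layer the region states are in bijection with those for $\gclock \in [0,0]$ or $\gclock \in (0,1)$, which is still only exponential in~$|\TA|$, so your space bound and overall argument are unaffected.
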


\smartpar{Local Properties Involving Several Processes}
The algorithm described above can be extended to compute
$\proj{\Lg(\network{\infty})}{[1,a]}$.
We define the \emph{product} of $k$ timed automata 
$\TA_i$, written $\otimes_{1\leq i \leq k}\TA_i$, as the standard
product of timed automata (see \eg{} \cite{BY03}) applied to $\TA_i$ after replacing each label $\sigma$
appearing in $\TA_i$ by $(i,\sigma)$.

\begin{lemma}
    \label{lemma:da-I}
    Given \gta $\TA$, and interval $I=[1,a]$, let $\da$ be the summary automaton computed as above. Then 
    $\proj{\Lg(\network{\infty})}{I} = \Lg(\otimes_{1\leq i \leq a}\da)$.
\end{lemma}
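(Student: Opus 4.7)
The plan is to show equality by proving two inclusions, leveraging \cref{thm:dtn} for the single-process case together with the $\parallel$ composition of network computations introduced earlier in the paper.

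For the inclusion $\proj{\Lg(\network{\infty})}{I} \subseteq \Lg(\otimes_{1 \leq i \leq a} \da)$, I would start with an arbitrary computation $\gcomputation$ of some $\network{n}$ with $n \geq a$. By the symmetry of the network, \cref{lemma:completeness} applies not only to process~$1$ but to every process $i \in [1,a]$, yielding $\proj{\traceof{\gcomputation}}{i} \in \Lg(\da)$, witnessed by some timed run $\rho_i$ of~$\da$. Because the trace projection preserves total elapsed time, the global time at which the $j$-th discrete transition of $\rho_i$ fires matches the global time at which the corresponding process~$i$ transition occurs in $\gcomputation$. I would then build a run of $\otimes_{1 \leq i \leq a} \da$ by interleaving the discrete transitions of the $\rho_i$ in the order of their global times in $\gcomputation$, synchronizing the delays between consecutive events across all $a$ components. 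The resulting trace is exactly $\proj{\traceof{\gcomputation}}{[1,a]}$.

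For the converse inclusion $\Lg(\otimes_{1 \leq i \leq a} \da) \subseteq \proj{\Lg(\network{\infty})}{I}$, let $tt \in \Lg(\otimes_{1 \leq i \leq a} \da)$, and set $tt_i \coloneqq \proj{tt}{i} \in \Lg(\da)$ for each $i \in [1,a]$. By part~(2) of \cref{lbl:dtnsoundness}, there exist $n_i \in \Nats$ and a computation $\gcomputation_i$ of $\network{n_i}$ such that $\proj{\traceof{\gcomputation_i}}{1} = tt_i$. Since delays are synchronized in the product automaton, all $tt_i$ share the same total time, hence $\totaltime(\gcomputation_i) = \totaltime(tt)$ for every~$i$, making the $\parallel$ composition well-defined. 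I would then form $\gcomputation \coloneqq \gcomputation_1 \parallel \cdots \parallel \gcomputation_a$, which is a computation of $\TA^{n_1 + \cdots + n_a}$, and apply a symmetric reindexing of processes that places the main process of each $\gcomputation_i$ into position~$i$. Because disjunctive guards are monotone in the set of available processes, the auxiliary processes contributed by each $\gcomputation_j$ continue to enable exactly the same location guards as in $\gcomputation_j$, so the validity of $\gcomputation$ is preserved.

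The main technical difficulty is verifying that the global-time alignment yields precisely the product trace $tt$, with the correct interleaving of discrete events. The key fact is that $\proj{\traceof{\gcomputation_i}}{1} = tt_i$ pins down the global times of process~$i$'s discrete transitions: the $j$-th such transition in $\gcomputation_i$ occurs at the cumulative time of process~$i$'s $j$-th transition in~$tt$. Since the $\parallel$ operator preserves each component's local timing and projection, the discrete transitions of the composed $\gcomputation$, when ordered by global time and restricted to the first $a$ processes, recover exactly~$tt$. A subsidiary concern, readily dispatched by the monotonicity of disjunctive guards, is that adding auxiliary processes from the other $\gcomputation_j$ does not disable any transition of the process originating from $\gcomputation_i$.
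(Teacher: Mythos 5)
Your proposal is correct and follows essentially the same route as the paper's proof: both directions reduce to the single-process result (\cref{thm:dtn}, via \cref{lemma:completeness} and \cref{lbl:dtnsoundness}) applied per component, with the forward inclusion running each witness in its own copy of the product (possible because the index-relabeling prevents synchronization between copies) and the converse using the $\parallel$ composition of the per-process network computations. Your added remarks on global-time alignment and on reindexing the main processes into positions $1,\dots,a$ are details the paper leaves implicit, but they do not change the argument.
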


\begin{wrapfigure}[7]{r}{0.4\textwidth}
    \centering
		\vspace{-1em}
    \begin{tikzpicture}[pta, font=\footnotesize]
        \node[location, initial] at (0, 0)  (q0) {$\locinit$};
        \node[location] at (1.9, 0) (q1) {$\loc_1$};
        \node[location] at (3, 0) (q2) {$\loc_2$};
        \node[invariant, above=of q1] {$\clock \leq 0$};

        \node[above of=q0, locguard,yshift=2em, xshift=0.7em] {$\loc_1$};

        \path (q0) edge[loop above] node[left,align=center]{$\clock=1$\\$\clock \leftarrow 0$} (q0);
        \path (q0) edge[] node[above]{$\clock \leftarrow 0$} (q1);
        \path (q1) edge[] (q2);
    \end{tikzpicture}
    \caption{An example of \gta for which liveness is not preserved by our abstraction.}
    \label{fig:counterexample-liveness}
    \vspace{-0.5em}
\end{wrapfigure}
\smartpar{Limitations} Liveness properties (\eg{} checking whether a transition can be taken an infinite number of times) are not preserved by our abstraction; since an infinite loop in the DTN region automaton may not correspond to a concrete computation in any~$\network{n}$.
In fact, consider the \gta{} in \cref{fig:counterexample-liveness}.
While there is an infinite loop on~$\locinit$ in the DTN region automaton, no concrete execution takes the loop on~$\locinit$ indefinitely, as each firing of this loop needs one more process to visit~$\loc_1$, and then to leave it forever, due to the invariant $\clock \leq 0$.

\section{Timed Lossy Broadcast Networks}
\label{section:ltba}
Systems with lossy broadcast (a.k.a.\ ``reconfigurable broadcast networks'', where the underlying network topology might change arbitrarily at runtime)
have received attention in the parameterized verification literature~\cite{DelzannoSZ10}.
In the setting with finite-state processes, lossy broadcast is known to be at least as powerful as disjunctive guards, but it is unknown if it is strictly more powerful~\cite[Section~6]{DBLP:journals/corr/abs-2109-08315}.
We show that in our \emph{timed} setting the two models are equally powerful, \ie{} they simulate each other.
Details are provided in \cref{lbl:fig-lossy-dtn,lbl:fig-dtn-lossy}, and the corresponding proofs can be found in \cref{sec:proof-ltba}.

\smartpar{Lossy Broadcast Networks}
Let $\Sigma$ be a set of labels.
A lossy broadcast timed automaton (LBTA) $B$ is a tuple $(\Loc,\locinit,\clocks,\Sigma,\Synclabels,\Transitions,\invariant)$
where $\Loc,\locinit,\clocks,\invariant$ are as for TAs, and a transition is of the form $(\loc, \guard, \resetclocks,\tatranlabel,\synclabel,\loc')$, where
$\synclabel \in \{ a!!, a?? \mid a \in \Synclabels\}$.
The \emph{synchronization label} $\synclabel$ is used for defining global transitions.
A transition with $\synclabel = a!!$ is called a \emph{sending transition}, and a transition with $\synclabel = a??$ is called a \emph{receiving transition}.

We also make \cref{assumption:nonblocking-2} and \cref{assumption:unique-sigma} for~LTBAs. 
\add{The former means that the LBTA is timelock-free when all receiving transitions are removed.}
The semantics of a network of LBTAs is a timed transition system defined similarly as for \ngtas, except
for \emph{discrete transitions} which now induce a sequence of local transitions separated by 0 delays, as follows.
Given $n>0$, configurations of $B^n$ are defined as for DTNs. %
Let $\nConfig=\dtnconfig$ be a configuration of $B^{n}$.
Consider indices $1\leq i \leq n$ and $J \subseteq \{1,\ldots, n\}\setminus\{i\}$,
and labels $\sigma, \sigma_j \in \Sigma$ for $j\in J$, such that
\begin{ienumerate}%
    \item $\transition{(\loc_i,\clockval_i)} {a!!,\tatranlabel} {(\loc_i',\clockval_i')}$ is a sending transition of $B$, and
    \item for all $j \in J$: $\transition{(\loc_j,\clockval_j)} {a??, \tatranlabel_j} {(\loc_j',\clockval_j')}$ is a receiving transition of $B$.
\end{ienumerate}
Then the timed transition system of $B^n$ contains the transition sequence
$\nConfig \xrightarrow{(0,(i, \sigma))} \nConfig' \xrightarrow{(0, (j_1, \sigma_{j_1}))} \nConfig_1' \xrightarrow{(0, (j_2, \sigma_{j_2}))}\ldots \xrightarrow{(0, (j_m, \sigma_{j_m}))}\nConfig_m'$
for all possible sequences $j_1,\ldots,j_m$ where $J = \{j_1,\ldots,j_m\}$.
Non-zero delays only occur outside of these chains of 0-delay transitions.
For a given LBTA $B$, the family of systems $B^\infty$ is called a \emph{lossy broadcast timed network} (LBTN).

\smartpar{Simulating LBTN by DTN (and vice versa)}
The following theorem states that LBTAs and \gta{}s are inter-reducible.
\begin{theorem}
    \label{thm:ltba-equivalent}
    For all \gta $\TA$, there exists an LBTA $B$ s.t.\ $\forall k {\geq} 1$, $\proj{\calL(\network{\infty})}{[1,k]} \equiv \proj{\calL(B^\infty)}{[1,k]}$.
    For all LBTA $B$, there exists a \gta $\TA$ s.t.\
    $\forall k {\geq} 1, \proj{\calL(\network{\infty})}{[1,k]} = \proj{\calL(B^\infty)}{[1,k]}$.
\end{theorem}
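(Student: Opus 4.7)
The plan is to prove the two directions by explicit simulations, with clock invariants playing an essential role in the simulation of broadcast by disjunctive guards.

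For the first direction, given a \gta{} $\TA$, I would build an LBTA $B$ with the same locations, clocks, and invariants. For each location $\locguard$ of $\TA$ that appears as a non-trivial location guard, I introduce a fresh synchronization label $a_\locguard$ and add a self-loop $(\locguard, \top, \emptyset, \epsilon, a_\locguard!!, \locguard)$ in $B$. Every transition $(\loc, g, r, \sigma, \locguard, \loc')$ of $\TA$ with non-trivial $\locguard$ becomes a receive transition $(\loc, g, r, \sigma, a_\locguard??, \loc')$ in $B$; transitions with trivial guard get unique fresh synchronization labels so that they act as uncoordinated actions. Whenever $\TA^n$ fires a $\locguard$-guarded transition in process~$i$, some process $j \neq i$ must be at $\locguard$; in $B^n$ we mimic this step by letting $j$ emit the silent broadcast $a_\locguard!!$, invisible in the trace because its label is $\epsilon$, with $i$ as the lone receiver, which the lossy semantics permits. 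Symmetrically, any computation of $B^n$ is faithfully represented by a computation of $\TA^n$: silent broadcasts disappear from the trace, and each non-silent receive corresponds to a disjunctive guard satisfied by the broadcasting process.

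For the second (and main) direction, given an LBTA $B$, I would simulate each broadcast by a zero-duration advertising phase. For every sending transition $\gtatran_s = (\loc_i, g, r, \sigma, a!!, \loc_i')$ of $B$, introduce in $\TA$ a fresh location $\loc^{\mathsf{adv}}_{\gtatran_s}$ carrying the invariant $c \leq 0$, where $c$ is a fresh clock. Replace $\gtatran_s$ by a pair of transitions: $(\loc_i, g, r \cup \{c\}, \sigma, \top, \loc^{\mathsf{adv}}_{\gtatran_s})$ entering the advertising phase, and $(\loc^{\mathsf{adv}}_{\gtatran_s}, \top, \emptyset, \epsilon, \top, \loc_i')$ silently closing it. For each receiving transition $(\loc_j, g', r', \sigma_j, a??, \loc_j')$ of $B$ and each sender $\gtatran_s$ with synchronization label $a$, add to $\TA$ the disjunctive-guarded transition $(\loc_j, g', r', \sigma_j, \loc^{\mathsf{adv}}_{\gtatran_s}, \loc_j')$. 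The invariant $c \leq 0$ forces no time to elapse while the sender is at $\loc^{\mathsf{adv}}_{\gtatran_s}$, so the receivers must fire in the same zero-delay burst that the serialized broadcast trace prescribes, and the terminating $\epsilon$-transition deactivates the guard precisely when the broadcast is deemed complete.

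The main obstacles lie in the trace-level bookkeeping required to establish a bisimulation between projected computations. In direction~1, one must verify that every broadcast sequence in $B^n$ stems from a $\locguard$-triggered disjunctive guard in $\TA^n$, and conversely that the inserted silent broadcasts in $B^n$ leave the observable trace unchanged. In direction~2, one must argue that any computation of $\TA^n$ entering an advertising location can be interpreted as a broadcast with a specific, possibly empty, set of receivers fired in the very order in which they appear, matching the flat trace format of $B^n$; conversely, given a computation of $B^n$, the corresponding $\TA^n$ computation sequentially fires each receiver inside the zero-duration window with the sender still at $\loc^{\mathsf{adv}}_{\gtatran_s}$. A recurring and decisive technical point is that the zero-duration invariant is not a convenience but a semantic necessity: without invariants, the advertising location could persist through positive delays, allowing receivers to respond at distinct global times and breaking broadcast atomicity, which is why the equivalence genuinely requires clock invariants and does not carry over to the untimed or invariant-free timed settings.
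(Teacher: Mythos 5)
Your proposal matches the paper's proof essentially step for step: direction one uses the same silent $\epsilon$-labelled sending self-loop on each guard location with guarded transitions turned into receives, and direction two uses the same zero-duration ``advertising'' location with invariant $\clock_\send = 0$ guarding the receivers' disjunctive transitions, differing only in trivial bookkeeping (where the original resets are placed, and fresh labels versus $\Synclabels = \Loc$). Your closing observation that the zero-duration invariant is what enforces broadcast atomicity is exactly the point the paper makes about why invariants are essential for this equivalence.
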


\begin{proof}[Sketch]
Simulation of disjunctive guards by lossy broadcast is simple: a transition from $\loc$ to $\loc'$ with location guard $\locguard$ is simulated in lossy broadcast by the sender taking a self-loop transition on $\locguard$, and the receiver having a synchronizing  transition from $\loc$ to $\loc'$.

The other direction is where we need the power of clock invariants: to simulate a lossy broadcast where the sender moves from $\loc$ to $\loc'$ and a receiver moves from $\loc_\rcv$ to $\loc_\rcv'$, in the DTN we first let the sender move to an auxiliary location $\loc_\sigma$ (from which it can later move on to $\loc'$), and have a transition from $\loc_\rcv$ to $\loc_\rcv'$ that is guarded with $\loc_\sigma$. To ensure that no time passes between the steps of sender and receiver, we add an auxiliary clock $\clock_\send$ that is reset when moving into $\loc_\sigma$, and $\loc_\sigma$ has clock invariant $\clock_\send = 0$. 

In both directions, auxiliary transitions that are only needed for the simulation are labeled with fresh symbols in $\Sigma^-$ such that they do not appear in the language of the system.
\end{proof}

Because the reduction to \DTN{}s is in linear-time, we get the following.

\begin{corollary}
    The reachability PMCP for LBTN is decidable in \EXPSPACE{}.
\end{corollary}

\section{Synchronizing Timed Networks and Timed Petri Nets}
\label{sec:stn}

We first introduce synchronizing timed networks.
Our definitions follow \cite{AJ03,DBLP:conf/lics/AbdullaDM04}, except that their model considers systems with a controller process, whereas we assume (like in our previous models) that all processes execute the same automaton.

\smartpar{Synchronizing Timed Network}
A \emph{synchronizing timed automaton} (STA) $\STA$ is a tuple $\STAdefn$ where
$\Loc$, $\locinit$, $\clocks$, $\invariant$ are as for TAs, and
$\STARules$ is a finite set of \emph{rules}, where each rule $\STArule \in \STARules$ is of the form \sj{make it a set instead of a sequence?}
        $
        \STAruledefn 
        $
        for some $m \in \Nats$ and with $(\loc_i,\guard_{\STArule,i},\STAresetclocks{\STArule}{i},\sigma_{\STArule,i},\loc'_i) \in \Loc \times \clockcons \times 2^\clocks \times  \Sigma \times \Loc$ for $1 \leq i \leq m$.

The semantics of a \emph{network of STAs} (NSTA) is defined as for~\ngtas, except for \emph{discrete transitions}, which now synchronize a subset of all processes in the following way:
Let $\STArule \in \STARules$ be a rule (of the form described above) and $\nConfig=\dtnconfig$ a configuration of $\STA^n$. Assume
\begin{ienumerate}
\item there exists an injection $h : \{1, \dots, \STArulelength\} \to \{1 \ldots \networksize \}$  such that for each $1 \leq i \leq \STArulelength$,
    \add{$\loc_{h(i)} = \loc_{\STArule,i}$,}
    $\loc_{h(i)}  \xrightarrow[]{\guard_{\STArule,i},\STAresetclocks{\STArule}{i},\tatranlabel_{\STArule,i}}  \loc'_{h(i)}$ is an element of $\STArule$, $\clockval_{h(i)}\models \guard_{\STArule,i}$ and $\clockval'_{h(i)}=\clockval_{h(i)}[\STAresetclocks{\STArule}{i} \leftarrow 0]$, and 
\item $j \not \in \text{range}(h)$, $\loc'_j = \loc_j$ and $\clockval'_j = \clockval_j$.
\end{ienumerate}
Then the timed transition system of $\STA^n$ contains the transition sequence
$\nConfig \xrightarrow{(0,(h(1), \sigma_{\STArule,1}))} \nConfig_1 \xrightarrow{(0, (h(2), \sigma_{\STArule,2}))}\ldots \xrightarrow{(0, (h(m), \sigma_{\STArule,m}))}\nConfig_m$.
That is, $m$~distinct processes take individual transitions according to the rule without delay, and the configurations of the non-participating processes remain unchanged.

Again, we also make \cref{assumption:nonblocking-2} and \cref{assumption:unique-sigma} for~STAs.
\add{The former means here that $\STA$ is timelock-free when all transitions of rules with $\STArulelength>1$ are removed.}
All other notions follow in the natural way.
\add{Given an STA~$\STA$, the family of systems $\STA^\infty$ is called a \emph{synchronizing timed network}~(STN).}

\begin{theorem}
    \label{thm:sta-equivalent}
    \add{For all \gta $\TA$ with set of locations $\Loc$, there exists an STA $\STA$ with set of locations $\Loc$ such that for every $\loc \in \Loc$: $\loc$ is reachable in $\TA$ iff $\loc$ is reachable in $\STA$.
		For all STA $\STA$ with set of locations $\Loc_\STA$, there exists a \gta $\TA$ with set of locations $\Loc_\TA \supseteq \Loc_\STA$ such that for every $\loc \in \Loc_\STA$: $\loc$ is reachable in $\TA$ iff $\loc$ is reachable in $\STA$.}
\end{theorem}

\begin{proof}[Sketch]
Simulation of disjunctive guards by STAs is simple: a transition from $\loc$ to~$\loc'$ with location guard~$\locguard$ is simulated by a pairwise synchronization, where one process takes a self-loop on~$\locguard$, and the other moves from $\loc$ to~$\loc'$.

Conversely, to simulate a rule $\STArule$ of the STA with $\STArulelength$ participating processes, we add auxiliary locations $\locmid_{\STArule,i}$, for $1 \leq i \leq \STArulelength$, each with a clock invariant (on an additional clock only used for the simulation) that ensures that no time passes during simulation. 
For each element $\loc_{\STArule,i} \xrightarrow[]{\guard_{\STArule,i},\STAresetclocks{\STArule}{i},\tatranlabel_{\STArule,i}} \loc'_{\STArule,i}$ of $\STArule$, we have a transition from $\loc_{\STArule,i}$ to $\locmid_{\STArule,i}$, and from there to $\loc'_{\STArule,i}$. 
A transition to $\locmid_{\STArule,i}$ is guarded with $\locmid_{\STArule,i-1}$ \add{(except when $i=1$)}, and with the clock constraint $\guard_{\STArule,i}$, and all transitions to $\loc_{\STArule,i}'$ are guarded with $\locmid_{\STArule,\STArulelength}$. 
This ensures that any $\loc_{\STArule,i}'$ is reachable through this construction if and only if the global configuration at the beginning would allow the STA to execute rule $\STArule$.
\add{To avoid introducing timelocks, each of the $\locmid_{\STArule,i}$ has an additional transition with a trivial location guard and no clock guard to a new sink location $\locsink$ that does not have an invariant.
I.e., if simulation of a rule is started but cannot be completed (because there are processes in some but not all of the locations $\loc_{\STArule,i}$), then processes can (and have to) move to $\locsink$.}
\LongVersion{%

}
Details are provided in \cref{lbl:fig:TN-DTN}, and a full proof can be found in \Cref{sec:simulation-STA-proof}.

\end{proof}

\begin{corollary}
    The reachability PMCP for STN is decidable in \EXPSPACE{}.
\end{corollary}

Note that the construction in our proof is in general not suitable for language equivalence, \ie{} $\proj{\calL(\network{\infty})}{[1,k]}$ might contain traces that are not in $\proj{\calL(\STA^\infty)}{[1,k]}$.

\textreplace{STNs without location invariants are equivalent to a model of timed Petri nets considered by Abdulla et~al.~\cite{AbdullaACMT18}.
In their work, they left as an open problem whether the universal safety problem, \ie{} whether a given rule/transition can eventually be enabled for any number of processes, is decidable.
This problem can be reduced to the reachability PMCP of~STNs.}
{
Abdulla et al.~\cite{AbdullaACMT18} considered the \emph{universal safety problem} of \emph{timed Petri nets} — that is, whether a given transition can eventually be fired for any number of tokens in the initial place — and solved it for the case where each token has a single clock.
The question whether the problem is decidable for tokens with multiple clocks remained open.
This problem, in the multi-clock setting,
can be reduced to the PCMP of STNs. The reduction is conceptually straightforward and computable in
polynomial time in the size of the input.

}

\begin{corollary}
    The universal safety problem for timed Petri nets with an arbitrary number of clocks is decidable in \EXPSPACE{}.
\end{corollary}

\section{Conclusion}\label{section:conclusion}

In this paper, we solved positively the parameterized model checking problem (PMCP) for finite local trace properties of disjunctive timed networks (\DTN{}s) with invariants.
We also proved that the PMCP for networks that communicate via lossy broadcast can be reduced to the PMCP for \DTN{}s, and is therefore decidable.
Additional results also allowed us to solve positively the open problem from~\cite{AbdullaACMT18} whether the universal safety problem for timed Petri nets with multiple clocks is decidable.
\cref{tab:reachability_trace} gives an overview of our results, compared to existing results for the classes of systems we consider.
\hspace*{-1.1cm} %

\begin{table}[t]
    \caption{Existing and \colorbox{green!35}{new} decidability results for location reachability (Reach) and local trace properties (Trace) for DTN, LBTN, and STN with a single ($\card{\clocks}=1$) or multiple clocks ($\card{\clocks}{>}1$), and with (\textcolor{orange}{\textit{Inv}}) or without invariants (\textcolor{orange}{\textit{\cancel{Inv}}}). Entries with \checkmark$^*$ need to satisfy \cref{assumption:nonblocking-2}.}
    \label{tab:reachability_trace}
    \centering
    \renewcommand{\arraystretch}{1.3} %
    \setlength{\tabcolsep}{1.5pt} %
    \small %
    \begin{tabular}{!{\vrule width 0.8pt} >{\centering\arraybackslash}m{1.1cm} !{\vrule width 0.8pt} 
        >{\centering\arraybackslash}m{1.1cm} | >{\centering\arraybackslash}m{1.1cm} | >{\centering\arraybackslash}m{1.1cm} !{\vrule width 0.8pt} 
        >{\centering\arraybackslash}m{1.1cm} | >{\centering\arraybackslash}m{1.1cm} | >{\centering\arraybackslash}m{1.1cm} !{\vrule width 0.8pt} 
        >{\centering\arraybackslash}m{1.1cm} | >{\centering\arraybackslash}m{1.1cm} | >{\centering\arraybackslash}m{1.1cm} !{\vrule width 0.8pt}}
        
        \Xhline{1pt} 
        \multicolumn{1}{!{\vrule width 0.8pt}c!{\vrule width 0.8pt}}{\cellcolor{gray!15} } & \multicolumn{3}{c!{\vrule width 0.8pt}}{\cellcolor{gray!15} \textbf{DTN}} & \multicolumn{3}{c!{\vrule width 0.8pt}}{\cellcolor{gray!15} \textbf{LBTN}} & \multicolumn{3}{c!{\vrule width 0.8pt}}{\cellcolor{gray!15} \textbf{STN}} \\ 
        \Xhline{1pt} 
        \cellcolor{gray!15} & 
        \cellcolor{gray!8}  $\card{\clocks}{=}1$ & \cellcolor{gray!8}  $\card{\clocks} {>}1$  & \cellcolor{gray!8}  $\card{\clocks} {>}1$ & \cellcolor{gray!8}  $\card{\clocks} =1$ & \cellcolor{gray!8}  $\card{\clocks} {>}1$  & \cellcolor{gray!8}  $\card{\clocks} {>}1$ & \cellcolor{gray!8}  $\card{\clocks} =1$ & \cellcolor{gray!8}  $\card{\clocks} {>}1$  & \cellcolor{gray!8}  $\card{\clocks} {>}1$ \\ 
        \hline
        \cellcolor{gray!15} & \cellcolor{gray!8} \textcolor{orange}{\textit{\cancel{Inv}}} & \cellcolor{gray!8} \textcolor{orange}{\textit{\cancel{Inv}}} & \cellcolor{gray!8} \textcolor{orange}{\textit{Inv}} & \cellcolor{gray!8} \textcolor{orange}{\textit{\cancel{Inv}}} & \cellcolor{gray!8} \textcolor{orange}{\textit{\cancel{Inv}}} & \cellcolor{gray!8} \textcolor{orange}{\textit{Inv}} & \cellcolor{gray!8} \textcolor{orange}{\textit{\cancel{Inv}}} & \cellcolor{gray!8} \textcolor{orange}{\textit{\cancel{Inv}}} & \cellcolor{gray!8} \textcolor{orange}{\textit{Inv}}  \\
        
        \Xhline{0.8pt} 
        \cellcolor{gray!15} \textbf{Reach} & \checkmark \cite{SpalazziS20} & \checkmark \cite{SpalazziS20} & \cellcolor{green!35} \checkmark  & \checkmark \cite{AJ03,ADFL19} & \cellcolor{green!35} \checkmark & \cellcolor{green!35} \checkmark & \checkmark \cite{AJ03} & \cellcolor{green!35} \checkmark  & \cellcolor{green!35} \checkmark \\ 
       
        \hline
        \cellcolor{gray!15} \textbf{Trace} & \checkmark \cite{SpalazziS20} & \checkmark \cite{SpalazziS20} & \cellcolor{green!35} \checkmark$^*$  & \cellcolor{green!35} \checkmark & \cellcolor{green!35} \checkmark & \cellcolor{green!35} \checkmark$^*$  & \cellcolor{gray!35} ? & \cellcolor{gray!35} ? & \cellcolor{gray!35} ?\\
        \Xhline{1pt} 
    \end{tabular}
\end{table}

In addition to the %
results presented here, we believe that our proof techniques can be extended to support timed networks with more powerful communication primitives, and in some cases to networks with controllers.

Future work will include tightening the complexity bounds for the problems considered here, as well as the development of zone-based algorithms that can be more efficient in practice than a direct implementation of the algorithms presented here.

\newpage

	\newcommand{\CCIS}{Communications in Computer and Information Science}
	\newcommand{\ENTCS}{Electronic Notes in Theoretical Computer Science}
	\newcommand{\FAC}{Formal Aspects of Computing}
	\newcommand{\FundInf}{Fundamenta Informaticae}
	\newcommand{\FMSD}{Formal Methods in System Design}
	\newcommand{\IJFCS}{International Journal of Foundations of Computer Science}
	\newcommand{\IJSSE}{International Journal of Secure Software Engineering}
	\newcommand{\IPL}{Information Processing Letters}
	\newcommand{\JAIR}{Journal of Artificial Intelligence Research}
	\newcommand{\JLAP}{Journal of Logic and Algebraic Programming}
	\newcommand{\JLAMP}{Journal of Logical and Algebraic Methods in Programming} %
	\newcommand{\JLC}{Journal of Logic and Computation}
	\newcommand{\LMCS}{Logical Methods in Computer Science}
	\newcommand{\LNCS}{Lecture Notes in Computer Science}
	\newcommand{\RESS}{Reliability Engineering \& System Safety}
	\newcommand{\RTS}{Real-Time Systems}
	\newcommand{\SCP}{Science of Computer Programming}
	\newcommand{\SOSYM}{Software and Systems Modeling ({SoSyM})}
	\newcommand{\STTT}{International Journal on Software Tools for Technology Transfer}
	\newcommand{\TCS}{Theoretical Computer Science}
	\newcommand{\TOPLAS}{{ACM} Transactions on Programming Languages and Systems ({ToPLAS})}
	\newcommand{\ToPNoC}{Transactions on {P}etri Nets and Other Models of Concurrency}
	\newcommand{\TOSEM}{{ACM} Transactions on Software Engineering and Methodology ({ToSEM})}
	\newcommand{\TSE}{{IEEE} Transactions on Software Engineering}

	\bibliographystyle{splncs04}
	\bibliography{main}

\begin{thebibliography}{10}
\providecommand{\url}[1]{\texttt{#1}}
\providecommand{\urlprefix}{URL }
\providecommand{\doi}[1]{https://doi.org/#1}

\bibitem{abdulla_et_al:LIPIcs:2012:3874}
Abdulla, P.A., Atig, M.F., Cederberg, J.: Timed lossy channel systems. In: D'Souza, D., Kavitha, T., Radhakrishnan, J. (eds.) {FSTTCS}. LIPIcs, vol.~18, pp. 374--386. Schloss Dagstuhl - Leibniz-Zentrum für Informatik (2012). \doi{10.4230/LIPICS.FSTTCS.2012.374}

\bibitem{AbdullaACMT18}
Abdulla, P.A., Atig, M.F., Ciobanu, R., Mayr, R., Totzke, P.: Universal safety for timed {P}etri nets is {PSPACE}-complete. In: Schewe, S., Zhang, L. (eds.) {CONCUR}. LIPIcs, vol.~118, pp. 6:1--6:15. Schloss Dagstuhl - Leibniz-Zentrum für Informatik (2018). \doi{10.4230/LIPICS.CONCUR.2018.6}

\bibitem{AD16}
Abdulla, P.A., Delzanno, G.: Parameterized verification. \STTT{}  \textbf{18}(5),  469--473 (2016). \doi{10.1007/s10009-016-0424-3}

\bibitem{ADRST16}
Abdulla, P.A., Delzanno, G., Rezine, O., Sangnier, A., Traverso, R.: Parameterized verification of time-sensitive models of ad hoc network protocols. \TCS{}  \textbf{612},  1--22 (2016). \doi{10.1016/j.tcs.2015.07.048}

\bibitem{DBLP:conf/lics/AbdullaDM04}
Abdulla, P.A., Deneux, J., Mahata, P.: Multi-clock timed networks. In: {LiCS}. pp. 345--354. {IEEE} Computer Society (2004). \doi{10.1109/LICS.2004.1319629}

\bibitem{AJ03}
Abdulla, P.A., Jonsson, B.: Model checking of systems with many identical timed processes. \TCS{}  \textbf{290}(1),  241--264 (2003). \doi{10.1016/S0304-3975(01)00330-9}

\bibitem{DBLP:reference/mc/AbdullaST18}
Abdulla, P.A., Sistla, A.P., Talupur, M.: Model checking parameterized systems. In: Clarke, E.M., Henzinger, T.A., Veith, H., Bloem, R. (eds.) Handbook of Model Checking, pp. 685--725. Springer (2018). \doi{10.1007/978-3-319-10575-8_21}

\bibitem{AD94}
Alur, R., Dill, D.L.: A theory of timed automata. \TCS{}  \textbf{126}(2),  183--235 (Apr 1994). \doi{10.1016/0304-3975(94)90010-8}

\bibitem{DBLP:journals/dc/AminofKRSV18}
Aminof, B., Kotek, T., Rubin, S., Spegni, F., Veith, H.: Parameterized model checking of rendezvous systems. Distributed Computing  \textbf{31}(3),  187--222 (2018). \doi{10.1007/s00446-017-0302-6}

\bibitem{AminofRZS15}
Aminof, B., Rubin, S., Zuleger, F., Spegni, F.: Liveness of parameterized timed networks. In: Halldórsson, M.M., Iwama, K., Kobayashi, N., Speckmann, B. (eds.) {ICALP}, Part~{II}. \LNCS{}, vol.~9135, pp. 375--387. Springer (2015). \doi{10.1007/978-3-662-47666-6_30}

\bibitem{ADFL19}
Andr{\'e}, {\'E}., Delahaye, B., Fournier, P., Lime, D.: Parametric timed broadcast protocols. In: Enea, C., Piskac, R. (eds.) {VMCAI}. \LNCS{}, vol. 11388, pp. 491--512. Springer (2019). \doi{10.1007/978-3-030-11245-5_23}

\bibitem{AEJK24}
Andr{\'e}, {\'E}., Eichler, P., Jacobs, S., Karra, S.L.: Parameterized verification of disjunctive timed networks. In: Dimitrova, R., Lahav, O. (eds.) {VMCAI}. \LNCS{}, vol. 14499, pp. 124--146. Springer (2024). \doi{10.1007/978-3-031-50524-9_6}

\bibitem{DBLP:journals/ipl/AptK86}
Apt, K.R., Kozen, D.: Limits for automatic verification of finite-state concurrent systems. \IPL{}  \textbf{22}(6),  307--309 (1986). \doi{10.1016/0020-0190(86)90071-2}

\bibitem{DBLP:journals/corr/abs-2109-08315}
Balasubramanian, A.R., Weil{-}Kennedy, C.: Reconfigurable broadcast networks and asynchronous shared-memory systems are equivalent. In: Ganty, P., Bresolin, D. (eds.) {GandALF}. {EPTCS}, vol.~346, pp. 18--34 (2021). \doi{10.4204/EPTCS.346.2}

\bibitem{BY03}
Bengtsson, J., Yi, W.: Timed automata: Semantics, algorithms and tools. In: Desel, J., Reisig, W., Rozenberg, G. (eds.) Lectures on Concurrency and {P}etri Nets, Advances in {P}etri Nets. \LNCS{}, vol.~3098, pp. 87--124. Springer (2003). \doi{10.1007/978-3-540-27755-2_3}

\bibitem{BF13}
Bertrand, N., Fournier, P.: Parameterized verification of many identical probabilistic timed processes. In: Seth, A., Vishnoi, N.K. (eds.) {FSTTCS}. LIPIcs, vol.~24, pp. 501--513. Schloss Dagstuhl -- Leibniz-Zentrum für Informatik (2013). \doi{10.4230/LIPIcs.FSTTCS.2013.501}

\bibitem{DBLP:series/synthesis/2015Bloem}
Bloem, R., Jacobs, S., Khalimov, A., Konnov, I., Rubin, S., Veith, H., Widder, J.: Decidability of Parameterized Verification. Synthesis Lectures on Distributed Computing Theory, Morgan {\&} Claypool Publishers (2015). \doi{10.2200/S00658ED1V01Y201508DCT013}

\bibitem{PB-CD-EF-AP-cav2000}
Bouyer, P., Dufourd, C., Fleury, E., Petit, A.: Are timed automata updatable? In: Emerson, E.A., Sistla, A.P. (eds.) {CAV}. \LNCS{}, vol.~1855, pp. 464--479. Springer (2000). \doi{10.1007/10722167_35}

\bibitem{bouyer2004updatable}
Bouyer, P., Dufourd, C., Fleury, E., Petit, A.: Updatable timed automata. \TCS{}  \textbf{321}(2-3),  291--345 (Aug 2004). \doi{10.1016/j.tcs.2004.04.003}

\bibitem{DBLP:reference/mc/BouyerFLMO018}
Bouyer, P., Fahrenberg, U., Larsen, K.G., Markey, N., Ouaknine, J., Worrell, J.: Model checking real-time systems. In: Clarke, E.M., Henzinger, T.A., Veith, H., Bloem, R. (eds.) Handbook of Model Checking, pp. 1001--1046. Springer (2018). \doi{10.1007/978-3-319-10575-8_29}

\bibitem{DelzannoSZ10}
Delzanno, G., Sangnier, A., Zavattaro, G.: Parameterized verification of ad hoc networks. In: Gastin, P., Laroussinie, F. (eds.) {CONCUR}. \LNCS{}, vol.~6269, pp. 313--327. Springer (2010). \doi{10.1007/978-3-642-15375-4_22}

\bibitem{DBLP:conf/cade/EmersonK00}
Emerson, E.A., Kahlon, V.: Reducing model checking of the many to the few. In: McAllester, D.A. (ed.) {CADE}. \LNCS{}, vol.~1831, pp. 236--254. Springer (2000). \doi{10.1007/10721959_19}

\bibitem{EN03}
Emerson, E.A., Namjoshi, K.S.: On reasoning about rings. \IJFCS{}  \textbf{14}(4),  527--550 (2003). \doi{10.1142/S0129054103001881}

\bibitem{DBLP:conf/lics/EsparzaFM99}
Esparza, J., Finkel, A., Mayr, R.: On the verification of broadcast protocols. In: {LiCS}. pp. 352--359. {IEEE} Computer Society (1999). \doi{10.1109/LICS.1999.782630}

\bibitem{DBLP:journals/dc/EsparzaJRW21}
Esparza, J., Jaax, S., Raskin, M.A., Weil-Kennedy, C.: The complexity of verifying population protocols. Distributed Computing  \textbf{34}(2),  133--177 (2021). \doi{10.1007/s00446-021-00390-x}

\bibitem{Ganty_2012}
Ganty, P., Majumdar, R.: Algorithmic verification of asynchronous programs. {ACM} Transactions on Programming Languages and Systems  \textbf{34}(1),  6:1--6:48 (2012). \doi{10.1145/2160910.2160915}

\bibitem{DBLP:journals/jacm/GermanS92}
German, S.M., Sistla, A.P.: Reasoning about systems with many processes. Journal of the {ACM}  \textbf{39}(3),  675--735 (1992). \doi{10.1145/146637.146681}

\bibitem{HNSY-1994}
Henzinger, T.A., Nicollin, X., Sifakis, J., Yovine, S.: Symbolic model checking for real-time systems. Information and Computation  \textbf{111}(2),  193--244 (1994). \doi{10.1006/inco.1994.1045}

\bibitem{HerbreteauKSW11}
Herbreteau, F., Kini, D., Srivathsan, B., Walukiewicz, I.: Using non-convex approximations for efficient analysis of timed automata. In: Chakraborty, S., Kumar, A. (eds.) {FSTTCS}. LIPIcs, vol.~13, pp. 78--89. Schloss Dagstuhl - Leibniz-Zentrum für Informatik (2011). \doi{10.4230/LIPIcs.FSTTCS.2011.78}

\bibitem{SpalazziS20}
Spalazzi, L., Spegni, F.: Parameterized model checking of networks of timed automata with {B}oolean guards. \TCS{}  \textbf{813},  248--269 (2020). \doi{10.1016/j.tcs.2019.12.026}

\bibitem{DBLP:journals/ipl/Suzuki88}
Suzuki, I.: Proving properties of a ring of finite-state machines. \IPL{}  \textbf{28}(4),  213--214 (1988). \doi{10.1016/0020-0190(88)90211-6}

\end{thebibliography}

	\appendix
\newpage
\section{Omitted Formal Definitions}\label{appendix:definitions}

\subsection{Timed automata}\label{appendix:trace}

We give the formal definition of the trace of a timed path $\lcomputation = \lcomputationdefn$,
which is the sequence
of pairs of delays and labels, obtained by removing \textreplace{$\epsilon$-transitions}{transitions with a label from $\Sigma^-$} and
adding the delays of these to the following transition.
Formally, if all $\tatranlabel_j$ are \textreplace{$\epsilon$}{from $\Sigma^-$}, then
the trace is empty. Otherwise,
$\traceof{\lcomputation} = (\delta_0', \tatranlabel_0')\ldots (\delta_{m}', \tatranlabel_m')$ defined as follows.
Let $0\leq i_0<\ldots<i_m\leq l-1$ be the maximal sequence such that \textreplace{$\tatranlabel_{i_j}\neq \epsilon$}{$\tatranlabel_{i_j}\notin \Sigma^-$} for each~$j$.
Then, $\tatranlabel_j' = \tatranlabel_{i_j}$.
Moreover,  $\delta_j' = \sum_{i_{j-1} < k \leq i_j} \delta_k$ with $\delta_{-1} = -1$.

\smartpar{Zones and DBMs~\cite{HNSY-1994,BY03}}
\label{subsection:zones}
A \emph{zone} is a set of clock valuations that are defined by a clock constraint $\ccons$ (as defined in \cref{sec:model}).
In the following we will use the constraint notation and the set (of clock valuations) notation interchangeably. %
We denote by $\setofallzones$ the set of all zones.

Let $\posttime(\zone) = \{\clockval' \mid \exists \clockval \in \zone, \exists \delta \geq 0, \clockval'=\clockval+\delta\}$ denote the \emph{time successors} of~$\zone$, and for a transition~$\tatran=\tatrandefn$, let $\post_\tatran(\zone) =\{\clockval' \mid \exists \clockval \in \zone, \clockval\models \invariant(\loc) \land \guard, \clockval' =\clockval[\resetclocks \leftarrow 0], \clockval' \models  \invariant(\loc')\}$
be the \emph{immediate successors of $\zone$ via~$\tatran$}.

For a set~$\clocks$ of clocks, we denote by $\clocks_0 = \clocks \cup \{\clock_0\}$ the set $\clocks$ extended with a special variable~$\clock_0$ with the constant value~$0$.
For convenience we sometimes write~$0$ to represent the variable~$\clock_0$.

A \emph{difference bound matrix (DBM)} for a set of clocks $\clocks$ is a $|\clocks_0| \times |\clocks_0|$-matrix $\zonedbmdefn$, in which each entry $\zonedbmentry= \zonedbmentrydefn$ represents the constraint $\clock-\clock' \lessdot_{\clock \clock'} \intconstant_{\clock \clock'}$ where $\intconstant_{\clock \clock'} \in \Ints$ and $\lessdot_{\clock \clock'} \in \{<, \leq\}$, or $(\lessdot_{\clock \clock'},\intconstant_{\clock \clock'}) = (<, \infty)$.

It is known that a zone $\zone$  can be \emph{represented by a DBM} \ie{} a valuation $\clockval \in \zone$ iff $\clockval $ satisfies all the clock constraints represented by the~DBM.

We define a {total} ordering on  $(\lessdot_{\clock \clock'} \times \add{\Ints}) \cup \{(<,\infty)\}$ as follows $(\lessdot_{\clock \clock'}, \intconstant)< (<, \infty) $, 
$(\lessdot_{\clock \clock'}, \intconstant) < (\lessdot_{\clock \clock'}', \intconstant')$ if $\intconstant< \intconstant'$ and $(<,\intconstant) < (\leq,\intconstant)$.
A DBM is \emph{canonical} if none of its constraints can be strengthened (\ie{} replacing one or more entries with a strictly smaller entry based on the ordering we defined) without reducing the set of solutions.
Given a DBM $\zonedbmdefn$, we denote by $\valuationsofdbm{\zonedbmdefn}$, the set of valuations that satisfy all the clock constraints in~$\zonedbmdefn$, which is a zone. Any region and zone can be represented by a DBM.
Given two DBMs $\zonedbmdefn,\zonedbmdefnprime$
we write $\zonedbmdefn \leq \zonedbmdefnprime$ if for all $\clock,\clock'$,    
    $\zonedbmentry \leq \zonedbmentryprime$,
if~$\zone=\valuationsofdbm{\zonedbmdefn}$ and $\zone'=\valuationsofdbm{\zonedbmdefnprime}$,
then we have $\zonedbmdefn \leq \zonedbmdefnprime $ iff $\zone \subseteq \zone'$.

\subsection{Networks of TAs}\label{appendix:NTAs}
We formally define \emph{projections} of computations and traces of
\ngtas.
If $\nConfig=\big((\loc_1,\clockval_1),\ldots,(\loc_n,\clockval_n)\big)$ and $\intervalPr = \intervalPrdefn \subseteq \{1, \dots, n\}$, then $\proj{\nConfig}{\intervalPr}$ is the tuple $\big((\loc_{i_1},\clockval_{i_1}),\ldots,(\loc_{i_k},\clockval_{i_k})\big)$, and we extend this notation to computations 
$\proj{\gcomputation}{\intervalPr}$ by keeping only the discrete transitions of $ \intervalPr$ and by adding the delays of the removed discrete transitions to the delay of the following discrete transition of $\intervalPr$. Formally,
given $\intervalPr \subseteq\{1,\ldots,n\}$ and computation 
$\gcomputation =  \gcomputationdefn$,
let $\proj{\gcomputation}{\intervalPr}$ denote the projection of $\gcomputation$ to processes $\intervalPr$,
defined as follows. Let $0 \leq k_0< \ldots <k_m\leq l-1$ be 
a sequence of maximal size such that $i_{k_j} \in \intervalPr$ for all $0\leq j \leq m$.
Then $\proj{\gcomputation}{\intervalPr}
= \proj{\nConfig_0}{\intervalPr} \xrightarrow{(\delta_{k_0}',(i_{k_0},\tatranlabel_{k_0}))}
\proj{\nConfig_{k_0+1}}{\intervalPr}
\ldots \proj{\nConfig_{k_m}}{\intervalPr} \xrightarrow{(\delta_{k_m}', (i_{k_m},\tatranlabel_{k_m}))} \proj{\nConfig_{k_m+1}}{\intervalPr}
$, where each $\delta_{k_j}'=\sum_{k_{j-1}< j'\leq k_j}\delta_{j'}$
with $k_{-1} = -1$.

We formally define the \emph{composition} of computations of \ngtas.
For timed paths $\gcomputation_1$ of $\network{n_1}$ and $\gcomputation_2$ of $\network{n_2}$ with $\totaltime(\gcomputation_1) = \totaltime(\gcomputation_2)$,
we denote by $\gcomputation_1 \parallel \gcomputation_2$ their \emph{composition} into a timed path of $\TA^{n_1+n_2}$ 
whose projection to the first $n_1$ processes is~$\gcomputation_1$, and whose projection to the last $n_2$ processes is~$\gcomputation_2$.
If $\gcomputation_1$ has length~0, then we concatenate the first configuration
of $\gcomputation_1$ to $\gcomputation_2$ by extending it to $n_1+n_2$ dimensions;
and symmetrically if $\gcomputation_2$ has length~$0$.
Otherwise, for $i \in \{1,2\}$, 
let~$\gcomputation_{i} = \nConfig_0^i \xrightarrow{\delta^i_0,\sigma^i_0}
\nConfig_1^i\xrightarrow{\delta^i_1,\sigma^i_0} \ldots$.
Let~$i_0\in\{1,2\}$ be such that $\delta_0^{i_0} \leq \delta_{0}^{3-i_0}$
(and pick $i_0=1$ in case of equality).
Then the first transition of $\gcomputation_1 \parallel \gcomputation_2$
is $\nConfig_0 \xrightarrow{\delta_{0}^{i_0}, \sigma_0^{i_0}} \nConfig_1$, 
where $\nConfig_0$ is obtained by concatenating $\nConfig_0^{i_0}$ and $\nConfig_0^{3-i_0}$,
and $\nConfig_1$ is obtained by $\nConfig_1^{i_0}$ and $\nConfig_0^{3-i_0}$.
We define the rest of the path recursively, after subtracting $\delta_0^{i_0}$ from
the first delay of $\gcomputation_{3-i_0}$.

\section{Omitted Proofs}
\label{app:proofs}

\subsection{Proofs of \cref{ss:slots}}

\begin{proof}[Proof of \cref{lemma:slots}]
    Consider $\region$ represented by a DBM in canonical form.
    We define $\region'_{\gclock,\clock} = \region_{\gclock,\clock} +k$
    and $\region'_{\clock,\gclock} = \region_{\clock,\gclock}  -k$,
    and $\region'_{\clock,\clock'} = \region_{\clock,\clock'}$ for all $\clock,\clock' \in \clocks\setminus\{\gclock\}$.
    In addition, if $\region'_{\gclock,\clock} > \MFbound{\gclock}$, then it is replaced by $\infty$,
    and if  $\region'_{\clock,\gclock} < -\MFbound{\clock}$, it is replaced by $-\infty$.

    Note that the relations between $\clock$ and $\clock'$ remain unchanged for all $\clock,\clock' \in \clocks\setminus\{\gclock\}$ since
    these entries are not modified, and moreover, the canonical form also cannot assign these entries a smaller value.
    This is because $\region_{\clock,\clock'} \leq \region_{\clock,\gclock} + \region_{\gclock,\clock'}$
    by the canonical form of $\region$.
    Moreover, 
    \[
        \region'_{\clock,\gclock} + \region'_{\gclock,\clock'} = \region_{\clock,\gclock} + k + \region_{\gclock,\clock'} - k = 
        \region_{\clock,\gclock} + \region_{\gclock,\clock'},
    \]
    So we also have $\region'_{\clock,\clock'}\leq \region'_{\clock,\gclock} + \region'_{\gclock,\clock'}$.
    So is $\region'$ is also in canonical form, and we have $\eliminate{\region'}{\gclock} = \eliminate{\region}{\gclock}$.
\end{proof}

\subsection{Proofs of \cref{section:correctness}}\label{appendix:omitted-proofs}
\label{appendix:algo-proofs}
\begin{proof}[Proof of \cref{lemma:bounded-slots}]
    Note that each $\alglayernodes{i}$ is a set of {$\MFbound$-\regionstates{}} of $\UG{A}$ (thus, including the global clock). Moreover, each  {$\proj{\alglayernodes{i}}{-\gclock}$} is
    a set of pairs $(\loc,\region)$ where $\loc$ is a location,
    and $\region$ is a $\Mbound$-regions.
    Recall that $N_{\TA}$ is the number of such pairs.
    Every layer $\alglayernodes{i}$ with even index~$i$ and with a bounded slot has a singleton slot.
    So whenever $\indexofsecondrepeatinglayer\geq 2^{N_\TA+1}$, there exists $0\leq i<j \leq \indexofsecondrepeatinglayer$ such that
    $\alglayernodes{i} \approx \alglayernodes{j}$.

    Because $\MFbound(\gclock) = 2^{N_\TA+1}$, if the algorithm stops at some iteration $l$,
    then $\slotof(\alglayernodes{l})$ is indeed bounded.
\end{proof}

\begin{proof}[Proof of \cref{lbl:taglobalClockLemma}]
    Consider a computation  $\regionpath=\regionpathdefn$ of $\ra{\MFbound}{\UG{\TA}}$, and assume, w.l.o.g. that it starts with 
    a delay transition, and delays and discrete transitions alternate.
    By the properties of regions \cite{AD94}, %
    there exists a timed computation $\tapath = \tapathdefn$ that follows $\regionpath$, in the sense that $\clockval_i \in \region_i$ for all $0\leq i \leq l$.
    
    Recall that, by definition, each discrete transition with label $\tatranlabel$ in $\ra{\MFbound}{\UG{\TA}}$ is built from a transition of
    $\UG{\TA}$ with label $\tatranlabel$.

    Let $\tau_i=\epsilon$ if~$\sigma_i=\radelaylabel$, 
    and otherwise, let $\tau_i$ denote the transition $(\loc_i, \guard, \resetclocks,\tatranlabel_i,\loc_{i+1})$
    of~$\UG{\TA}$ corresponding to the $i$-th transition of $\regionpath$.

    For a set of clock valuations~$z$, we define $\posttime(z)=\{\clockval+d \mid d\geq 0, \clockval \in z\}$,
    and for a given transition $\tatran=\tatrandefn$, $\post_{\tatran}(z) = \{ \clockval[\resetclocks \leftarrow 0] \mid \clockval \in z, z \models \guard \}$.
    Consider the sequence $(z)_i$ defined by $z_0 = \regioninit$, and for all $0 \leq i \leq l-1$, 
    \begin{center}
        \begin{itemize}                
            \item $z_{i+1}=\posttime(z_i) \cap \region_{i+1}$, if $\tatran_i=\epsilon$ (delay transition);
            \item $z_{i+1}=\post_{\tatran_i}(z_i) \cap \region_i$, if $\tatran_i \neq \epsilon$ (discrete transition).
        \end{itemize}
    \end{center}
    Each $z_i$ is exactly the set of those valuations that are reachable from the initial valuation        
    by visiting $\region_i$ at step~$i$.
    In other terms, $z_l$ is the \emph{strongest post-condition} of $\regioninit$ via $\regionpath$: it is the set of states reachable from the initial state
    in $\UG{\TA}$, by following the discrete transitions of~$\regionpath$, while staying at each step inside~$r_i$.
    We have that $z_l \neq \emptyset$ since there exists 
    the computation $\tapath$ mentioned above.
    We can have however $z_i \subsetneq r_i$ since unbounded regions can contain unreachable valuations. 

    In this proof, we assume familiarity with DBMs; formal definitions are given in \cref{subsection:zones}.
    The sequence $(z_i)$ can be computed using zones, using difference bound matrices (DBM)~\cite{HNSY-1994,BY03}.
    Let $\zonedbmforproof,\regiondbmforproof$ be DBMs in canonical 
    form representing  $\lastzone, \lastregion$ respectively. 

    Since, by construction, $\lastzone \subseteq \lastregion$, it holds that for every pair of clocks $\clock,\clock'$, $\zonedbmentryforproof{c}{c'} \leq \regiondbmentryforproof{c}{c'}$.
    In particular $\zonedbmentryforproof{\gclock}{0} \leq \regiondbmentryforproof{\gclock}{0}$ and $\zonedbmentryforproof{0}{\gclock} \leq \regiondbmentryforproof{0}{\gclock}$ 
     that is   $ [-\zonedbmentryforproof{0}{\gclock},\zonedbmentryforproof{\gclock}{0}] \subseteq [-\regiondbmentryforproof{0}{\gclock},\regiondbmentryforproof{\gclock}{0}]$. And we have
     $\emptyset \neq [-\zonedbmentryforproof{0}{\gclock},\zonedbmentryforproof{\gclock}{0}]$ since $z_l\neq \emptyset$.
    But in our case the interval on the right hand side is a slot, and because it is bounded it is either a singleton or an interval of the form $(k,k+1)$. It follows that $[-\zonedbmentryforproof{0}{\gclock},\zonedbmentryforproof{\gclock}{0}]$ cannot be strictly smaller since it is nonempty.

    Now, because $\zonedbmforproof$ is in canonical form,  for all $t' \in \mQz$ such that {$t'$} $\leq \zonedbmentryforproof{\gclock}{0}$ and {$-t'$}$ \leq \zonedbmentryforproof{0}{\gclock}$, there exists a valuation $\clockval_l \in \valuationsofdbm{\zonedbmforproof}$ and $\clockval_l(\gclock)=t'$
    \cite[Lemma~1]{HerbreteauKSW11}.
    Because $\clockval_l \in z_l$, and by definition of $z_l$,
    there exists a timed computation {$\tapathdefn$} such that
    $(\loc_0,\clockval_0)=(\locinit,\mathbf{0})$
    and $\clockval_i \in \region_i$ for $i \in \{ 1 \ldots l\}$.
\end{proof}

\begin{proof}[Proof of \cref{lbl:dtnsoundness}]

    Consider the order in which the transitions $((\loc,\region),\sigma, (\loc',\region'))$
    are added to some~$\setEdges_i$ by the algorithm.
    The \emph{index} of transition $((\loc,\region),\sigma, (\loc',\region'))$, denoted by {$\algtranindex{\sigma}$},
    is equal to~$i$ if $((\loc,\region),\sigma, (\loc',\region'))$ is the $i$-th transition added by the algorithm. We define the index of a path $\dtnregionpath=\dtnregionpathdefn{}$
    {in $\DRA{\TA}$},
    denoted by $\algpathindex{\dtnregionpath}$ , as the maximum of the indices of its transitions (we define the maximum as 0 for a path of length~0).
    For a computation~$\gcomputation$ of $\network{n}$, $\algpathindex{\gcomputation}$ is the maximum over the computations of all processes.
    
    Let $\indexoffirstrepeatinglayer< \indexofsecondrepeatinglayer$ be the indices such that the algorithm stopped by condition $\alglayernodes{\indexofsecondrepeatinglayer} \approx \alglayernodes{\indexoffirstrepeatinglayer}$ such that $\slotof(\alglayernodes{i_0})=[k_{i_0},k_{i_0}]$,
    $\slotof(\alglayernodes{\indexofsecondrepeatinglayer})=[k_{l_0},k_{l_0}]$
    for some $k_{i_0},k_{l_0}\geq 0$.

    Notice that in the \dtnregionautomaton{}, delay transitions from $\setRS_{l_0-1}$ that increment the slots lead to regions with smaller slots in line~\ref{line:loopback} in \cref{algo:dtnregionautomaton}. This was done on purpose to obtain a finite construction. However, we are going to prove the lemma by building timed computations whose total delay belongs to the slot of the regions of a given path of the DTN region automaton, and this is only possible
    if slots are nondecreasing along delay transitions.
    We thus show how to \emph{fix} a given path of the \dtnregionautomaton{} by shifting the slots appropriately to make them nondecreasing.
    
    Consider a path $\dtnregionpath = \dtnregionpathdefn$.
    for each~$0 \leq i \leq l$, define $\dtnperiod{i_0,l_0}(i)$ as the number of times
    the prefix path $(\loc_0,r_0) \xRightarrow{\sigma_0} \ldots \xRightarrow{\sigma_{i-1}} (\loc_i,r_i)$ has a transition during which the slot goes from $\slotof(\setRS_{l_0-1})$ back to $\slotof(\alglayernodes{i_0})$.
    Here, we will work with $\MFboundp$-regions where $\MFboundp(\gclock) > \max(\MFbound(\gclock),(k_{l_0} - k_{i_0})\dtnperiod{i_0,l_0}(l) + k_{i_0})$,
    and $\MFboundp(\clock) = \bound{\clock}$ for all $\clock\in \clocks$.
    Because all $\MFbound$-regions of $\dtnregionpath$ have bounded slots
    (\cref{lemma:bounded-slots}), all $r_i$ are $\MFboundp$-regions as well. 
    Having a sufficiently large bound for $\gclock$ in $\MFboundp$ make sure that
    all shifted slots are bounded in \emph{all} steps of $\dtnregionpath$
    (and not just in the first $2l_0$ steps).

    We define $\dtnfix{\dtnregionpath}$ from $\dtnregionpath$
    by replacing each $r_i$ by $\newslot{(r_i)}{(k_{l_0} - k_{i_0})\dtnperiod{i_0,l_0}(i)}$.
    This simply reverts the effect of looping back to the slot of $\alglayernodes{{i_0}}$ and ensures that whenever $\sigma_i = \dtndelay$, 
    the $\MFboundp$-region $r_{i+1}$ is a time successor of~$r_i$, that is,
    $(\loc_i,r_i) \xrightarrow{\radelaylabel} (\loc_{i+1},r_{i+1})$
    in the region automaton $\ra{\MFboundp}{\UG{\TA}}$.
    We extend the definition of $\dtnfix{\cdot}$ to paths of $\da$
    by shifting the slots of the regions that appear in the locations as above
    (without changing the clock valuations or delays).
   
    We first show Claim~\ref{lbl:dtnsoundness:claim1}.
    Given a computation $\dtnregionpath$ of $\da$,
    write $\dtnfix{\dtnregionpath}= 
    ((\loc_0,r_0),v_0) \xrightarrow{\delta_0,\sigma_0} \ldots
    \xrightarrow{\delta_{l-1}, \sigma_{l-1}}
    ((\loc_l,r_l),v_l)$.
    We prove that 
    for all $t' \in \slotof(\region_l)$, there exists $n \in \Nats$ and path $\gcomputation$ of $\network{n}$
    such that %
    $\gclock(\gcomputation) = t'$
    and $\proj{\traceof{\gcomputation}}{1} = (\delta_0',\sigma_0)\ldots (\delta_{l-1}',\sigma_{l-1})$ for some $\delta_0',\ldots,\delta_{l-1}'$.
    We proceed by induction on $\algpathindex{\dtnregionpath}$.%

    This clearly holds for $\algpathindex{\dtnregionpath}=0$
    since the slot is then $[0,0]$, and the region path starts in the initial region state and has length~$0$.

    Assume that $\algpathindex{\dtnregionpath}>0$.
    Consider~$t' \in \slotof(\region_l)$.
    Computation $\dtnregionpath$ defines a path in $\ra{\MFboundp}{\UG{\TA}}$
     to which we apply 
    \cref{lbl:taglobalClockLemma}, which yields a timed computation
    $\lcomputation = \lcomputationdefn$ in $\UG{\TA}$ such that
    $\clockval_i \in \region_i$ for $0\leq i \leq l$, and $\clockval_l(\gclock) =t'$.%

    This computation has the desired trace but we still need to prove that
    it is feasible in some $\network{n}$.
    We are going to build an instance $\network{n}$ in which $\lcomputation$ will be the computation of the first process, where all location guards are satisfied.

    Let~$J$ be the set of steps~$j$ such that $\sigma_j$ is a discrete transition with a nontrivial location guard.
    Consider some $j \in J$,
    and denote its nontrivial location guard by~$\locguard$. Let 
    $t'_j = \clockval_j(\gclock) + \delta_j$ the global time at which the transition is taken.
    By the properties of $\da$, We have $t'_j \in \slotof(r_j)$.
    When the algorithm added the transition $((\loc_j,\region_j),\sigma_j,(\loc_{j+1},\region_{j+1}))$,
    some region state 
    $(\locguard,\region_\locguard)$
    with $\slotof(\region_\locguard) = \slotof(r_j)$
    was already present (by Rule 2 on line~\ref{lbl:rule2}). 
    Therefore, there exists a computation $\gcomputation^j$ in the \dtnregionautomaton{} that ends in state
    $(\locguard, r_\locguard)$
    with
    $\algtranindex{\gcomputation^j} < \algtranindex{((\loc_j,\region_j),\sigma_j,(\loc_{j+1},\region_{j+1}))}$. 
    Therefore there exists a computation
    in $\da$ as well visiting the same locations, and ending in the location $(\locguard,\region_\locguard)$.
    By induction, there exists $n_j$, and a computation
    $\gcomputation^j$ in~$\TA^{n_j}$ with 
    $\proj{\traceof{\gcomputation^j}}{1}$ ending in~$\locguard$ with
    $\gclock(\proj{\gcomputation^j}{1})= t_j'$.

    We arbitrarily extend all~$\gcomputation^j$, for $j \in J$, to global time $\totaltime(\lcomputation)$ (or further),
    which is possible by \cref{assumption:nonblocking-2}.
    We compose $\eliminate{\lcomputation}{\gclock}$ and all the $\gcomputation^j$ into a single one in $\TA^{n+1}$
    (see composition in \cref{appendix:NTAs})
    where the first process follows $\eliminate{\lcomputation}{\gclock}$,
    and the next $n=\sum_{j \in J} n_j$ follow the $\gcomputation^j$. Thus, when the first process takes a transition with a location guard $\locguard$ at time $t'_j$, there is another process at $\locguard$
    precisely at time $t'_j$.   

    Claim~\ref{lbl:dtnsoundness:claim2} is an application of Claim~\ref{lbl:dtnsoundness:claim1}. In fact,
    we can build a computation in which process~1 follows
    $\dtnfix{\dtnregionpath}$ by applying Claim~\ref{lbl:dtnsoundness:claim1} to each prefix where
    a nontrivial location guard is taken.
\end{proof}
\begin{proof}[Proof of \cref{lemma:completeness}]\label{proof:lemma:completeness}
    We prove, by induction on the length of~$\gcomputation$, a slightly stronger statement: for all $n\geq 1$, all computations $\gcomputation$ of $\network{n}$, and all $1\leq k \leq n$,
    $\proj{\traceof{\gcomputation}}{k} \in \Lg(\da)$.

    Let $\indexoffirstrepeatinglayer < \indexofsecondrepeatinglayer$ be the indices such that the algorithm stopped by condition 
    $\alglayernodes{\indexofsecondrepeatinglayer} \approx \alglayernodes{\indexoffirstrepeatinglayer}$ such that $\slotof(\alglayernodes{\indexoffirstrepeatinglayer})=[k_{\indexoffirstrepeatinglayer},k_{\indexoffirstrepeatinglayer}]$,
    $\slotof(\alglayernodes{\indexofsecondrepeatinglayer})=[k_{\indexofsecondrepeatinglayer},k_{\indexofsecondrepeatinglayer}]$
    for some $k_{\indexoffirstrepeatinglayer},k_{\indexofsecondrepeatinglayer}\geq 0$.
    Define function $\reduce(a) = a$ if 
    $a < k_{\indexofsecondrepeatinglayer}$,
    and otherwise $\reduce(a) = k_{\indexoffirstrepeatinglayer} + ((a-k_{\indexoffirstrepeatinglayer}) \mod~(k_{\indexofsecondrepeatinglayer}-k_{\indexoffirstrepeatinglayer}))$.
    This function simply removes the global time spent during the loops 
    between $\alglayernodes{\indexoffirstrepeatinglayer}$ and $\alglayernodes{\indexofsecondrepeatinglayer}$; thus, given any computation
    of $\da$ ending in $((\loc,\region),v)$, we have 
    $\reduce(v(\gclock)) \in \slotof(r)$.
     
    \smallskip
    If~$\gcomputation$ has length 0, then its trace is empty
    and thus belongs to $\Lg(\da)$.
    
    \smallskip
    Assume that the length is greater than 0.
    The proof does not depend on a particular value of~$k$, so we show the statement for $k=1$ (but induction hypotheses will use different~$k$).
    Let~$\rho = (\loc_0,v_0) \xrightarrow{\delta_0,\sigma_0} \ldots \xrightarrow{\delta_{l-1}, \sigma_{l-1}} (\loc_l,v_l)$ be a computation of $\UG{A}$
    on the trace $\traceof{\proj{\pi}{1}}$, and let $\tau_j$ be the transition 
    with label $\sigma_j$ that is taken on the $j$-th discrete transition.
 
    Assume that $\tau_{l-1}$ does not have a location guard.
    Let $\pi'$ be the prefix of~$\pi$ on the trace $(\delta_0,\sigma_0)\ldots (\delta_{l-2}, \sigma_{l-2})$
    (if $l=1$, then $\pi'=(\loc_0,v_0)$). By induction $\proj{\traceof{\pi'}}{1} \in \Lg(\da)$.
    Consider a computation of $\da$ along this trace, that ends in some configuration $((\loc,\region),v)$. By~$\pi$, the invariant of~$q$ holds
    at $v+\delta_{l-1}$, so by Rule 1 of \cref{algo:dtnregionautomaton}, $\da$ contains a region state $(\loc,\region')$ with $\eliminate{\region'}{\gclock} = \regionof{v+\delta_{l-1}}{\Mbound}$ that is reachable from $(\loc,\region)$ via $\epsilon$-transitions. 
    Moreover,
    we know by $\pi$ that the guard of $\tau_{l-1}$ is satisfied at $v+\delta_{l-1}$,
    and by Rule 2, $\da$ has a transition from $(\loc,\region')$ with label~$\sigma_{l-1}$.
    It follows that $\proj{\traceof{\pi}}{1} = \proj{\traceof{\pi'}}{1}\cdot (\delta_{l-1},\sigma_{l-1}) \in \Lg(\da)$.
 
    Assume now that $\tau_{l-1}$ has a nontrivial location guard~$\gamma$.

    Let $\pi'$ be the prefix of~$\pi$ on the trace $(\delta_0,\sigma_0)\ldots (\delta_{l-2}, \sigma_{l-2})$.
    As in the previous case, we have, by induction a computation in~$\da$
    which follows trace $\proj{\traceof{\pi'}}{1}$ and further delays
    $\delta_{l-1}$. 
    At this point, the clock guard of the transition $\tau_{l-1}$
    is also enabled. Following the same notations as above, 
    let $(\loc,\region')$ denote the location of~$\da$ reached after the additional delay $\delta_{l-1}$.
    Notice that $\reduce(\delta_0+\ldots+\delta_{l-1})\in \slotof(\region')$.
    To conclude, we just need to justify that a transition from~$(\loc,\region')$ with label $\sigma_{l-1}$ exists in $\da$.
    By \cref{algo:dtnregionautomaton}, this is the case if, and only if
    some other region state $(\gamma, s)$ exists in~$\da$ such that $\slotof(s) = \slotof(\region')$.
 
    In $\pi$, there exists some process~$k$ which is at location $\gamma$ at time
    $\delta_0+\ldots+\delta_{l-1}$ (since the last transition requires a location guard at $\locguard$). 
    Thus process $k$ is at $\gamma$ at the end of 
    $\pi'$, and remains so after the delay of $\delta_{l-1}$.
    By induction, 
    $\proj{\traceof{\pi'}}{k} \in \Lg(\da)$.
    Thus, there is a computation of $\da$ along this trace, that is, with total delay $\gclock(\pi') = \delta_0+\ldots+\delta_{l-1}$, and ending in a location~$(\gamma,s)$ for some region~$s$.
    Therefore $\reduce(\delta_0+\ldots+\delta_{l-1} )\in \slotof(s)$.
    But we also have $\reduce(\delta_0+\ldots+\delta_{l-1}) \in \slotof(\region')$
    as seen above.
    Because slots are disjoint intervals, it follows $\slotof(\region') = \slotof(s')$, which concludes the proof.
    \end{proof}
 
    \begin{proof}[Proof of \cref{thm:reach-pmcp}]
        By symmetry between processes, a label $\sigma_0$ is reachable in $\network{n}$ by some process,
        iff it is reachable by process 1 in $\network{n}$.
        By \cref{lemma:completeness,lbl:dtnsoundness},
        this is the case iff there exists $0\leq i \leq 2^{N_\TA+1}$
        and a region $\region$ such that $(\loc,\region) \in \alglayernodes{i}$ and $((\loc,\region),\sigma_0,(\loc',\region')) \in \alglayernodes{i}$ for some $(\loc',\region')$. 
        
        Notice also that because $\gclock$ has the same slot in each  $\alglayernodes{i}$, the size of $\alglayernodes{i}$ is bounded by the number of region states for a bound function where $\gclock$ is either equal to~$0$ (for slots of the form $[k,k]$), 
        or is in the interval $(0,1)$ (for slots of the form $(k,k+1)$). This is exponential in $\lvert \clocks \rvert$.
    
        Moreover, each $\alglayernodes{i}$ can be constructed only using 
        $\alglayernodes{i-1}$ and $\TA$, that is, does not require the whole sequence $\alglayernodes{0},\alglayernodes{1}, \ldots, \alglayernodes{i-1}$.
    
        The \EXPSPACE{} algorithm basically executes the main loop of \cref{algo:dtnregionautomaton} but only stores
        $\alglayernodes{i}$ at iteration~$i$. It has a binary counter to count up to $2^{N_\TA+1}$. If it encounters the target label $\sigma_0$, it stops and returns yes. Otherwise, it stops after $2^{N_\TA+1}$ iterations, and returns no.
    \end{proof}

    \begin{proof}[Proof of \cref{lemma:da-I}]
        Consider a trace 
        $\trace = (\delta_0, (i_0, \sigma_0)) \ldots (\delta_{l-1}, (i_{l-1}, \sigma_{l-1}))$ in $\proj{\Lg(\network{\infty})}{I}$
        and let $\trace_j$ denote its projection to process~$j$.
        By \cref{thm:dtn}, $\trace_j \in \Lg(\da)$, so there is a computation $\lcomputation_j$
        in $\da$ with trace $\trace_j$. Due to the labeling of the symbols with indices,
        there is no synchronization in $\otimes_{1\leq j \leq a} \da$ between different copies of $\da$ (except on time delays),
        so we can execute each $\lcomputation_j$ in the $j$-th copy in the product, and this yields  a computation with trace $\trace$.

        Conversely, consider a trace $\trace$ of $\otimes_{1\leq j \leq a} \da$.
        Similarly, it follows that there is a computation $\lcomputation_j$ in $\da$
        on trace $\proj{\trace}{j}$. By \cref{thm:dtn}, there exists $n_j\geq 1$
        such that $\proj{\trace}{j} \in \proj{\Lg(\network{n_j})}{j}$. 
        Let $\gcomputation_j$ be the computation in $\network{n_j}$ with a trace whose projection to~$j$
        is equal to $\proj{\trace}{j}$. We compose the computations $\gcomputation_j$, which yields
        a computation $\gcomputation$ in $\network{n_1+\ldots+n_a}$ such that
        $\proj{\gcomputation}{[1,a]} = \trace$.
    \end{proof}

\subsection{Proofs of \cref{section:ltba}}\label{sec:proof-ltba}
In order to simulate location guards in the lossy broadcast setting,
we use $\Synclabels = \Loc$ and transitions that require a nontrivial location guard~$\locguard$ have synchronization label $\locguard??$; we add for each location $\locguard$ a self-loop with synchronization label $\locguard!!$ \add{and with a fresh label $\iota$ from $\Sigma^-$ (such that this transition will not appear in the traces of the system)}.

The other direction is slightly more involved: given an LTBA~$B$, a \gta $\TA$ is constructed from $B$ starting with the same locations and clock invariants, and adding an auxiliary clock $\clock_\send$.
Then, for every broadcast sending transition $(\loc,\guard,\resetclocks,\tatranlabel,a!!,\loc')$ we do the following:
\begin{itemize}[topsep=0pt]
    \item add an auxiliary location $\loc_{\tatranlabel}$ with $\invariant(\loc_{\tatranlabel}) = (\clock_\send = 0)$ (\ie{} it has to be left again without time passing), and transitions $(\loc,\guard,\{\clock_\send\},\tatranlabel,\top,\loc_{\tatranlabel})$ and $(\loc_{\tatranlabel},\top,\resetclocks,\iota,\top,\loc')$ for a fresh $\iota \in \Sigma^-$;
    \item for every corresponding broadcast receiving transition $(\loc_\rcv,\guard_\rcv,\resetclocks^\rcv,\tatranlabel',a??,\loc'_\rcv)$ we add a disjunctive guarded transition
    $(\loc_\rcv,\guard_\rcv,\resetclocks^\rcv,\tatranlabel',\loc_{\tatranlabel},\loc'_\rcv)$, \ie{} receivers can only take the transition if the sender has moved to $\loc_{\tatranlabel}$.
\end{itemize}
Note that this construction relies on the fact that a label~$\Sigma$ uniquely determines the transition (\cref{assumption:unique-sigma} also applies here).

\begin{figure*}[ht]
	\centering
    \vspace*{-0.2cm}
\newcommand{\ft}{\footnotesize}

\begin{tikzpicture}[font=\footnotesize]

    \node (p0-0) [location] at (0,0) {$\loc_3$};
    \node (p0-0inv) [invariant, below=of p0-0] {$\varphi_3$};
    
    \node (p0-1) [location] at (3,0) {$\loc_4$};
    \node (p0-1inv) [invariant, below=of p0-1] {$\varphi_4$};

    \node (p1-0) [location] at (5,0) {$\loc_3$};
    \node (p1-0inv) [invariant, below=of p1-0] {$\varphi_3$};

    \node (p1-1) [location] at (8,0) {$\loc_4$};
    \node (p1-1inv) [invariant, below=of p1-1] {$\varphi_4$};
    
    \node (p1-2) [location] at (5,-1) {$\gamma$};
    \node (p1-2inv) [invariant, below=of p1-2] {$\varphi_{\gamma}$};
    
    \node (p1-3) [location] at (8,-1) {$\gamma$};
    \node (p1-3inv) [invariant, below=of p1-3] {$\varphi_{\gamma}$};
    
    \node (p2-0) [location] at (0,2) {$\loc_1$};
    \node (p2-0inv) [invariant, below=of p2-0] {$\varphi_1$};
    \node (p2-1) [location] at (3,2) {$\loc_2$};
    \node (p2-1inv) [invariant, below=of p2-1] {$\varphi_2$};
    \node (p3-0) [location] at (5,2) {$\loc_1$};
    \node (p3-0inv) [invariant, below=of p3-0] {$\varphi_1$};
    \node (p3-1) [location] at (8,2) {$\loc_2$};
    \node (p3-1inv) [invariant, below=of p3-1] {$\varphi_2$};



    
 \draw[dashed] (-1,1)--(9,1);
    \path[->]
        (p0-0) edge []
            node [below] {$\ft \guard, \resetclocks,\tatranlabel $} 
            node [above,locguard] {$\ft \gamma$} (p0-1)
    ;

    \path[->]
        (p1-0) edge [] node [below] {$\ft \guard, \resetclocks,\tatranlabel$}  node [above] {$\ft \gamma??$} (p1-1)
    ;
  \path[->]
        (p1-2) edge [] node [below] {$\ft \top, \emptyset,\iota$}  node [above] {$\ft \gamma!!$} (p1-3)
    ;
  
\path[->]
        (p2-0) edge [] node [below] {$\ft \guard, \resetclocks,\tatranlabel$}  node [above,locguard] {$\ft \gamma = \top$} (p2-1)
    ;
    
    \path[->]
        (p3-0) edge [] node [below] {$\ft \guard, \resetclocks,\tatranlabel$}  node [above] {$\ft \loc_1 !!$} (p3-1)
    ;
\end{tikzpicture}
	\caption{Gadgets for constructing an LBTA $B$ from a \gta $\TA$. The upper half shows the case of a transition with a trivial location guard in the \gta $\TA$ given on the left, for which we produce a transition in the LBTA $B$ shown on the right.
    The lower half shows the case of a transition of the \gta $\TA$ with a non-trivial location guard, given on the left, for which we produce two transitions shown on the right in the LBTA $B$.
    }
	\label{lbl:fig-dtn-lossy}
\end{figure*}
\begin{lemma}
    \label{lem:gtatolossy}
    For every \gta $\TA$, there exists an LBTA~$B$ such that for every $k \geq 1$, $\proj{\calL(\network{\infty})}{[1,k]} \equiv \proj{\calL(B^\infty)}{[1,k]}$.
    \end{lemma}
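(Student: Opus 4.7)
\textbf{Proof plan for \cref{lem:gtatolossy}.} The construction is essentially the one sketched just above the lemma statement, and I would begin by specifying it precisely. From $\TA = \TAdefn$ with location guard set $\LocGuards \subseteq \Loc$, build $B$ with the same locations, initial location, clocks, and invariants, and take $\Synclabels = \Loc$. For every transition $(\loc,\guard,\resetclocks,\sigma,\locguard,\loc') \in \Transitions$ with non-trivial $\locguard$, put into $B$ the receive transition $(\loc,\guard,\resetclocks,\sigma,\locguard??,\loc')$; for a trivial $\locguard = \top$, put into $B$ the send transition $(\loc,\guard,\resetclocks,\sigma,a_\sigma!!,\loc')$ using a fresh private channel $a_\sigma$ (so that no receiver can synchronize). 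Finally, for every location $\locguard \in \LocGuards$, add a self-loop $(\locguard,\top,\emptyset,\epsilon_\locguard,\locguard!!,\locguard)$ where $\epsilon_\locguard$ is a new (silent) label that will be factored out when comparing traces; this self-loop has no clock guard or reset and is always enabled when a process sits at $\locguard$.

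Next I would establish the first inclusion $\proj{\calL(\network{\infty})}{[1,k]} \subseteq \proj{\calL(B^\infty)}{[1,k]}$ modulo the auxiliary silent labels. Given a computation $\gcomputation$ of $\network{n}$, I construct a computation $\gcomputation'$ of $B^n$ step-by-step. Delay transitions are preserved verbatim because invariants are identical. A discrete transition by process~$i$ with trivial location guard is simulated by the corresponding $a_\sigma!!$ broadcast with empty receiver set (allowed by lossy broadcast). A discrete transition by process~$i$ whose location guard~$\locguard$ is satisfied in $\gcomputation$ must, by the DTN semantics, have some witness process~$j \neq i$ currently at~$\locguard$; I let $j$ fire the self-loop $\locguard!!$ (enabled since $j$ is at $\locguard$, invariants still hold, no clocks reset) as the sender, and $i$ receive on $\locguard??$ via the matching receive transition, with all other processes not participating. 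The resulting $\gcomputation'$ has trace differing from $\traceof{\gcomputation}$ only by the insertion of silent $\epsilon_\locguard$ actions by witness processes, which disappear under the appropriate equivalence $\equiv$ on traces (analogous to how $\epsilon$-transitions are absorbed in \cref{appendix:trace}).

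For the converse inclusion I argue that every computation $\gcomputation'$ of $B^m$ projects, after removing the silent self-loop actions, to a computation of $\network{m}$. The crucial point is that, by construction, the send label $\locguard!!$ is produced \emph{only} by the self-loop sitting at location $\locguard$; hence whenever any receiver in $\gcomputation'$ fires a receive $\locguard??$, the sender that triggered the broadcast must occupy~$\locguard$ at that instant, so the corresponding location guard in $\TA$ is satisfied. I can therefore replace each broadcast step in $\gcomputation'$ (one send plus a set of receivers) by the sequence of corresponding DTN discrete transitions of the receiving processes, each justified by the sender's presence at~$\locguard$. The resulting DTN computation has the same timed trace up to the silent sender labels, and projecting to any $k$~processes yields the same local trace up to~$\equiv$.

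The main delicate point is the trace bookkeeping rather than the simulation itself. In the LBTA a broadcast is flattened as a sender action followed by one receiver action per synchronizing process (see the flat-trace convention just before \cref{thm:ltba-equivalent}), whereas in the DTN a guard-triggered step is a single discrete transition. I would therefore define $\equiv$ so that it erases exactly the silent auxiliary events $(\delta, (j, \epsilon_\locguard))$ and shifts their zero delays into the next event, mirroring the $\epsilon$-erasure used to define $\traceof{\cdot}$ in \cref{appendix:trace}; with this definition in place the two simulations above directly give the required projected-language equivalence for every $k \geq 1$, and by symmetry of the processes one may always place the $k$ observed processes among the first indices.
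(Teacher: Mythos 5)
Your proposal is correct and follows essentially the same route as the paper's proof: the same gadget (a broadcast self-loop $\locguard!!$ at each guard location with matching receives $\locguard??$ on guarded transitions, and receiver-less sends for trivially guarded ones), and the same two step-by-step simulations with the silent sender events absorbed into the trace equivalence. Your only deviations are cosmetic refinements --- using fresh private channels $a_\sigma$ for trivially guarded transitions (the paper reuses $\loc!!$, which could in principle be received when $\loc$ is itself a guard location, though this causes no harm) and naming the silent labels $\epsilon_\locguard$ explicitly rather than reusing $\epsilon$.
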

    \begin{proof}
        
        $B$ is constructed from $\TA$ by keeping locations and clock invariants, 
        {setting $\Synclabels = \Loc$}, and modifying the transitions in the following way:
        \begin{itemize}
        \item each transition $\gtatrandefn$ of $A$ with $\gamma=\top$ is simulated by a 
        sending transition 
        $(\loc,\guard,\resetclocks,\tatranlabel,{\loc!!},\loc')$, 
        \item each transition $\gtatrandefn$ of $A$ with $\gamma \neq \top$ is simulated by a receiving transition $(\loc,\guard,\resetclocks,\tatranlabel,{\locguard??},\loc')$ together with a sending transition $(\locguard,\top,\emptyset,\iota,{\locguard!!},\locguard)$,
        \end{itemize}
		\cref{lbl:fig-dtn-lossy} shows the idea of the construction.
		
        We prove that $\proj{\calL(\network{\infty})}{1} = \proj{\calL(B^\infty)}{1}$, the lemma statement follows as for \ngtas in \cref{lemma:da-I}.
    
        First, let 
        $\trace = (\delta_0, \sigma_0) \ldots (\delta_{l-1}, \sigma_{l-1})$ be the trace of a computation $\rho$ of $\network{n}$ for some $n$.
        We prove inductively that there exists a computation $\rho'$ of $B^n$ that has the same trace $\trace$ and ends in the same configuration as $\rho$.
    
        Base case: $i=0$.
        If $(\delta_0, \sigma_0)$ is a trace of $\network{n}$, then there must exist a transition $\nConfig \xrightarrow{\delta, (j,\tatranlabel)} \nConfig'$ such that $\nConfig$ is the initial configuration of $\TA$, and after a delay $\delta$ some process $j$ takes a discrete transition on label $\tatranlabel$, which may be guarded by a location $\locguard$.
        Note that by construction $\nConfig$ is also an initial configuration of $B$, and we can take the same delay $\delta$ in $\nConfig$.
        Since the transition on $(j,\tatranlabel)$ is possible after $\delta$ in $A$, there must be a transition $\gtatrandefn$ of $A$ such that $\loc=\locinit$, $\mathbf{0}+\delta$ satisfies $g$, and $\locguard$ is either $\locinit$ or $\top$.
        If $\locguard=\top$, then by construction of $B$ there exists a sending transition $(\initloc,\guard,\resetclocks,\tatranlabel, \locguard!!,\loc')$. 
        If $\locguard \neq \top$, then in $B$ there exists a receiving transition $(\locinit,\guard,\resetclocks,\tatranlabel,\locguard??,\loc')$ and a sending transition $(\locguard,\top,\locinit,\iota,\locguard!!,\locguard)$.
        In both cases, one process moves into $\loc'$ and the transition label $\tatranlabel$ is the same as for the transition in $\TA$ \add{(and the transition of $B$ labeled with $\iota$ does not appear in the trace)}.
        Therefore, the resulting configuration and trace is the same as in $\TA$.
    
        Step: $i \rightarrow i+1$.
        Assume that the property holds for the first $i$ steps.
        Then the inductive argument is the same as above, except that we are not starting from an initial configuration, but equal configurations in $\TA^n$ and $B^n$ that we get by induction hypothesis. 
    
        Now, let 
        $\trace = (\delta_0, \sigma_0) \ldots (\delta_{l-1}, \sigma_{l-1})$ be the trace of a computation $\rho$ of $B^n$ for some $n$.
        With the same proof structure above, we can show that there exists a computation $\rho'$ of $\network{n}$ that has the same trace $\trace$ and ends in the same configuration.
    \end{proof}
    
\begin{figure*}[ht]
	\centering
    \vspace*{-0.2cm}
\newcommand{\ft}{\footnotesize}

\begin{tikzpicture}[font=\footnotesize]

    \node (p0-0) [location] at (-1,0) {$\loc$};
    \node (p0-0inv) [invariant, below=of p0-0] {$\varphi$};

    \node (p1-0) [location] at (-1,-1.5) {$\loc_\rcv$};
    \node (p1-0inv) [invariant, below=of p1-0] {$\varphi_r$};

    \node (p0-1) [location] at (1.5,0) {$\loc'$};
    \node (p0-1inv) [invariant, below=of p0-1] {$\varphi'$};

    \node (p1-1) [location] at (1.5,-1.5) {$\loc'_\rcv$};
    \node (p1-1inv) [invariant, below=of p1-1] {$\varphi_r'$};

    \path[->]
        (p0-0) edge []
            node [below] {$\ft \guard, \resetclocks,\tatranlabel $} 
            node [above] {$\ft a!!$} (p0-1)
    ;

    \path[->]
        (p1-0) edge [] node [below] {$\ft \guard_\rcv,\resetclocks^\rcv,\tatranlabel'$}  node [above] {$\ft a??$} (p1-1)
    ;

    \node (p0-2) [location] at (4,0) {$\loc$};
    \node (p0-2inv) [invariant, below=of p0-2] {$\varphi$};

    \node (p1-2) [location] at (4,-1.5) {$\loc_\rcv$};
    \node (p1-2inv) [invariant, below=of p1-2] {$\varphi_r$};

    \node (p0-3) [location] at (6.5,0) {$\loc_{\tatranlabel}$};
    \node (p0-3-c) [invariant,above =of p0-3] {$\ft \clock_\send = 0$};

  \node (p0-4) [location] at (9,0) {$\loc'$};
  \node (p0-4inv) [invariant, below=of p0-4] {$\varphi'$};
  \node (p1-4) [location] at (9,-1.5) {$\loc_\rcv'$};
  \node (p1-4inv) [invariant, below=of p1-4] {$\varphi_r'$};
 \path[->] (p0-2) edge []
            node [above] {}  
            node [below] {$\ft \guard, \{\clock_\send\} ,\tatranlabel$} (p0-3)
   ;
   
\path[->] (p0-3) edge []
            node [above] {}  
            node [below] {$\ft \top, \resetclocks,\iota$} (p0-4)
   ;

    \path[->]
        (p1-2) edge [] node [above,locguard] {$\ft  \loc_{\tatranlabel}$} node [below] {$\ft \guard_\rcv,\resetclocks^\rcv,\tatranlabel'$}(p1-4)
    ;

\end{tikzpicture}
	\caption{Gadgets for constructing a \gta $\TA$ from an LBTA~$B$.
    Given a sending transition of the $B$ shown on top left, we produce the sequence of transitions in the \gta $\TA$ as shown on top right.
    For every receiving transition of the LBTA $B$ with the corresponding label $a??$, we produce a transition shown on bottom right in the \gta $\TA$.
    }
	\label{lbl:fig-lossy-dtn}
\end{figure*}
    
    \begin{lemma}
        \label{lem:lossytogta}
        For every LBTA $B$, there exists a \gta $\TA$ such that $\proj{\calL(\network{\infty})}{[1,k]} = \proj{\calL(B^\infty)}{[1,k]}$.
        \end{lemma}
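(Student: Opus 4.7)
My plan is to verify the correctness of the construction already described in the excerpt and depicted in \cref{lbl:fig-lossy-dtn}. The central observation is that the auxiliary location $\loc_\tatranlabel$ carries the invariant $\clock_\send = 0$ and $\clock_\send$ is reset upon entry, so any visit to $\loc_\tatranlabel$ must be instantaneous. Consequently, the sender's two-step sequence in $\TA$ (entering $\loc_\tatranlabel$ with visible label $\tatranlabel$, then $\epsilon$-moving to $\loc'$) is atomic in global time, and every receiver whose transition is guarded by $\loc_\tatranlabel$ must fire at that same instant. This mimics the simultaneity of broadcast synchronization in $B$, while the disjunctive nature of the guard mirrors the lossy character of broadcast: any subset $J$ of receivers may participate (or not), which matches exactly the semantics of a disjunctive guard in $\TA$.

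For the forward direction ($B$ simulated by $\TA$), I would show by induction on the length of a computation $\gcomputation$ of $B^n$ that there is a computation $\gcomputation'$ of $\TA^n$ with the same trace and matching final configuration (once the auxiliary clock $\clock_\send$ is ignored). The only nontrivial case is a broadcast step $\nConfig \xrightarrow{\delta,(i,\sigma),(\sigma_j)_{j\in J}} \nConfig'$ of $B$, which I simulate in $\TA^n$ as follows: (i) delay $\delta$ and let process $i$ move to $\loc_\tatranlabel$ with label $\tatranlabel$ (resetting $\clock_\send$); (ii) at delay $0$ and in increasing index order over $J$, let each receiver $j$ fire its disjunctive-guarded transition with label $\tatranlabel_j$, the guard $\loc_\tatranlabel$ being satisfied by process~$i$; (iii) let process $i$ take the $\epsilon$-transition to $\loc'$. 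Processes outside $J \cup \{i\}$ are untouched, exactly as in $B$, and the resulting trace is precisely the flat trace of the broadcast step.

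For the reverse direction ($\TA$ simulated by $B$), I would again proceed by induction, grouping each zero-duration visit of a sender process to some $\loc_\tatranlabel$ into a single broadcast step of $B$: the set $J$ is chosen as the set of processes that fired a transition guarded by $\loc_\tatranlabel$ during that visit, which is well defined since the visit takes zero time and no other synchronization is possible in between. The sender's trailing $\epsilon$-transition is invisible in the trace, matching the fact that the broadcast step of $B$ contributes only the sender's visible label $\tatranlabel$. Reordering the zero-delay receiver transitions of $\TA$ into increasing index order is always legal, since no further synchronization is involved and no time passes between them, which yields the canonical flat trace of~$B$.

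The main obstacle I foresee is the careful bookkeeping required to show that these reorderings, together with the insertion and removal of the auxiliary $\epsilon$-transitions, commute with the projection $\proj{\cdot}{[1,k]}$: one has to check that restricting both sides to the first $k$ processes preserves equality regardless of whether the sender or the receivers are inside or outside $[1,k]$, using the symmetry of the family $B^\infty$ (respectively $\TA^\infty$) in the indexing of processes. Once both simulation directions are established at the level of unprojected traces, this projection argument yields $\proj{\calL(\network{\infty})}{[1,k]} = \proj{\calL(B^\infty)}{[1,k]}$ for every $k \geq 1$.
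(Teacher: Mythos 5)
Your proposal is correct and follows essentially the same route as the paper: the identical gadget construction (auxiliary location $\loc_{\tatranlabel}$ with invariant $\clock_\send=0$ entered on the sender's visible label and left by an $\epsilon$-transition, plus disjunctive guards $\loc_{\tatranlabel}$ on the receivers) combined with an induction on computation length giving trace-preserving simulations in both directions. Your write-up is in fact more detailed than the paper's own proof, which only states the construction and asserts that the claim ``follows from proving inductively'' the existence of a matching computation; the receiver-reordering and projection bookkeeping you flag as the main obstacle is left implicit there.
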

    \begin{proof}
        $\TA$ is constructed from $B$ by starting with the same locations and clock invariants, and adding an auxiliary clock $\clock_\send$.
        Then, for every broadcast sending transition $(\loc,\guard,\resetclocks,\tatranlabel,a!!,\loc')$ we do the following:
        \begin{itemize}
        \item add an auxiliary location $\loc_{\tatranlabel}$ with $\invariant(\loc_{\tatranlabel}) = (\clock_\send = 0)$ (\ie{} it has to be left again without time passing), and transitions $(\loc,\guard, \{\clock_\send\},\tatranlabel,\top,\loc_{\tatranlabel})$ and $(\loc_{\tatranlabel},\top,\resetclocks ,\iota,\top,\loc')$ \add{for a fresh $\iota \in \Sigma^-$}; 
        \item for every corresponding broadcast receiving transition $(\loc_\rcv,\guard_\rcv,\resetclocks^\rcv,\tatranlabel',a??,\loc'_\rcv)$ we add a disjunctive guarded transition
        $(\loc_\rcv,\guard_\rcv,\resetclocks^\rcv,\tatranlabel',\loc_{\tatranlabel},\loc'_\rcv)$, \ie{} receivers can only take the transition if the sender has moved to $\loc_{\tatranlabel}$.
        \end{itemize}
				\cref{lbl:fig-lossy-dtn} shows the idea of the construction.
        Note that the construction relies on the fact that a label~$\Sigma$ uniquely determines the transition (\cref{assumption:unique-sigma} also applies here).
    
        Like in the proof of \cref{lem:gtatolossy}, the claim follows from proving inductively that for every computation $\rho$ of $B^n$ with trace $\trace$ there exists a computation $\rho'$ of $\network{n}$ with trace $\trace$.
				\add{In this case the proof relies on the fact that for a lossy broadcast transition, the timed transition system of $B^n$ contains the sequence of steps in any order for the receivers.}
    \end{proof}
		
		Note that~\cite{ADFL19} claims that location reachability is undecidable for automata with $2$ clocks in a model that is very similar (and may be equivalent) to LBTN; we have reasonable doubts regarding that result (which comes without a full proof), but the discrepancy might come from differences in the models as well.

\subsection{Proofs of \cref{sec:stn}}\label{sec:simulation-STA-proof}

As mentioned in the proof idea of \cref{thm:sta-equivalent}, simulating a \gta by an STA is simple.
For the other direction, let us define the \gta that will simulate a given STA.
This \gta is based on the gadget shown in \cref{lbl:fig:TN-DTN} (for one rule~$r$ of the STA), and formally defined in the following.

    \begin{figure}[h]
        \centering
\newcommand{\ft}{\footnotesize}

\begin{tikzpicture}[font=\footnotesize]

    \node (p1-0) [location] at (-.5,0) {$\loc_{\STArule,1}$};
    \node (p1-1) [location] at (3,0) {$\loc_{\STArule,1}'$};

    \node (p2-0) [location] at (-.5,1.5) {$\loc_{\STArule,2}$};
    \node (p2-1) [location] at (3,1.5) {$\loc_{\STArule,2}'$};

    \node (p3-0) [location] at (5,0) {$\loc_{\STArule,1}$};
    \node (p3-1) [location] at (9,0) {$\locmid_{\STArule,1}$};
    \node (p3-1inv) [invariant, below=of p3-1] {$\clocksync \leq 0$};
    \node (p3-2) [location] at (11.5,0) {$\loc_{\STArule,1}'$};

    \node (p4-0) [location] at (5,1.5) {$\loc_{\STArule,2}$};
    \node (p4-1) [location] at (9,1.5) {$\locmid_{\STArule,2}$};
    \node (p4-1inv) [invariant, below=of p4-1] {$\clocksync \leq 0$};
    \node (p4-2) [location] at (11.5,1.5) {$\loc_{\STArule,2}'$};

    \node (pn-0) [location] at (-.5,4) {$\loc_{\STArule,\STArulelength}$};
    \node (pn-1) [location] at (3,4) {$\loc_{\STArule,\STArulelength}'$};
    \node (pn-2) [location] at (5,4) {$\loc_{\STArule,\STArulelength}$};
    \node (pn-3) [location] at (9,4) {$\locmid_{\STArule,{\STArulelength}}$};
    \node (pn-3inv) [invariant, below=of pn-3] {$\clocksync \leq 0$};
    \node (pn-4) [location] at (11.5,4) {$\loc_{\STArule,\STArulelength}'$};
    
    \node (vdots1) at (1.5,3){$\vdots$};
    \node (vdots2) at (7.5,3){$\vdots$};
    
    \draw[dashed] (4,-1)--(4,5);

    \path[->]
        (p1-0) edge [] node [above] { $\guard_{\STArule,1}, \STAresetclocks{\STArule}{1}, \tatranlabel_{\STArule,1}$ } (p1-1);
    \path[->]
        (p2-0) edge [] node [above] {$ \guard_{\STArule,2}, \STAresetclocks{\STArule}{2}, \tatranlabel_{\STArule,2} $} (p2-1);      
    \path[->]
        (p3-0) edge []  node [above] {$ \guard_{\STArule,1}, \STAresetclocks{\STArule}{1} \cup \{\clocksync\},\dummytranlabel_{\STArule,1} $} (p3-1);
    \path[->]
        (p3-1) edge [] node [below,locguard] {$\locmid_{\STArule,\STArulelength}$} node [above] {$\tatranlabel_{\STArule,1}$} (p3-2);

    \path[->]
        (p4-0) edge [] node [below,locguard] {$\locmid_{\STArule,1}$}node [above] {$ \guard_{\STArule,2}, \STAresetclocks{\STArule}{2} \cup \{\clocksync\},\dummytranlabel_{\STArule,2} $} (p4-1);
    \path[->]
        (p4-1) edge [] node [below,locguard] {$\locmid_{\STArule,\STArulelength}$}node [above] {$\tatranlabel_{\STArule,2}$} (p4-2);

    \path[->]
        (pn-2) edge []node [below,locguard] {$\locmid_{\STArule,\STArulelength-1}$} node [above] {$ \guard_{\STArule,\STArulelength}, \STAresetclocks{\STArule}{\STArulelength} \cup \{\clocksync\},\dummytranlabel_{\STArule,\STArulelength}$} (pn-3);
    \path[->]
        (pn-0) edge [] node [above] { $ \guard_{\STArule,\STArulelength}, \STAresetclocks{\STArule}{\STArulelength}, \tatranlabel_{\STArule,\STArulelength} $} (pn-1);
    \path[->]
        (pn-3) edge []  node [above] {$\tatranlabel_{\STArule,\STArulelength}$} (pn-4);
  
\end{tikzpicture}
            \caption{On the left-hand side is a rule $\STArule$ in an STA $\STA$. On the right-hand side is the corresponding \gta-gadget, where location guards ensure that $\locmid_{\STArule,\STArulelength}$ is only reachable if all $\locmid_{\STArule,i}$ are reachable, and it follows that all $\loc'_i$ are reachable if and only if $\locmid_{\STArule,\STArulelength}$ is reachable. Furthermore, invariants on $\locmid_{\STArule,i}$ ensure that there cannot be any delay between the transitions in the gadget. Not displayed are transitions from every $\locmid_{\STArule,i}$ without any guards to the sink location $\locsink$ (with fresh labels from $\Sigma^-$).}
            \label{lbl:fig:TN-DTN}
        \end{figure}

        \begin{definition}[Corresponding \gta for a given STA]\label{lbl:corrTAdefn}
            For a given STA $\STA = \STAdefn$, its \emph{corresponding \gta{}} is defined as
            \(
            \corrTA = \corrTAdefn,
            \)
            where:
            \begin{itemize}
                \item $\Loc_{\corrTA} = \Loc \cupdot \Locmid \cupdot \{\locsink\}$, with
                    $\Locmid = \{\locmid_{\STArule, i} \mid \STArule \in \STARules, 1 \leq i \leq \STArulelength~\text{where }~ \STArule = \STAruledefn\}$,
                        \item \add{$\clocksync$ is an auxiliary clock that does not appear in $\clocks$}
                \item $\Transitions = \bigcup_{\STArule \in \STARules} \Tranleftgadget{\STArule} \cup \Tranrightgadget{\STArule} \cup T_{\STArule,\bot} $ for $\STArule=\STAruledefn$, where:
                \begin{itemize}
                    \item $\begin{array}{ll}
														\Tranleftgadget{\STArule} = 	&  \{(\loc_{\STArule,1}, \guard_{\STArule,1}, \STAresetclocks{\STArule}{1} \cup \{\clocksync\}, \dummytranlabel_{r,1}, \top, \locmid_{\STArule, 1})\}\\
														& \cup \bigcup_{2 \leq i \leq \STArulelength} \{\tranleftgadget\}
													\end{array}$\\
												  (incoming transitions of locations $\locmid_{\STArule, i}$ in \cref{lbl:fig:TN-DTN}),
                    \item $\begin{array}{ll}
														\Tranrightgadget{\STArule} = 	& \bigcup_{1 \leq i \leq \STArulelength-1} \{\tranrightgadget\} \\
														& \cup \{ (\locmid_{\STArule, m}, \top, \emptyset, \tatranlabel_{\STArule,m}, \top, \loc'_{\STArule,m})\}
													\end{array}$\\
														(outgoing transitions of locations $\locmid_{\STArule, i}$ in \cref{lbl:fig:TN-DTN}),
                    \item $T_{\STArule,\bot} = \bigcup_{1 \leq i \leq \STArulelength} \{ (\locmid_{\STArule,i},\top,\emptyset,\iota,\top,\locsink)\}$\\ (transitions to sink location, not shown in \cref{lbl:fig:TN-DTN})
                \end{itemize}
                \item $\invariant_{\corrTA}: \Loc_{\corrTA} \rightarrow \STAclockcons$ defined as  $\invariant_{\corrTA}(\loc)=\invariant(\loc)$ for every $\loc \in \Loc$, $\invariant_{\corrTA}(\locmid)= \clocksync \leq 0$ for $\locmid \in \Locmid$ and $\invariant_{\corrTA}(\locsink)=\top$.
            \end{itemize}
        \end{definition}

\begin{definition}[Stable and Intermediate Configurations]
    For an \ngta $\corrnetwork{m}$ (where $\corrTA$ is the guarded timed automaton defined above), the set of configurations is partitioned into: %
    \begin{itemize}
        \item \emph{Stable Configurations}: A configuration $\nConfig = \nConfigdefn$ is \emph{stable} if $\loc_i \in \Loc $ for all $1 \leq i \leq m$, \ie{} in stable configurations, no process is in a location $\locmid_{\STArule, i}$ or in the sink location $\locsink$.
        \sj{if we want \cref{thm:sta-equivalent} to directly follow from \cref{lem:corresponding-reachability1}-\cref{lem:corresponding-reachability2}, we need stable configurations to allow processes in the sink state; need to check that the proof still works!}
        \item \emph{Intermediate Configurations}: A configuration is \emph{intermediate} if it is not stable.
    \end{itemize}
		
		We say that a stable configuration $\corrConfig$ of $\corrTA^n$ \emph{corresponds to} a configuration $\STAConfig$ of $\STA^n$ if for every configuration $(\loc,\clockval)$ of $\corrTA$ we have $(\loc,\clockval) \in \corrConfig$ if and only if $(\loc,\eliminate{\clockval}{\clock_\sync}) \in \STAConfig$
\end{definition}

For the following lemma, we extend the projection $\eliminate{\clockval}{\clock}$ of a clock valuation $\clockval$ onto the clocks different from a clock $\clock$ to configurations in the expected way, \ie{} for $\nConfig = \big((\loc_1,\clockval_1), \ldots (\loc_{\networksize},\clockval_{\networksize})\big)$, let $\eliminate{\nConfig}{\clock}= \big( (\loc_1,\eliminate{\clockval_1}{\clock}),\ldots, (\loc_{\networksize},\eliminate{\clockval_{\networksize}}{\clock})\big)$.

\begin{lemma}
    \label{lem:corresponding-reachability1}
    Consider an STA $\STA = \STAdefn$ and its corresponding \gta $\corrTA=\corrTAdefn$.
    If a configuration $\STAConfig$ is reachable in $\STAnetwork{n}$ for some $\networksize \in \Nats$ then there exists a stable configuration $\corrConfig$ in $\corrnetwork{n}$ such that $\eliminate{\corrConfig}{\clock_\sync} = \STAConfig$ (in particular, $\corrConfig$ corresponds to $\STAConfig$).
\end{lemma}
\begin{proof}

        Let $\STAcomputation$ be a computation of $\STAnetwork{n}$ that ends in $\STAConfig$\add{, and $l$ the number of \emph{blocks} in~$\STAcomputation$, where a block is either a non-zero delay transition or a sequence of discrete transitions that correspond to the execution of a single rule $\STArule \in \STARules$. }
				The proof is by induction on~$l$, the number of blocks of~$\STAcomputation$.
        \begin{enumerate}
            \item \textbf{Base case ($l=0$)}: Let $\nConfig_{\STA,0}$  and  $\nConfig_{\corrTA,0}$ be the initial configurations of networks $\STAnetwork{n}$ and $\corrnetwork{n}$ respectively.
            Since all processes in $\nConfig_{\corrTA,0}$ and $\nConfig_{\STA,0}$ start in the initial location and all clocks are initialized to zero, we have $\eliminate{\nConfig_{\corrTA,0}}{\clocksync}=\nConfig_{\STA,0}$.
            
            \item \textbf{Induction step} \add{($l \Rightarrow l+1$)}: 
            
						We consider two cases: the final block in the computation could be a delay transition or the execution of a rule $\STArule \in \STARules$.
						
            \begin{enumerate}
						
						\item \textbf{Delay transition}:
						
						In this case $\STAcomputation$ is of the form $\nConfig_{\STA,0} \rightarrow^* \nConfig_{\STA,pre}\xrightarrow[]{\delta}\STAConfig$. 
						By induction hypothesis, there is a stable configuration $\nConfig_{\corrTA,pre}$ with $\eliminate{\nConfig_{\corrTA,pre}}{\clock_\sync}=\nConfig_{\STA,pre}$.
                        Because the $\nConfig_{\corrTA,pre}$ is stable, $\nConfig_{\STA,pre}$ has the same locations, thus the same invariants
                        as $\nConfig_{\corrTA,pre}$, so the same delay $\delta$ is possible from $\nConfig_{\corrTA,pre}$ as well.
						If $\nConfig_{\corrTA}$ is the configuration reached by a delay of $\delta$ from $\nConfig_{\corrTA,pre}$, then clearly we have $\eliminate{\corrConfig}{\clock_\sync} = \STAConfig$.
						                
            \item \textbf{Execution of rule $\STArule \in \STARules$}:
						
						In this case $\STAcomputation$ is of the form $\nConfig_{\STA,0} \rightarrow^* \nConfig_{\STA,pre} \rightarrow^* \STAConfig$, where the sequence of transitions $\nConfig_{\STA,pre} \rightarrow^* \STAConfig$ is an execution of rule $\STArule$.            
            By induction hypothesis, if 
								$\nConfig_{\STA,pre}= \STAConfigdefnpre$ is reachable in  $\STAnetwork{n}$ then a stable configuration $\nConfig_{\corrTA,pre} =\corrConfigdefnpre$ is reachable in $\corrnetwork{n}$ with $\eliminate{\altclockval_{i}}{\clock}= \clockval_{i} ~\text{for}~ 1 \leq i \leq \networksize$.
            To show that a stable configuration $\nConfig_{\corrTA}$ with the desired property can be reached from $\nConfig_{\corrTA,pre}$, we construct a computation $\gcomputation_{pre}$
            of the form $\gcomputation_{pre}=\intercomputationdefn$, where
						in the first part $\nConfig_{\corrTA,pre} \rightarrow^* \nConfig_{\corrTA,mid}$ the discrete transitions from 
						$\Tranleftgadget{\STArule}$ (for the given rule $r$, compare \cref{lbl:corrTAdefn} and \cref{lbl:fig:TN-DTN}) are executed without delay, and in the second part $\nConfig_{\corrTA,mid} \rightarrow^* \nConfig_{\corrTA}$ the discrete transitions from $\Tranrightgadget{\STArule}$ are executed.
						We analyze the properties of this computation in the following.
           
            \begin{itemize}

                \item \textbf{Part 1}: $\nConfig_{\corrTA,pre} \rightarrow^* \nConfig_{\corrTA,mid}$, executing ~$\Tranleftgadget{\STArule}$: 
								
								Assume w.l.o.g. that the transitions in $\nConfig_{\STA,pre} \rightarrow^* \STAConfig$ in $\STAnetwork{n}$ are executed by the first $m$ processes, and process $i$ takes element $\loc_{\STArule,i}  \xrightarrow[]{\guard_{\STArule,i},\STAresetclocks{\STArule}{i},\tatranlabel_{\STArule,i}}  \loc'_{\STArule,i}$ of the rule $\STArule$. 
								Then, for $1 \leq i \leq \STArulelength$ and $(\loc_i,\clockval_i) \in \nConfig_{\STA,pre}$ we know that $\clockval_i \models \guard_{\STArule,i}$. 
                Now consider $\nConfig_{\corrTA,pre}$: since $\eliminate{\altclockval_{i}}{\clock}= \clockval_{i} ~\text{for}~ 1 \leq i \leq \networksize$, we have $\altclockval_{i} \models \guard_{\STArule,i}$ for $1 \leq i \leq \STArulelength$, \ie{} process $i$ for $1 \leq i \leq m$ satisfies the clock guard of the $i$th transition in $\Tranleftgadget{\STArule}$.
								By taking the transitions in increasing order of $i$, also all location guards are satisfied. 
								In addition, note that the set of clock resets for the $i$th transition in $\Tranleftgadget{\STArule}$ is the same as for the $i$th element of rule $\STArule$, except for $\clocksync$ (which is reset on all transitions in $\Tranleftgadget{\STArule}$).
                Since $\eliminate{\altclockval_{i}}{\clock}= \clockval_{i}$ for $1 \leq i \leq \networksize$, we get that the clock values in $\nConfig_{\corrTA,mid}$ are equal (up to $\clocksync$) to those in $\STAConfig$.
								Finally, note that in $\nConfig_{\corrTA,mid}$, process $i$ occupies location $\locmid_{\STArule,i}$ for $1 \leq i \leq \STArulelength$ (and the other processes have not changed their configuration).

                \item \textbf{Part 2}: $\nConfig_{\corrTA,mid} \rightarrow^* \nConfig_{\corrTA}$, executing ~$\Tranrightgadget{\STArule}$: 
								
								Starting from $\nConfig_{\corrTA, mid}$, each process $i$ for $1 \leq i \leq \STArulelength$ takes the $i$th transition in $\Tranrightgadget{\STArule}$ and moves to $\loc_{\STArule,i}'$, say in increasing order of $i$.
                 Note that this is possible because the guard location $\locmid_{\STArule, \STArulelength}$ is occupied in $\nConfig_{\STArule, mid}$, and will stay occupied until the $m$th process takes the $m$th transition from $\Tranrightgadget{\STArule}$. 
								In the resulting configuration $\nConfig_{\corrTA}$, the first $m$ processes will be in locations $\loc_{\STArule,i}'$ according to rule $\STArule$ and the other processes will be in the same location as in $\nConfig_{\corrTA,pre}$ (and therefore the same as in $\nConfig_{\STA,pre}$).
								Therefore, all processes will be in the same location as in $\STAConfig$.		
            Moreover, since none of these transitions alters clock valuations of any process, we get that 
						$\eliminate{\nConfig_{\corrTA}}{\clocksync}=\STAConfig$.
						Finally, observe that $\nConfig_{\corrTA}$ is a stable configuration, proving the desired property.
            \end{itemize} 
						
        \end{enumerate}
        \end{enumerate}
        
    \end{proof}

    We now prove the converse direction. Here, starting from a stable configuration of $\corrnetwork{n_1}$, we will build a corresponding reachable configuration
    in $\STAnetwork{n_2}$ for some $n_2$. The reason of this discrepancy is that in STAs, each rule is applied to exactly $m$ processes,
    while in GTAs, nothing prevents more processes to cross the gadget of \cref{lbl:fig:TN-DTN}. 
		
		As a concrete example, consider the STA $\STA$ on the left side of \cref{lbl:fig:STN-DTN} and the gadgets for its two rules that appear in $\corrTA$ on the right side:
		To reach location $\loc_3$ in $\corrTA$, three processes are sufficient---in the gadget for rule $\STArule_1$ (at the right bottom), one process moves from $\locinit$ via $\locmid_{\STArule_1,1}$ to $\loc_1$, and two processes move $\locinit$ via $\locmid_{\STArule_1,2}$ to $\loc_2$, and then these two processes can execute the gadget of $\STArule_2$ (right top) such that one of them arrives in $\loc_3$.
		In the STA $\STA$, at least four processes are needed to make one of them reach $\loc_3$--- upon firing $\STArule_1$ a single time, only one process is in $\loc_2$ (and another has moved to $\loc_1$), such that we need two additional processes in $\locinit$ to fire $\STArule_1$ a second time, and only after that can $\STArule_2$ be fired and one process reaches $\loc_3$.
		
		Accordingly, 
    the statement is also weaker: given $\corrConfig$, the lemma shows that there exists a reachable configuration $\STAConfig$ 
    that corresponds to $\corrConfig$ (but without necessarily having $\eliminate{\corrConfig}{\clock_\sync} = \STAConfig$).
    The proof is a bit more involved, and requires a copycat lemma given at the end of this section.

\newcommand{\ft}{\footnotesize}
\begin{figure}
\centering
    \begin{tikzpicture}[font=\footnotesize]

        \node[location, initial] at (0, 2)  (q0) {$\locinit$};
        \node[location] at (1.5, 2) (q1) {$\loc_1$};
        \node[location] at (1.5, 4) (q2) {$\loc_2$};
        \node[location] at (3.5, 4) (q3) {$\loc_3$};


        \path[->] (q0) edge[] node[above]{$\STArule_1, \sigma_1$} (q1);
        \path[->] (q0) edge[] node[above,sloped]{$\STArule_1, \sigma_2$} (q2);
        \path[->] (q2) edge[] node[above,sloped]{$\STArule_2, \sigma_3$} (q1);
        \path[->] (q2) edge[] node[above]{$\STArule_2, \sigma_4$} (q3);

    \node (p1-0) [location] at (4.5,0) {$\locinit$};
    \node (p1-1) [location] at (6.5,0) {$\locmid_{\STArule_1,1}$};
    \node (p1-2) [location] at (8.5,0) {$\loc_1$};

    \node (p2-0) [location] at (4.5,2) {$\locinit$};
    \node (p2-1) [location] at (6.5,2) {$\locmid_{\STArule_1,2}$};
    \node (p2-2) [location] at (8.5,2) {$\loc_2$};

\node (p3-0) [location] at (4.5,4) {$\loc_2$};
\node (p3-1) [location] at (6.5,4) {$\locmid_{\STArule_2,1}$};
\node (p3-2) [location] at (8.5,4) {$\loc_1$};

\node (p4-0) [location] at (4.5,6) {$\loc_2$};
\node (p4-1) [location] at (6.5,6) {$\locmid_{\STArule_2,2}$};
\node (p4-2) [location] at (8.5,6) {$\loc_3$};
    \path[->]
    (p2-0) edge [] node [below,locguard] {$\locmid_{\STArule_1,1}$}   (p2-1);
\path[->]
    (p2-1) edge [] node [above] {$\tatranlabel_{2}$}  (p2-2);

    \path[->]
    (p1-0) edge []   (p1-1);
\path[->]
    (p1-1) edge [] node [above] {$\tatranlabel_{1}$} node [below,locguard] {$\locmid_{\STArule_1,2}$}(p1-2);

 \path[->]
 (p4-0) edge [] node [below,locguard] {$\locmid_{\STArule_2,1}$}   (p4-1);
\path[->]
 (p4-1) edge [] node [above] {$\tatranlabel_{4}$}  (p4-2);

 \path[->]
 (p3-0) edge []   (p3-1);
\path[->]
 (p3-1) edge [] node [above] {$\tatranlabel_{3}$} node [below,locguard] {$\locmid_{\STArule_2,2}$}(p3-2);

 \draw[dashed] (4,-1)--(4,7);
 \draw[dashed] (4,3)--(11,3);
   
\end{tikzpicture}
\caption{A STA $\STA$ template for which atleast 4 processes are required to reach $\loc_3$ and 
on the right of the figure are gadgets corresponding to its rules. Note that $\loc_3$ is reachable in $\corrnetwork{3}$
where $\corrTA$ is the  corresponding guarded timed automata
}   \label{lbl:fig:STN-DTN}
        \end{figure}
		
    \begin{lemma}
        \label{lem:corresponding-reachability2}
        Consider an STA $\STA = \STAdefn$ and its corresponding \gta $\corrTA=\corrTAdefn$.
        If a stable configuration $\corrConfig$ is reachable in $\corrnetwork{n_1}$ for some $\networksize_1 \in \Nats$, then a configuration $\STAConfig$ is reachable in $\STAnetwork{n_2}$ for some $n_2 \in \Nats$ such that $\corrConfig$ corresponds to $\STAConfig$.
    \end{lemma}
    \begin{proof}
        Let $\stablecomputation$ be a computation of $\corrnetwork{\networksize_1}$ that ends in a stable configuration $\nConfig_{\corrTA}$.
        The proof is by induction on $l$, the number of non-zero delay transitions in $\stablecomputation$.
        \begin{enumerate}
         
            \item \textbf{Base case ($l=0$)}:						
						Let $\stablecomputationbasecase=\stablecomputationbasecasedefn$ be the computation to $\nConfig_{\corrTA}$ which in general can have multiple discrete but no non-zero delay transitions.
            Let $\transequence=\langle \tatran_{1}\ldots \tatran_{k}\rangle$ be the sequence of transitions of $\corrTA$ that appear on $\stablecomputation$, 
            and let $\transequenceset = \{ \tatran_{1}, \ldots, \tatran_{k}\}$ be the set of the transitions in $\transequence$.
            For a given rule $\STArule \in \STARules$ of $\STA$, let $\Trangadget{\STArule}=\Tranleftgadget{\STArule} \cup \Tranrightgadget{\STArule}$, with $\Tranleftgadget{\STArule}, \Tranrightgadget{\STArule}$ as defined in \cref{lbl:corrTAdefn}.
            
						We distinguish two cases: 
            \begin{ienumerate}
                \item $\transequenceset \subseteq \Trangadget{\STArule}$ for some rule $\STArule= \STAruledefn$ \sj{maybe change notation of rules from arrow notation to tuple, s.t. transitions for all types of automata are uniform}, \ie{} all the transitions occuring in $\transequence$
                are a part of only one rule.
                \item $\transequenceset \subseteq \bigcup_{\STArule \in \STARules'} \Trangadget{\STArule}$ for some $\STARules' \subseteq \STARules$
                with $|\STARules'| > 1$.
            \end{ienumerate}

             \begin{enumerate}
                \item \textbf{Case $\transequenceset \subseteq \Trangadget{\STArule}$ for some $\STArule \in \STARules$}:
								
                We claim that in this case $\transequenceset = \Trangadget{\STArule}$, \ie{} if some of the transitions of $\Trangadget{\STArule}$ appear in $\transequence$, then all of them must appear because of following observations:
                \begin{itemize}
                    \item First, note that all transitions in $\Tranrightgadget{\STArule}$ have $\locmid_{\STArule,\STArulelength}$ as a location guard, so $\locmid_{\STArule, \STArulelength}$ has to be occupied to take any of these transitions. 
                    So if a transition from $\Tranrightgadget{\STArule}$
                    appears in $\transequenceset$,
                    since $(\loc_{\STArule,\STArulelength}, \guard_{\STArule,\STArulelength}, \STAresetclocks{\STArule}{\STArulelength} \cup \{\clocksync\}, \dummytranlabel_{\STArule,\STArulelength}, \{\locmid_{\STArule, \STArulelength-1}\}, \locmid_{\STArule, \STArulelength}) \in \Tranleftgadget{\STArule}$ is the only incoming transition to $\locmid_{\STArule, \STArulelength}$, then 
                    we know that it must also be in $\transequenceset$.
                    But for every $i>1$ we have that $\tranleftgadget \in \Tranleftgadget{\STArule}$ has 
                    $\locmid_{\STArule, i-1}$ as a location guard, so all elements of $\Tranleftgadget{\STArule}$ must be in $\transequenceset$.
										
                    \item If for some $i \in \{1,\ldots,\STArulelength\}$ a transition $\tranleftgadget \in \Tranleftgadget{\STArule}$ appears in $\transequence$ 
                    then $\tranrightgadget \in \Tranrightgadget{\STArule}$ must also appear in $\transequence$, otherwise the process that takes the former transition would either be stuck in the auxiliary location $\locmid_{\STArule,i}$ or be in the sink location $\locsink$ by the end of $\stablecomputationbasecase$, contradicting the assumption that $\nConfig_{\corrTA}$ is stable.
                     
                \end{itemize}

                Thus, if one of the transitions in $\Trangadget{\STArule}$ appears in $\transequence$, we can assume that 
                all of them appear.  
								Furthermore, since all transitions are taken without a delay, we know that the initial configuration $\nConfig_{\corrTA,0}$ satisfies the clock guards of all transitions in $\Tranleftgadget{\STArule}$, and therefore all clock guards of rule $\STArule$, \ie{} rule $\STArule$ can be fired in $\STA$ from $\nConfig_{\STA,0}$.
								To prove the claim, consider which local configurations can be contained in $\nConfig_{\corrTA}$: since the transitions in $\transequence$ are exactly those from $\Trangadget{\STArule}$ and we arrive in a stable configuration, $\nConfig_{\corrTA}$ must contain $(\loc_{\STArule,i}',\mathbf{0})$ for all $1 \leq i \leq m$. Moreover, note that if $\networksize_1 > m$ then it may contain $(\locinit,\mathbf{0})$ (but it does not have to, since every path in the gadget can be taken by an arbitrary number of processes).
								For both cases, we can find a suitable $\networksize_2$ such that $\STA$ reaches a global configuration that contains exactly the same local configurations: if $n_2 = m$, then the resulting $\nConfig_\STA$ contains exactly all the $(\loc_{\STArule,i}',\mathbf{0})$, and if $n_2 > m$, then it additionally contains $(\locinit,\mathbf{0})$.

                \item \textbf{Case $\transequenceset \subseteq \bigcup_{\STArule \in \STARules'} \Trangadget{\STArule}$ for some $\STARules' \subseteq \STARules$
                with $|\STARules'| > 1$:
                }
								
								For simplicity, consider $\transequenceset \subseteq \Trangadget{\STArule_1} \cup \Trangadget{\STArule_2} $   for $r_1\neq r_2 \in \STARules$.
                Like in the previous case we can argue that all transitions in $\Trangadget{\STArule_1} \cup \Trangadget{\STArule_2}$ have to appear in $\transequence$. 
								\add{
								To prove the claim, we will reorder the sequence of transitions.
								Assume w.l.o.g. that $\locmid_{\STArule_1,\STArulelength_1}$ is reached before $\locmid_{\STArule_2,\STArulelength_2}$ on 
								$\stablecomputationbasecase$.
								We will show that we can reorder the computation such that all transitions from  $\Trangadget{\STArule_1}$ are taken before all transitions from $\Trangadget{\STArule_2}$, and we reach the same configuration $\nConfig_{\corrTA}$.} 
								\add{The overall construction of this proof case is depicted in \cref{lbl:fig-dtn-stn-run}.}
								
\begin{figure}[h]
    \centering

\begin{tikzpicture}
    \node (CA01) at (0,4) {$\nConfig_{\corrTA,0}$};
    \node (CA11) at (6,4) {$\nConfig_{\corrTA}$};
		
    \node (CA0) at (0,2) {$\nConfig_{\corrTA,0}$};
    \node (CA1) at (3,2) {$\nConfig_{\corrTA,1}$};
    \node (CA) at (6,2) {$\nConfig_{\corrTA}$};
    
    \node (CS0) at (0,0) {$\nConfig_{\STA,0}$};
    \node (CS1) at (3,0) {$\nConfig_{\STA,1}$};
    \node (CS2) at (3,-1.5) {$\nConfig_{\STA,2}$};
    \node (CS) at (6,-1.5) {$\STAConfig$};
    \node (star1) at (2.5,2.1) {*};
    \node (star2) at (5.65,2.1) {*};
    \node (star2) at (5.65,4.1) {*};

    \draw[->] (CA01) --  node[above] {$\transequence$}  (CA11);
    \draw[->] (CA0) --  node[above] {$\proj{\transequence}{\Trangadget{\STArule_1}}$}  (CA1);
    \draw[->] (CA1) -- node[above] {$\proj{\transequence}{\Trangadget{\STArule_2}}$} (CA);

    \draw[->] (CS0) -- node[above] {$\STArule_1$} (CS1);
    \draw[decorate, decoration={snake, amplitude=1mm, segment length=2mm}, ->] (CS1) -- node[right] {\text{\cref{lem:team-copycat}} (Copycat)} (CS2);
    \draw[->] (CS2) -- node[above] {$\STArule_2$} (CS);

\end{tikzpicture}
\caption{Constructing a computation of a network of STAs (of possibly a larger size) from a run of a network of corresponding \gta{}, with an intermediate step of reordering the given run}
\label{lbl:fig-dtn-stn-run}
\end{figure}
								
								\medskip
								To see this, we first show that all transitions in $\Tranleftgadget{\STArule_1}$ are enabled in $\nConfig_{\corrTA,0}$, \ie{} for every element
								$\loc_{\STArule_1,i}  \xrightarrow[]{\guard_{\STArule_1,i},\STAresetclocks{\STArule_1}{i},\tatranlabel_{\STArule_1,i}}  \loc'_{\STArule_1,i}$ of $\STArule_1$ we have $(\loc_{\STArule_1,i},\clockval_{\STArule_1,i}) \in \nConfig_{\corrTA,0}$ with $\clockval_{\STArule_1,i} \models \guard_{\STArule_1,i}$ (and since $\nConfig_{\corrTA,0}$ is the initial configuration we know that $\clockval_{\STArule_1,i}=\mathbf{0}$).
								To see this, remember that $\locmid_{\STArule_1,\STArulelength_1}$ is reached before $\locmid_{\STArule_2,\STArulelength_2}$ on 
								$\stablecomputationbasecase$ and note that none of the transitions of $\Tranrightgadget{\STArule_2}$ can appear before $\locmid_{\STArule_2,\STArulelength_2}$ is reached, and therefore none of the $\loc_{\STArule_2,i}'$ are available when the transitions in $\Tranleftgadget{\STArule_1}$ are taken.
								Since moreover $\Loc \cap \Locmid = \emptyset$ (auxiliary locations do not appear in rules $\STA$) and no time passes in $\stablecomputationbasecase$, it must be the case that all transitions in $\Tranleftgadget{\STArule_1}$ are already enabled in $\nConfig_{\corrTA,0}$.
								
								\medskip
								Regarding transitions in $\Tranrightgadget{\STArule_1}$, note that these start in auxiliary locations that are unique to $\Trangadget{\STArule_1}$ and \add{by construction} the only location guard is $\locmid_{\STArule_1,\STArulelength_1}$ (and again, no time passes). 
								Therefore all transitions in $\Tranrightgadget{\STArule_1}$ are enabled as soon as $\locmid_{\STArule_1,\STArulelength_1}$ is reached, independently of any transitions from~$\Trangadget{\STArule_2}$.

	\medskip
								Thus, we can consider a different sequence of transitions $\transequence'=\proj{\transequence}{\Trangadget{\STArule_1}} \cdot \proj{\transequence}{\Trangadget{\STArule_2}}$, which is a concatenation of the subsequences obtained by projecting the original sequence onto the transitions in $\Trangadget{\STArule_1}$ or $\Trangadget{\STArule_2}$, respectively.
								Let $\stablecomputationbasecase'$ be a computation with $\transequence'$ its sequence of transitions, and where each transition is taken by the same process as in $\stablecomputationbasecase$.
								Therefore, $\stablecomputationbasecase'$ ends in the same configuration $\nConfig_{\corrTA}$.
								
								Note that, since $\nConfig_{\corrTA}$ is a stable configuration, we must also reach a stable configuration $\nConfig_{\corrTA,1}$ after executing $\proj{\transequence}{\Trangadget{\STArule_1}}$ from $\nConfig_{\corrTA,0}$ (if $\nConfig_{\corrTA,1}$ was not stable, either an auxiliary location of $\Trangadget{\STArule_1}$ or the sink location would be occupied in $\nConfig_{\corrTA,1}$, and the same would still be true after executing $\proj{\transequence}{\Trangadget{\STArule_2}}$ and reaching $\nConfig_{\corrTA}$).
										
								\medskip
								To construct a computation of $\STA$ that reaches a configuration $\nConfig_\STA$ with the desired property, \add{we show that from a suitably chosen $\nConfig_{\STA,0}$ we can first execute $\STArule_1$ and then $\STArule_2$. To this end,}
								first note that by the properties of $\nConfig_{\corrTA,0}$ established above we know that $\STArule_1$ can be executed from any initial configuration $\nConfig_{\STA,0}$ that has at least $\STArulelength_1$ processes.
								We want to ensure that $\nConfig_{\corrTA,1}$ corresponds to the configuration $\nConfig_{\STA,1}$ that is reached after executing $\STArule_1$ in $\STA$.
									Similar to what we explained in the first case above, after executing $\proj{\transequence}{\Trangadget{\STArule_1}}$ from $\nConfig_{\corrTA,0}$, the resulting configuration $\nConfig_{\corrTA,1}$ must contain $(\loc_{\STArule_1, i}',\mathbf{0})$ for all $1 \leq i \leq \STArulelength_1$, and it may or may not contain $(\locinit,\mathbf{0})$.
									If it does contain $(\locinit,\mathbf{0})$, then we get to a corresponding $\nConfig_{\STA,1}$ by starting with $\STArulelength_1+1$ processes, otherwise by starting with $\STArulelength_1$ processes.
									
									\medskip
									\add{To see that $\STArule_2$ can be executed from $\nConfig_{\STA,1}$, first} note that $\proj{\transequence}{\Trangadget{\STArule_2}}$ can be executed from $\nConfig_{\corrTA,1}$ in $\corrTA$.
									From this we can conclude that for every element $\loc_{\STArule_2,i}  \xrightarrow[]{\guard_{\STArule_2,i},\STAresetclocks{\STArule_2}{i},\tatranlabel_{\STArule_2,i}}  \loc'_{\STArule_2,i}$ of $\STArule_2$, we have $(\loc_{\STArule_2,i},\clockval_{\STArule_2,i}) \in \nConfig_{\corrTA,1}$ for some $\clockval_{\STArule_2,i}$ that satisfies $\guard_{\STArule_2,i}$ (actually, since no time passed, all clock valuations are $\mathbf{0}$).
									As $\nConfig_{\corrTA,1}$ corresponds to $\nConfig_{\STA,1}$, we have the same property for $\nConfig_{\STA,1}$.
									However, note that multiple processes occupying the same $(\loc_{\STArule_2,i},\clockval_{\STArule_2,i})$ might be necessary to execute $\STArule_2$, e.g., if $\loc_{\STArule_2,i}$ appears on the left-hand side of multiple elements of $\STArule$.
									By \cref{lem:team-copycat}, for any lower bound on the number of processes required in each local configuration, we can reach a global configuration $\nConfig_{\STA,2}$ that has sufficiently many processes in every local configuration that is needed.
									As a consequence, we can execute $\STArule_2$ from $\nConfig_{\STA,2}$.
									
									\medskip
									\add{Finally, }to ensure that the resulting configuration $\nConfig_\STA$ corresponds to $\nConfig_{\corrTA}$, we need to pick $\nConfig_{\STA,2}$ with the right number of processes in each local configuration by a similar argument as above.
									
														             \end{enumerate}
           \item \textbf{Induction step} ($l \Rightarrow l+1$): 
					
						Let $\stablecomputation = \nConfig_{\corrTA,0} \rightarrow^* \nConfig_{\corrTA,1} \xrightarrow[]{\delta} \nConfig_{\corrTA,2} \rightarrow^* \nConfig_{\corrTA}$ be a computation with $l+1$ non-zero delay transitions, where the last non-zero delay happens after $\nConfig_{\corrTA,1}$, and between $\nConfig_{\corrTA,2}$ and $\nConfig_{\corrTA}$ there is a (possibly empty) sequence of discrete transitions with zero delay between them.
						Note that since we assume that $\nConfig_{\corrTA}$ is stable, also $\nConfig_{\corrTA,1}$ and $\nConfig_{\corrTA,2}$ need to be stable configurations (since a non-zero delay is not possible if any location from $P$ is occupied, and if the sink location is occupied it will remain occupied forever).
						Then the timed path $\nConfig_{\corrTA,0} \rightarrow^* \nConfig_{\corrTA,1}$ has only $l$ non-zero delays and by induction hypothesis we get that in $\STA$ we can reach a configuration $\nConfig_{\STA,1}$ that corresponds to $\nConfig_{\corrTA,1}$.
						
						Note that if we can take a delay transition with delay $\delta$ from $\nConfig_{\corrTA,1}$, then we can also take it from $\nConfig_{\STA,1}$, and the resulting configuration $\nConfig_{\STA,2}$ will again correspond to $\nConfig_{\corrTA,2}$.
						What remains to be shown is that from $\nConfig_{\STA,2}$ we can reach a configuration $\nConfig_{\STA}$ that corresponds to $\nConfig_{\corrTA}$.
						This works essentially in the same way as in the base case, except that now we might need to invoke \cref{lem:team-copycat} even if only a single rule is executed (since in the base case we can freely choose how many processes should be in the initial configuration $\nConfig_{\STA,0}$, whereas here we need to prove that we can bring sufficiently many processes to the local configurations that are needed). 
        \end{enumerate}
\end{proof}

		\add{\cref{lem:corresponding-reachability1,lem:corresponding-reachability2} consider reachability of \emph{stable} configurations of $\corrnetwork{n}$ that contain given configurations $\nConfig_{\corrTA}$ of $\corrTA$.
		We now show that the reachability of stable configurations containing a location $\loc$ is equivalent to reachability of any configuration containing $\loc$.}
						
		\begin{lemma}
		\label{lem:reachability-stable}
		Let $\STA$ be an STA with set of locations $\Loc$ and $\corrTA$ its corresponding \gta.
		For any $\loc \in \Loc$, a configuration $\nConfig$ with $\loc \in \nConfig$ is reachable in $\corrTA$ if and only if a stable configuration $\nConfig'$ with $\loc \in \nConfig'$ is reachable in $\corrTA$.
		\end{lemma}
		
		\begin{proof}
		$\Leftarrow$: immediate
		
		$\Rightarrow$: Suppose an intermediate configuration $\nConfig$ with $\loc \in \nConfig$ is reachable, 
		and let $\gcomputation = \nConfig_0 \rightarrow^* \nConfig_s \rightarrow^* \nConfig$ be the computation that ends in $\nConfig$, where $\nConfig_s$ is the last stable configuration on $\gcomputation$.
		Since $\nConfig$ is not stable, there are some processes that are in locations $\locmid_{\STArule,i}$ for some $\STArule, i$, or in $\locsink$.
		For each of these processes, modify the local timed path of the process (\ie{} the projection of $\gcomputation$ onto this process) in the following way:
		Let $p$ be the location that it occupies at the end. 
		At the point in time where the process finally moves to $p$, let the process instead do anything else such that it remains in one of the locations in $\Loc$. (This is possible since we assumed the absence of timelocks in $\STA$)
		
		Then let $\gcomputation'$ be the result of replacing the local timed paths of all these processes with the modified versions.
		Note that:
		\begin{enumerate}
		\item $\gcomputation'$ is still a valid computation (as those processes that have been modified cannot be necessary for the steps of other processes after they have moved to their $p$)
		\item We did not change the local timed path of the process that occupies $q$ at the end of $\gcomputation$, therefore it also occupies $q$ at the end of $\gcomputation'$
		\item $\gcomputation'$ ends in a stable configuration
		\end{enumerate}
		\end{proof}

\cref{lem:corresponding-reachability1,lem:corresponding-reachability2,lem:reachability-stable} together imply \cref{thm:sta-equivalent}.

Note that our construction is in general not sufficient to prove language equivalence.
\cref{fig:counterexample-language} shows an STA $\STA$ with three locations and two rules: $r_1$ with $m_1=1$, \ie{} a single participating process and $r_2$ with $m_2=2$, \ie{} two participating processes.
An invariant in location $\locinit$ forces any process to take a transition after at most $1$ time unit.

However, in the corresponding \gta $\corrTA$, a process can move from $\locinit$ to $\locmid_{\STArule_2,1}$ and from there to $\locsink$, both without delay and with labels in $\Sigma^-$, \ie{} that will not appear in the trace.
Therefore, the trace $(1, (1,\sigma_0)),(1, (1,\sigma_0))$ is in $\proj{\calL(\corrTA^\infty)}{[1,2]}$ (where process $2$ moves to $\locsink$ and never uses a transition that is visible in the trace), but not in $\proj{\calL(\STA^\infty)}{[1,2]}$ (where the second process has to take a visible transition after at most $1$ time unit).

\begin{figure}
    \centering
		\vspace{-1em}
    \begin{tikzpicture}[pta, font=\footnotesize]
        \node[location, initial] at (0, 0)  (q0) {$\locinit$};
        \node[invariant, below=of q0] {$\clock \leq 1$};
        \node[location] at (2, 0) (q1) {$\loc_1$};
        \node[location] at (2, 1) (q2) {$\loc_2$};

        \path (q0) edge[loop above] node[align=center]{$r_1, \sigma_0$\\$\clock=1$\\$\clock \leftarrow 0$} (q0);
        \path (q0) edge[] node[below]{$r_2, \sigma_1$} (q1);
        \path (q0) edge[] node[above]{$r_2, \sigma_2$} (q2);
    \end{tikzpicture}
    \caption{An example of an STA $\STA$ for which $\proj{\calL(\STA^\infty)}{[1,2]} \neq \proj{\calL(\corrTA^\infty)}{[1,2]}$.}
    \label{fig:counterexample-language}
\end{figure}

\begin{definition}
    For a given configuration $\nConfig$ of $\STA^n$ and a configuration $(\loc,\clockval)$ of $\STA$, let $\numOfOcc{\nConfig}{(\loc,\clockval)}$ indicate the number of occurrences of $(\loc,\clockval)$ in $\nConfig$.
\end{definition}

\begin{lemma}
\label{lem:team-copycat} 
        \textbf{Team Copycat in network of STAs}: In a network of STAs $\STAnetwork{\networksize}$, if there is a reachable configuration $\STAConfig$ such that
        $\numOfOcc{\STAConfig}{(\loc_i,\clockval_i)}=k_i$ for $1 \leq i \leq \networksize$,
        then $\forall 1 \leq j \leq \networksize$ there exists an $\networksize' \in \Nats$ and a reachable configuration $\STAConfig'$ in $\STAnetwork{\networksize'}$ such that
        $\numOfOcc{\STAConfig'}{(\loc_i,\clockval_i)} \geq k_i$ for $1 \leq i \leq \networksize$
        and $\numOfOcc{\STAConfig'}{(\loc_j,\clockval_j)} \geq k_j +1$.
    \end{lemma}
    \begin{proof}
\begin{enumerate}
    Let $\STAcomputation$ be a computation of $\STAnetwork{n}$ that ends in $\STAConfig$.
    In order to prove the lemma, we prove the following \emph{doubling} property:
     Given a reachable configuration $\STAConfig= \big((\loc_1, \clockval_1) \ldots (\loc_{\networksize},\clockval_{\networksize})\big) $ in $\STAnetwork{\networksize}$,
     we prove the configuration
     $\STAConfig'=\big((\loc_1, \clockval_1) \ldots (\loc_{\networksize},\clockval_{\networksize}),(\loc_{\networksize+1},\clockval_{\networksize+1}) \ldots (\loc_{2 \cdot \networksize},\clockval_{2 \cdot \networksize}) \big)$,
    where $(\loc_{j},\clockval_{j})=(\loc_{j-\networksize},\clockval_{j-\networksize})$ for  $\networksize+1 \leq j \leq 2 \cdot \networksize$, is  reachable in $\STAnetwork{2 \cdot \networksize}$.
    
    Notice that this property implies there exists a reachable configuaration, namely $\STAConfig'$,
    such that  for every $(\loc,\clockval) \in \STAConfig$, $\numOfOcc{\STAConfig'}{(\loc,\clockval)} \geq \numOfOcc{\STAConfig}{(\loc,\clockval)}+1$  and therefore proving the lemma.

Now we prove the doubling property.
\sj{the rest is pretty obvious. summarize?}
    \item 

    \add{Let $l$ be the number of \emph{blocks} in~$\STAcomputation$, where a block is either a non-zero delay transition or a sequence of discrete transitions that correspond to the execution of a single rule $\STArule \in \STARules$.}
	The proof is by induction on~$l$, the number of blocks of~$\STAcomputation$.

\begin{itemize}
    \item \textbf{Base case ($l=0$)}: Consider the network of size $2 \cdot \networksize$ and the initial configuration of this new network satisfies the desired property.
    \item \textbf{Induction step} \add{($l \Rightarrow l+1$)}: 
    We consider two cases: the final block in the computation could be a delay transition or the execution of a rule $\STArule \in \STARules$.
    \begin{itemize}
        \item \textbf{Delay transition}:
        
        In this case $\STAcomputation$ is of the form $\nConfig_{\STA,0}\xrightarrow[]{}\ldots \nConfig_{\STA,pre}\xrightarrow[]{\delta}\STAConfig$. 
        From hypothesis it follows  $\nConfigdefn_{\STA,pre}'=\big((\loc_1, \clockval_1) \ldots (\loc_{\networksize},\clockval_{\networksize}),(\loc_{\networksize+1},\clockval_{\networksize+1}) \ldots (\loc_{2 \cdot \networksize},\clockval_{2 \cdot \networksize}) \big)$
        where $(\loc_{j},\clockval_{j})=(\loc_{j-\networksize},\clockval_{j-\networksize})$ for  $\networksize+1 \leq j \leq 2 \cdot \networksize$. Delaying $\delta$ from this configuration results in $\STAConfig'$ 
        which has the desired property. 

        \item \textbf{Execution of rule $\STArule \in \STARules$}:
        
        In this case $\STAcomputation$ is of the form $\nConfig_{\STA,0}\xrightarrow[]{}\ldots \nConfig_{\STA,pre} \rightarrow^* \STAConfig$, where the sequence of transitions $\nConfig_{\STA,pre} \rightarrow^* \STAConfig$ is an execution of rule $\STArule$, where $\STArule=\STAruledefn$.      
       
        From hypothesis it follows  $\nConfig_{\STA,pre}'=\big((\loc_1, \clockval_1) \ldots (\loc_{\networksize},\clockval_{\networksize}),(\loc_{\networksize+1},\clockval_{\networksize+1}) \ldots (\loc_{2 \cdot \networksize},\clockval_{2 \cdot \networksize}) \big)$
        where $(\loc_{j},\clockval_{j})=(\loc_{j-\networksize},\clockval_{j-\networksize})$ for  $\networksize+1 \leq j \leq 2 \cdot \networksize$ is reachable.

        Let the computation to $\nConfig_{\STA,pre}'$ be $\pi_{\STA,pre}' = \nConfig_{\STA,0}\xrightarrow[]{}\ldots \nConfig_{\STA,pre}'$.
        We extend $\pi_{\STA,pre}'$ with two additional blocks each of which correspond to execution of $\STArule$.
        Let the resulting computation be $\STAcomputation'= \nConfig_{\STA,0}\xrightarrow[]{}\ldots \nConfig_{\STA,pre}'\rightarrow^* \nConfig_{\STA,mid}'\rightarrow^* \STAConfig'$. More precisely, $\STAConfig'$ is obtained from $\nConfig_{\STA,pre}'$ as follows:
        \begin{itemize}
            \item We first apply the sequence of transitions in $\STArule$ on the configuration $\nConfig_{\STA,pre}'$(which is of size $2 \cdot \networksize$) during which the processes $1 \ldots \STArulelength$ participate.
            Note that this  is possible because, by assumption $\nConfig_{\STA,pre} \rightarrow^* \STAConfig$, and therefore the first $\STArulelength$ processes (w.l.o.g) satisfy the transitons of $\STArule$.
            Also the first $\STArulelength$ configurations of $\nConfig_{\STA,pre}'$ are same as those in 
            $\nConfig_{\STA,pre}$.
            \item  We then apply again the sequence of transitions in $\STArule$, without any delay in between, starting from $\nConfig_{\STA,mid}'$. Now the processes $\networksize+1, \ldots, \networksize +\STArulelength$
            participate (this is possible because  the configuration of $j$th process is same as $(j-\networksize)$th process in $\nConfig_{\STA,pre}'$ for $\networksize+1 \leq j \leq 2 \cdot \networksize$)
            to finally obtain $\STAConfig'$.
        \end{itemize}
        
         The obtained configuration $\STAConfig'$ has the desired property because in the process of obtaining $\STAConfig'$ from $\nConfig_{\STA,pre}'$ above, both the $k$th and $(k-\networksize)$th, for $\networksize \leq k \leq \networksize+\STArulelength$, processes took the same
        transition from rule $\STArule$ (without any delay in between), while the rest of the processes did not take any transition. Therefore the doubling property is proved, hence proving the lemma.
        
    \end{itemize}

\end{itemize}

\end{enumerate}

    \end{proof}

\end{document}